\documentclass[11pt]{article}


\usepackage[margin=1in]{geometry}
\usepackage{amsthm,amsmath,amsfonts,amssymb}
\usepackage[percent]{overpic}

\usepackage{colonequals}
\usepackage{braket}

\usepackage{hyperref}
\hypersetup{
    bookmarksnumbered=true, 
    unicode=false, 
    pdfstartview={FitH}, 
    pdftitle={Quantum query complexity of minor-closed graph properties}, 
    pdfauthor={Andrew M. Childs and Robin Kothari}, 
    pdfsubject={}, 
    pdfcreator={}, 
    pdfproducer={}, 
    pdfkeywords={}, 
    pdfnewwindow=true, 
    colorlinks=true, 
    linkcolor=blue, 
    citecolor=blue, 
    filecolor=blue, 
    urlcolor=blue 
}
\usepackage{hypcap}


\newtheorem{theorem}{Theorem}[section]
\newtheorem{lemma}{Lemma}[section]
\newtheorem{proposition}{Proposition}[section]
\newtheorem{corollary}{Corollary}[section]

\newcommand{\eq}[1]{\hyperref[eq:#1]{(\ref*{eq:#1})}}
\renewcommand{\sec}[1]{\hyperref[sec:#1]{Section~\ref*{sec:#1}}}
\newcommand{\thm}[1]{\hyperref[thm:#1]{Theorem~\ref*{thm:#1}}}
\newcommand{\lem}[1]{\hyperref[lem:#1]{Lemma~\ref*{lem:#1}}}
\newcommand{\prop}[1]{\hyperref[prop:#1]{Proposition~\ref*{prop:#1}}}
\newcommand{\cor}[1]{\hyperref[cor:#1]{Corollary~\ref*{cor:#1}}}
\newcommand{\fig}[1]{\hyperref[fig:#1]{Figure~\ref*{fig:#1}}}
\newcommand{\tab}[1]{\hyperref[tab:#1]{Table~\ref*{tab:#1}}}



\newcommand{\comment}[1]{}

\newcommand{\Z}{{\mathbb{Z}}}
\renewcommand{\P}{\mathcal{P}}

\newcommand{\poly}{\mathop{\mathrm{poly}}}

\newcommand{\vc}{\mathop{\mathrm{vc}}}
\renewcommand{\d}{\mathrm{d}}

\renewcommand{\th}[1]{${#1}^{\textrm{th}}$}

\newcommand{\ceil}[1]{\lceil{#1}\rceil}

\newcommand{\eqrange}[2]{(\ref{eq:#1}--\ref{eq:#2})}

\renewcommand{\(}{\left(}
\renewcommand{\)}{\right)}
\newcommand{\defeq}{\colonequals}

\begin{document}


\title{Quantum query complexity of minor-closed graph properties\thanks{This paper is an expanded version of an extended abstract that appeared in the proceedings of STACS 2011 \cite{CK11}.}}

\author{
\normalsize Andrew M.\ Childs\thanks{amchilds@uwaterloo.ca} \\[.5ex]
\small Department of Combinatorics \& Optimization \\
\small and Institute for Quantum Computing \\
\small University of Waterloo
\and
\normalsize Robin Kothari\thanks{rkothari@cs.uwaterloo.ca} \\[.5ex]
\small David R.\ Cheriton School of Computer Science \\
\small and Institute for Quantum Computing \\
\small University of Waterloo
}

\date{}
\maketitle

\renewcommand{\arraystretch}{1.6} 

\begin{abstract}
We study the quantum query complexity of minor-closed graph properties, which include such problems as determining whether an $n$-vertex graph is planar, is a forest, or does not contain a path of a given length.
We show that most minor-closed properties---those that cannot be characterized by a finite set of forbidden subgraphs---have quantum query complexity $\Theta(n^{3/2})$.  To establish this, we prove an adversary lower bound using a detailed analysis of the structure of minor-closed properties with respect to forbidden topological minors and forbidden subgraphs.
On the other hand, we show that minor-closed properties (and more generally, sparse graph properties) that can be characterized by finitely many forbidden subgraphs can be solved strictly faster, in $o(n^{3/2})$ queries.  Our algorithms are a novel application of the quantum walk search framework and give improved upper bounds for several subgraph-finding problems.
\end{abstract}

\tableofcontents
\clearpage
\section{Introduction}
\label{sec:intro}

The decision tree model is a simple model of computation for which we can prove good upper and lower bounds. Informally, decision tree complexity, also known as query complexity, counts the number of input bits that must be examined by an algorithm to evaluate a function. In this paper, we focus on the query complexity of deciding whether a graph has a given property.  The query complexity of graph properties has been studied for almost 40 years, yet old and easy-to-state conjectures regarding the deterministic and randomized query complexities of graph properties~\cite{CK07,KSS83,RV75,Ros73} remain unresolved.

The study of query complexity has also been quite fruitful for quantum algorithms. For example, Grover's search algorithm \cite{Gro97} operates in the query model, and Shor's factoring algorithm \cite{Sho97} is based on the solution of a query problem.
However, the quantum query complexity of graph properties can be harder to pin down than its classical counterparts. For monotone graph properties, a wide class of graph properties including almost all the properties considered in this paper, the widely-believed Aanderaa--Karp--Rosenberg conjecture states that the deterministic and randomized query complexities are $\Theta(n^2)$, where $n$ is the number of vertices. On the other hand, there exist monotone graph properties whose quantum query complexity is $\Theta(n)$, and others with quantum query complexity $\Theta(n^2)$. In fact, one can construct a monotone graph property with quantum query complexity $\Theta(n^{1+\alpha})$ for any fixed $0\leq \alpha \leq 1$ using known bounds for the threshold function~\cite{BBC+01}. 

The quantum query complexity of several specific graph properties has been established in prior work.  D{\"u}rr, Heiligman, H{\o}yer, and Mhalla~\cite{DHHM06} studied the query complexity of several graph problems, and showed in particular that connectivity has quantum query complexity $\Theta(n^{3/2})$. Zhang~\cite{Zha05} showed that the quantum query complexity of bipartiteness is $\Theta(n^{3/2})$. Ambainis et al.\ \cite{AIN+08} showed that planarity also has quantum query complexity $\Theta(n^{3/2})$. Berzina et al.~\cite{BDF+03} showed several quantum lower bounds on graph properties, including Hamiltonicity. Sun, Yao, and Zhang studied some non-monotone graph properties~\cite{SYZ04}.

Despite this work, the quantum query complexity of many interesting graph properties remains unresolved.  A well-studied graph property whose query complexity is unknown is the property of containing a triangle (i.e., a cycle on 3 vertices) as a subgraph.  This question was first studied by Buhrman et al.~\cite{BDH+05}, who gave an $O(n+\sqrt{nm})$ query algorithm for graphs with $n$ vertices and $m$ edges. With $m=\Theta(n^2)$, this approach uses  $O(n^{3/2})$ queries, which matches the performance of the simple algorithm that searches for a triangle over the potential $n \choose 3$ triplets of vertices. This was later improved by Magniez, Santha, and Szegedy~\cite{MSS07} to $\tilde O(n^{1.3})$, and then by Magniez, Nayak, Roland, and Santha~\cite{MNRS07} to $O(n^{1.3})$, which is currently the best known algorithm.  However, the best known lower bound for the triangle problem is only $\Omega(n)$ (by a simple reduction from the search problem).  This is partly because one of the main lower bound techniques, the quantum adversary method of Ambainis~\cite{Amb02}, cannot prove a better lower bound due to the certificate complexity barrier~\cite{SS06,Zha05}.

More generally, we can consider the $H$-subgraph containment problem, in which the task is to determine whether the input graph contains a fixed graph $H$ as a subgraph.  Magniez et al.\ also gave a general algorithm for $H$-subgraph containment using $\tilde O(n^{2-2/d})$ queries, where $d>3$ is the number of vertices in $H$ \cite{MSS07}.  Again, the best lower bound known for $H$-subgraph containment is only $\Omega(n)$.

In this paper we study the quantum query complexity of minor-closed graph properties. A property is minor closed if all minors of a graph possessing the property also possess the property. (Graph minors are defined in \sec{prelim}.) Since minor-closed properties can be characterized by forbidden minors, this can be viewed as a variant of subgraph containment in which we look for a given graph as a minor instead of as a subgraph. The canonical example of a minor-closed property is the property of being planar. Other examples include the property of being a forest, being embeddable on a fixed two-dimensional manifold, having treewidth at most $k$, and not containing a path of a given length.

While any minor-closed property can be described by a finite set of forbidden minors, some minor-closed properties can also be described by a finite set of forbidden subgraphs, graphs that do not appear as a subgraph of any graph possessing the property. We call a graph property (which need not be minor closed) a \emph{forbidden subgraph property} (FSP) if it can be described by a finite set of forbidden subgraphs. Our main result is that the quantum query complexity of minor-closed properties depends crucially on whether the property is FSP. We show that any nontrivial minor-closed property that is not FSP has query complexity $\Theta(n^{3/2})$, whereas any minor-closed property that is FSP can be decided using $O(n^{\alpha})$ queries for some $\alpha < 3/2$, and in particular the query complexity is $o(n^{3/2})$.  In general, the value of $\alpha$ may depend on the property; we give upper bounds on $\alpha$ that depend on the sizes of certain vertex covers.

\fig{venn} summarizes our understanding of the quantum query complexity of minor-closed graph properties.  All subgraph-closed properties, which include minor-closed properties and FSPs, have an easy lower bound of of $\Omega(n)$ (\thm{subgraphlb}).  Furthermore, all sparse graph properties, which are defined in \sec{prelim} and which include all minor-closed properties, have an easy upper bound of $O(n^{3/2})$ (\thm{sparse}).  On the lower bound side, our main contribution is to show that minor-closed properties that are not FSP require $\Omega(n^{3/2})$ queries (\thm{mclb}), which tightly characterizes their quantum query complexity.  Regarding upper bounds, our main contribution is a quantum algorithm for all sparse graph properties that are FSP using $o(n^{3/2})$ queries (\cor{sparsefsp}).

\begin{figure}
\capstart
\centering
\begin{overpic}[scale=0.68]{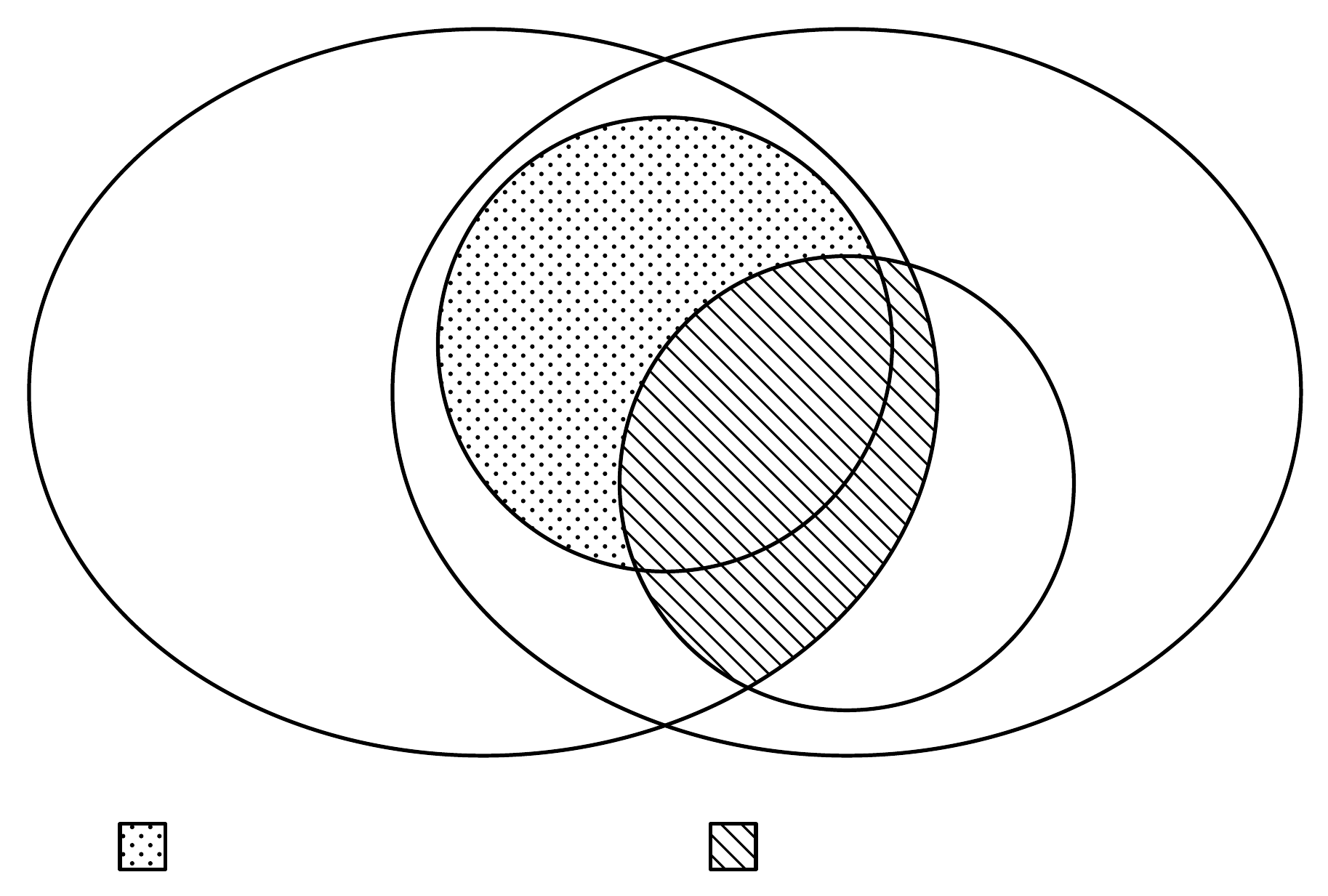} 
\put(16,52){Sparse}
\put(6,45.5){\makebox(20,4)[b]{$O(n^{3/2})$}}
\put(6,41){\makebox(20,4){(\thm{sparse})}}
\put(68,52){Subgraph closed}
\put(77,45.5){\makebox(20,4)[b]{$\Omega(n)$}}
\put(77,41){\makebox(20,4){(\thm{subgraphlb})}}
\put(41,52){\colorbox{white}{\textcolor{black}{Minor closed}}}
\put(70,25){FSP}
\put(15,2.9){$\Theta(n^{3/2})$ ~ (\cor{tightmc})}
\put(60,2.9){$o(n^{3/2})$ ~ (\cor{sparsefsp})}
\end{overpic}
\caption{Summary of the main results.
\label{fig:venn}}
\end{figure}

Our lower bounds (\sec{LB}) use the quantum adversary method~\cite{Amb02}.  The basic idea of the main lower bound is similar to the connectivity lower bound of D{\"u}rr et al.~\cite{DHHM06}.  However, it is nontrivial to show that this approach works using only the hypothesis that the property is minor closed and not FSP.  In fact, we show a slightly stronger result, assuming only that the property is not FSP and can be described by finitely many forbidden topological minors.

Our upper bounds (\sec{algo}) use the quantum walk search formalism~\cite{MNRS07}. Our approach differs from previous applications of this formalism in several respects. First, the graph underlying our quantum walk is a Hamming graph, rather than the Johnson graph used in previous algorithms.
(This difference is not essential, but it simplifies the algorithm, and a similar approach may prove useful in pedagogical treatments of other quantum walk search algorithms.)
Second, our algorithms involve several quantum walks occurring simultaneously on different Hamming graphs; although this can be viewed as a single walk on a larger graph, the salient feature is that the walks on different graphs proceed at different speeds, i.e., in each time step a different number of steps are taken on each graph. Third, our quantum walk algorithm makes essential use of the sparsity of the input graph, and to do so the algorithm must make queries even to determine which vertices of the input graph to search over (namely, to find vertices of a given degree).

The fact that our quantum walk can exploit sparsity allows us to improve upon known algorithms for many sparse graph properties, even if they are not necessarily minor closed.  In particular, we solve the $H$-subgraph containment problem with fewer queries than previous approaches for certain subgraphs $H$.  For example, we give improved algorithms for finding paths of a given length, as well as an algorithm that outperforms the general algorithm of Magniez et al.~\cite{MSS07} whenever $H$ is a bipartite graph.

Finally, as another application, we consider the $C_4$-subgraph containment problem. This can be viewed as a natural extension of the triangle problem, which is $C_3$-subgraph containment. Surprisingly, we show that $C_4$ finding can be solved with only $\tilde O(n^{1.25})$ queries, even faster than the best known  upper bound for triangle finding, which is $O(n^{1.3})$.

\section{Preliminaries}
\label{sec:prelim}

In this paper, all graphs are simple and undirected. Thus a graph on $n$ vertices is specified by $n \choose 2$ bits. In the query complexity model, the input graph is accessed by querying a black box to learn any of these $n \choose 2$ bits. Deterministic and randomized algorithms have access to a black box taking two inputs, $u$ and $v$, and returning a bit indicating whether $(u,v)$ is an edge in the graph. To accommodate quantum algorithms, we define a quantum black box in the standard way. The quantum black box is a unitary gate that maps $\ket{u,v,b}$ to $\ket{u,v,b\oplus e}$ where $(u,v)\in V\times V$, $b$ is a bit, and $e$ is 1 if and only if $(u,v)\in E$.

Let the deterministic, randomized, and quantum query complexities of determining whether a graph possesses property $\P$ be denoted as $D(\P)$, $R(\P)$, and $Q(\P)$, respectively, where for $R$ and $Q$ we consider two-sided bounded error. Clearly, $Q(\P) \leq R(\P) \leq D(\P) \leq {n \choose 2}$. Also note that these query complexities are the same for a property $\P$ and its complement $\bar\P$, since any algorithm for $\P$ can be turned into an algorithm for $\bar\P$ by negating the output, using no additional queries.

A graph property on $n$ vertices is a property of $n$-vertex graphs that is independent of vertex labeling, i.e., isomorphic graphs are considered equivalent. For a graph $G$ on $n$ vertices and an $n$-vertex graph property $\P_n$, we write $G \in \P_n$ to mean that graph $G$ has property $\P_n$. A graph property $\P \defeq \{\P_n\}_{n=1}^{\infty}$ is a collection of $n$-vertex graph properties $\P_n$ for all $n\in N$. For example, the property ``the first vertex is isolated'' is not a graph property because it depends on the labeling, and in particular it depends on which vertex we decide to call the first one. However, the property ``contains an isolated vertex'' is a graph property.

An $n$-vertex graph property $\P_n$ is nontrivial if there exists a graph that possesses it and one that does not.  A graph property $\P = \{\P_n\}_{n=1}^{\infty}$ is nontrivial if there exists an $n_0$ such that $\P_n$ is nontrivial for all $n>n_0$. Thus a property such as ``contains a clique of size 5'' is nontrivial, although it is trivial for graphs with fewer than 5 vertices.

In this paper, $K_n$ and $C_n$ refer to the complete graph and cycle on $n$ vertices, respectively. $K_{s,t}$ is the complete bipartite graph with $s$ vertices in one part and $t$ vertices in the other. A $k$-path is a path with $k$ edges (i.e., with $k+1$ vertices). The claw graph is $K_{1,3}$. Subdivisions of the claw graph can be described by specifying the number of edges in each branch of the claw: a $\{d_1,d_2,d_3\}$-claw is the subdivision of the claw with branches of $d_1$, $d_2$, and $d_3$ edges. The claw graph itself is a $\{1,1,1\}$-claw. For a graph $G$, $V(G)$ and $E(G)$ denote the vertex and edge sets of the graph; $n \defeq |V(G)|$ and $m \defeq |E(G)|$.

A graph $H$ is said to be a subgraph of $G$, denoted $H \leq_S G$, if $H$ can be obtained from $G$ by deleting edges and isolated vertices. A graph $H$ is said to be a minor of $G$, denoted $H \leq_M G$, if $H$ can be obtained from $G$ by deleting edges, deleting isolated vertices, and contracting edges. To contract an edge $(u,v)$, we delete the vertices $u$ and $v$ (and all associated edges) and create a new vertex that is adjacent to all the original neighbors of $u$ and $v$. The name ``edge contraction'' comes from viewing this operation as shrinking the edge $(u,v)$ to a point, letting the vertices $u$ and $v$ coalesce to form a single vertex.

Another way to understand graph minors is to consider reverse operations: $H \leq_M G$ if $G$ can be obtained from $H$ by adding isolated vertices, adding edges, and performing vertex splits. In a vertex split, we delete a vertex $u$ and add two new adjacent vertices $v$ and $w$, such that each original neighbor of $u$ becomes a neighbor of either $v$ or $w$, or both. In general, this operation does not lead to a unique graph, since there may be many different ways to split a vertex. 

A related operation, which is a special case of a vertex split, is known as an elementary subdivision. This operation replaces an edge $(u,v)$ with two edges $(u,w)$ and $(w,v)$, where $w$ is a new vertex.  
A graph $H$ is said to be a topological minor of $G$, denoted $H \leq_T G$, if $G$ can be obtained from $H$ by adding edges, adding isolated vertices, and performing elementary subdivisions. We call $G$ a subdivision of $H$ if it is obtained from $H$ by performing any number of elementary subdivisions.

Some graph properties can be expressed using a forbidden graph characterization.  Such a characterization says that graphs have the property if and only if they do not contain any of some set of forbidden graphs according to some notion of graph inclusion, such as subgraphs or minors. For example, a graph is a forest if and only if it contains no cycle as a subgraph, so forests are characterized by the forbidden subgraph set $\{C_k\colon k\geq 3,\, k\in\mathbb{N}\}$.  The property of being a forest can also be characterized by the single forbidden minor $C_3$, since a graph is a forest if and only if it does not contain $C_3$ as a minor. If a property can be expressed using a finite number of forbidden subgraphs, we call it a forbidden subgraph property (FSP). A property is said to be subgraph closed if every subgraph of a graph possessing the property also possesses the property. Similarly, a property is said to be minor closed if all minors of a graph possessing the property also possess the property. In a series of 20 papers spanning over 20 years, Robertson and Seymour proved the following theorem~\cite{RS04}:

\begin{theorem}[Graph minor theorem]
\label{thm:graphminor}
Every minor-closed graph property can be described by a finite set of forbidden minors.
\end{theorem}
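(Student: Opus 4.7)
The plan is to reduce the statement to the assertion that the minor order $\leq_M$ is a well-quasi-order (WQO) on finite graphs, meaning that every infinite sequence $G_1,G_2,\ldots$ of finite graphs contains indices $i<j$ with $G_i \leq_M G_j$. Given WQO, let $\P$ be any minor-closed property and let $\mathcal{F}$ be its set of minor-minimal forbidden graphs (those $H\notin\P$ such that every proper minor of $H$ lies in $\P$). Then $\mathcal{F}$ is an antichain under $\leq_M$, so WQO forces $\mathcal{F}$ to be finite; and by minor-closedness, $G \in \P$ iff $G$ has no element of $\mathcal{F}$ as a minor, yielding the desired finite forbidden-minor characterization.

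To prove that $\leq_M$ is a WQO, I would follow the Robertson--Seymour strategy and argue in stages. The base case is Kruskal's tree theorem, which asserts that finite trees are WQO under the topological minor relation. This extends to graphs of bounded tree-width by applying Kruskal to tree decompositions with bounded-size bags, since the bags then range over a WQO label set. A substantially deeper step is to prove WQO for graphs embeddable in a fixed surface, which uses topological surgery (cutting along short noncontractible curves) to reduce to pieces of bounded tree-width, and then invokes the bounded-tree-width result.

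The central ingredient, and by far the hardest, is the Graph Structure Theorem: for every fixed graph $H$ there is a constant $k=k(H)$ such that every $H$-minor-free graph admits a tree decomposition whose torsos are ``nearly embeddable'' in a surface of Euler genus at most $k$---obtained from an embedded graph by adjoining at most $k$ apex vertices and attaching at most $k$ vortices of bounded depth. Combining this structural decomposition with WQO for surface-embeddable graphs and for bounded tree-width (applied to the pieces of the decomposition, together with Kruskal's theorem for the decomposition tree itself) yields WQO on the class of $H$-minor-free graphs, for every fixed $H$.

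The final step is a short reduction from the $H$-minor-free case to the general one. Suppose for contradiction that $G_1,G_2,\ldots$ is a bad sequence in all finite graphs (no $i<j$ with $G_i \leq_M G_j$). Then for every $j>1$ we have $G_1 \not\leq_M G_j$, so the tail $G_2,G_3,\ldots$ lies entirely in the class of $G_1$-minor-free graphs; but that class is WQO, forcing some $i<j$ (both $\geq 2$) with $G_i \leq_M G_j$, a contradiction. The main obstacle, by an enormous margin, is the Graph Structure Theorem itself: its proof spans most of the twenty-paper Robertson--Seymour series and uses deep technology on tangles, linkages, and surface embeddings that cannot be reconstructed from scratch here. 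The only honest plan is to invoke it as a black box, which is what the paper does by citing \cite{RS04}.
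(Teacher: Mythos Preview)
Your outline is a faithful high-level sketch of the Robertson--Seymour argument, and your reduction from WQO to the finiteness of the minimal forbidden set is correct. However, the paper does not attempt a proof of this theorem at all: it simply states the result and cites \cite{RS04}, noting that the proof spans twenty papers. So there is nothing to compare against beyond the citation itself, and your proposal already lands in the same place by acknowledging that the Graph Structure Theorem must be invoked as a black box.
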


We also require the following consequence of the graph minor theorem, which follows using well-known facts about topological minors~\cite[Theorem 2.1]{RS90}.

\begin{corollary}
\label{cor:topminor}
Every minor-closed graph property can be described by a finite set of forbidden topological minors.
\end{corollary}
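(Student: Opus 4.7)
The plan is to combine \thm{graphminor} with a classical conversion lemma that replaces each forbidden minor by finitely many forbidden topological minors. Given a minor-closed property $\P$, \thm{graphminor} provides a finite family of forbidden minors $\{H_1,\dots,H_k\}$. It therefore suffices to prove the following: for every graph $H$ there is a finite set $T(H)$ of graphs such that $H \leq_M G$ iff $H' \leq_T G$ for some $H' \in T(H)$. Taking $T \defeq \bigcup_{i=1}^k T(H_i)$, which is a finite union of finite sets, then yields a finite forbidden topological minor characterization of $\P$: namely, $G \in \P$ iff no $H' \in T$ satisfies $H' \leq_T G$.

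To prove the lemma, I would use the branch-set formulation of minors: $H \leq_M G$ iff there are pairwise disjoint connected subgraphs $\{B_v\}_{v \in V(H)}$ of $G$ and, for each $uv \in E(H)$, an edge $e_{uv}$ of $G$ joining $B_u$ to $B_v$. Choosing each $B_v$ minimally, I may replace it by its Steiner tree on the $\deg_H(v)$ endpoints of the edges $\{e_{uv}\}_{uv \in E(H)}$, and then suppress each internal vertex of degree $2$. The result is a tree $\tau_v$ with exactly $\deg_H(v)$ leaves, every internal vertex of degree $\geq 3$, and edges corresponding to internally vertex-disjoint paths in $G$. Replacing each vertex $v$ of $H$ by $\tau_v$, with the edges of $H$ incident to $v$ attached to distinct leaves of $\tau_v$, produces a graph $H'$ whose subdivision is a subgraph of $G$, i.e., $H' \leq_T G$. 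Conversely, if any such inflation $H'$ is a topological minor of $G$, contracting each $\tau_v$ back to a single vertex exhibits $H$ as a minor of $G$.

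Define $T(H)$ to consist of all graphs obtainable from $H$ by inflating each vertex $v$ into a tree with $\deg_H(v)$ leaves and all internal vertices of degree $\geq 3$. A tree with $\ell$ leaves and minimum internal degree $3$ has at most $\ell - 2$ internal vertices, so only finitely many such trees exist up to isomorphism. Since $V(H)$ is finite and the degrees in $H$ are bounded, $T(H)$ is finite, which completes the argument. The main subtlety is verifying that the Steiner-tree reduction and degree-$2$ suppression genuinely yield a topological embedding of a member of $T(H)$, so that the correspondence between minor embeddings of $H$ and topological embeddings of elements of $T(H)$ is exact; this is essentially Theorem 2.1 of \cite{RS90}, which I would quote rather than redevelop in detail.
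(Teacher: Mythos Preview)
Your proposal is correct and follows exactly the route the paper indicates: apply \thm{graphminor} to obtain a finite set of forbidden minors, then invoke the standard conversion (Theorem~2.1 of \cite{RS90}) that replaces each forbidden minor $H$ by a finite family $T(H)$ of forbidden topological minors via the branch-set/inflation construction you describe. The paper itself does not spell out the proof of the conversion lemma but simply cites \cite{RS90}, so your sketch is, if anything, more detailed than what appears there.
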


We call a graph property \emph{sparse} if there exists a constant $c$ such that every graph $G$ with the property has $|E(G)| \leq c\,|V(G)|$. Nontrivial minor-closed properties are sparse, which is an easy corollary of Mader's theorem~\cite{Mad67}.

\begin{theorem}
\label{thm:mcsparse}
Every nontrivial minor-closed graph property is sparse.
\end{theorem}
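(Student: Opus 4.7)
The plan is a direct application of the graph minor theorem together with Mader's theorem. Let $\mathcal{P}$ be a nontrivial minor-closed property. First I would invoke \thm{graphminor} to obtain a finite forbidden-minor characterization $\{H_1,\dots,H_k\}$ of $\mathcal{P}$. The nontriviality hypothesis forces $k \geq 1$: if the forbidden set were empty then \emph{every} graph would satisfy $\mathcal{P}$, contradicting the existence of large graphs that lack the property.

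Next I would fix any one forbidden minor, say $H_1$, and let $p \defeq |V(H_1)|$. Since $H_1$ is a minor of $K_p$, any graph containing $K_p$ as a minor also contains $H_1$ as a minor. Mader's theorem \cite{Mad67} now gives a constant $c = c(p)$ such that every graph $G$ with $|E(G)| \geq c\,|V(G)|$ contains $K_p$ as a minor, and hence contains $H_1$ as a minor. Therefore every $G \in \mathcal{P}$, which by definition avoids $H_1$ as a minor, must satisfy $|E(G)| < c\,|V(G)|$. This is exactly the sparsity condition with constant $c$.

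There is really no obstacle beyond carefully stating which form of Mader's theorem to cite; the only subtlety is the reduction from ``forbids some $H$'' to ``forbids some $K_p$,'' which is immediate because the forbidden set is finite and every $H_i$ is contained in a sufficiently large complete graph as a minor. The argument does not even require the full strength of \thm{graphminor}: a single forbidden minor suffices, and one could equivalently start from \cor{topminor} and use Mader's theorem on topological minors.
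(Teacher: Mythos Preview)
Your proof is correct and follows essentially the same approach as the paper: reduce to an excluded complete graph and invoke Mader's theorem. The only difference is that you route through \thm{graphminor} to produce a forbidden minor $H_1$ and then pass to $K_{|V(H_1)|}$, whereas the paper observes more directly that nontriviality plus minor-closedness already excludes some $K_h$ (if $G \notin \P$ then $K_{|V(G)|} \notin \P$, since $G \leq_M K_{|V(G)|}$). You even note at the end that the full strength of the graph minor theorem is unnecessary, so there is no real disagreement---your detour through Robertson--Seymour is just a slightly heavier way to reach the same excluded $K_h$.
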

\comment{
\begin{proof}
This theorem is a simple corollary of Mader's theorem~\cite{Mad67}, which states that any graph $G$ that does not contain $K_h$ as a minor must have  $|E(G)| \leq f(h) \, |V(G)|$. 

Since the family is nontrivial, there is some complete graph $K_h$ that is not contained in this family. Since the family is minor closed, no graph which contains $K_h$ as a minor is in this family either. Thus this set of graphs is a subset of the family of graphs which do not contain $K_h$ as a minor, and therefore we have $|E(G)| \leq f(h) |V(G)|$ for all graphs $G$ in this nontrivial minor-closed family.
\end{proof}
}

We use $\tilde O$ notation to denote asymptotic upper bounds that neglect logarithmic factors. Specifically, $f(n)=\tilde O(g(n))$ means $f(n)=O(g(n) \log^k g(n))$ for some constant $k$.

\section{Lower bounds}
\label{sec:LB}

In this section, we prove lower bounds on the quantum query complexity of graph properties.  We first show a simple $\Omega(n)$ lower bound for all subgraph-closed properties.  With the exception of \thm{sparse}, this covers all the properties considered in this paper, since every property considered (or its complement) is subgraph closed.  We then describe an $\Omega(n^{3/2})$ lower bound for the problem of determining whether a graph contains a cycle and explain how a similar strategy can be applied to any graph property that is closed under topological minors, provided we can identify a suitable edge of a forbidden topological minor.  Next we introduce a tool used in our general lower bounds, namely a graph invariant that is monotone with respect to topological minors.  With this tool in hand, we show an $\Omega(n^{3/2})$ lower bound for any $H$-topological minor containment problem that does not coincide with $H$-subgraph containment.  We conclude by showing the main result of this section, that every nontrivial minor-closed property $\P$ that is not FSP has $Q(\P)=\Omega(n^{3/2})$. 

The quantum lower bounds in this paper use the quantum adversary method of Ambainis~\cite{Amb02}.

\begin{theorem}[Ambainis] 
\label{thm:Ambainis}
Let $f(x_1, \ldots, x_n)$ be a function of $n$ bits and let $X, Y$ be two sets of inputs such that $f(x)\neq f(y)$ if $x\in X$ and $y\in Y$.
Let $R\subseteq X \times Y$ be a relation. Let the values $m$, $m'$, $l^{}_{x,i}$, $l'_{y,i}$ for $x \in X$, $y \in Y$ and $i \in \{1, \ldots, n\}$ be such that the following hold.
\begin{enumerate}
\item
For every $x\in X$, there are at least $m$ different $y\in Y$ such that $(x, y)\in R$.
\item
For every $y\in Y$, there are at least $m'$ different $x\in X$ such that $(x,y)\in R$.
\item
For every $x\in X$ and $i\in\{1, \ldots, n\}$, there are at most $l^{}_{x,i}$ different $y\in Y$ such that $(x, y)\in R$ and $x_i\neq y_i$.
\item
For every $y\in Y$ and $i\in\{1, \ldots, n\}$, there are at most $l'_{y,i}$ different $x\in X$ such that $(x, y)\in R$ and $x_i\neq y_i$.
\end{enumerate}
Let $l_{\max}$ be the maximum of $l^{}_{x, i}\,l'_{y, i}$ over all $(x, y)\in R$ and $i\in\{1, \ldots, n\}$ such that $x_i\neq y_i$.
Then any quantum algorithm computing $f$ with probability at least 2/3 requires $\Omega\(\sqrt{\frac{m m'}{l_{\max}}}\)$ queries. 
\end{theorem}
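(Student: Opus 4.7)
The plan is to follow Ambainis's original progress-function argument. Fix a $T$-query quantum algorithm that computes $f$ with two-sided error at most $1/3$, and for each input $z$ let $\ket{\psi^z_t}$ denote its state after $t$ queries; between queries the algorithm applies an input-independent unitary, while the oracle acts by $\ket{i,b,w} \mapsto \ket{i, b\oplus z_i, w}$. Define the progress measure
$$W(t) \defeq \sum_{(x,y) \in R} \braket{\psi^x_t | \psi^y_t}.$$
I will lower-bound $|W(0) - W(T)|$, upper-bound the per-query change $|W(t+1) - W(t)|$, and divide.

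The initial value is immediate: all $\ket{\psi^z_0}$ coincide, so $W(0) = |R|$. For the final value, since $f(x) \neq f(y)$ for every $(x,y) \in R$ and the algorithm errs with probability at most $1/3$, the standard relation between bounded-error discrimination and pure-state overlap gives $|\braket{\psi^x_T | \psi^y_T}| \leq 2\sqrt{(1/3)(2/3)}$, so $|W(T)| \leq c\,|R|$ for a constant $c < 1$, and hence $|W(T) - W(0)| \geq (1-c)|R|$.

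For the per-query bound, input-independent unitaries preserve all inner products, so only oracle calls change $W$. Writing $\Pi_i$ for the projector onto basis states whose query register equals $i$, and observing that $O_x$ and $O_y$ act identically on basis states with $x_i = y_i$, a short calculation gives
$$|W(t+1) - W(t)| \leq 2 \sum_{i=1}^{n} \sum_{\substack{(x,y) \in R \\ x_i \neq y_i}} \bigl\|\Pi_i \ket{\psi^x_t}\bigr\| \cdot \bigl\|\Pi_i \ket{\psi^y_t}\bigr\|.$$
Apply weighted AM--GM, $ab \leq \tfrac12(a^2 r + b^2/r)$, to each product using the asymmetric weight $r_{x,y,i} \defeq \sqrt{l'_{y,i}\, m/(l_{x,i}\, m')}$. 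This separates the sum into an $x$-part and a $y$-part. In the $x$-part, fix $x$ and $i$ and note that the inner sum over $y$ has at most $l_{x,i}$ terms by hypothesis~3, each carrying a weight $r_{x,y,i} \leq \sqrt{l_{\max} m/m'}/l_{x,i}$ because $l_{x,i} l'_{y,i} \leq l_{\max}$; the inner sum therefore collapses to $\sqrt{l_{\max} m/m'}$, independent of $x$ and $i$. Then $\sum_i \|\Pi_i \ket{\psi^x_t}\|^2 \leq 1$ eliminates the sum over $i$, and hypothesis~1 (which gives $|X| \leq |R|/m$) eliminates the sum over $x$, leaving a contribution of $|R|\sqrt{l_{\max}/(mm')}$. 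The $y$-part is handled symmetrically using hypotheses~2 and~4, and the two totals combine to give $|W(t+1) - W(t)| \leq 2|R|\sqrt{l_{\max}/(mm')}$.

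Assembling the three bounds, $2T|R|\sqrt{l_{\max}/(mm')} \geq (1-c)|R|$, which yields $T = \Omega(\sqrt{mm'/l_{\max}})$. The delicate step is the choice of the asymmetric weight $r_{x,y,i}$: a symmetric choice would produce a bound involving $\max_x l_{x,i} \cdot \max_y l'_{y,i}$, strictly weaker than the pair-wise $l_{\max}$ stated in the theorem. The factor $\sqrt{m/m'}$ in $r_{x,y,i}$ is precisely what balances $|X|$ and $|Y|$ against $|R|$ through the degree conditions~1 and~2, and getting the exponents to match is where the hypotheses on $m, m'$ and on $l_{x,i}, l'_{y,i}$ are all used simultaneously.
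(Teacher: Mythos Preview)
The paper does not prove this theorem; it is quoted as a known result of Ambainis \cite{Amb02} and used as a black box throughout Section~3. Your write-up is a faithful reconstruction of Ambainis's original progress-function argument: the measure $W(t)=\sum_{(x,y)\in R}\braket{\psi_t^x|\psi_t^y}$, the initial and final bounds, and the per-query estimate via a weighted AM--GM with the asymmetric weight $r_{x,y,i}=\sqrt{l'_{y,i}m/(l_{x,i}m')}$ are exactly the ingredients of the standard proof, and your accounting of how hypotheses 1--4 enter is correct. There is nothing in the paper to compare against beyond the citation.
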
  

For the classical lower bound in the next section, we use the following theorem of Aaronson~\cite{Aar06}.

\begin{theorem}[Aaronson] 
\label{thm:Aaronson}
Let $f,X, Y,R,m,m',l^{}_{x, i},l'_{y, i}$ be as in \thm{Ambainis}. Let $v$ be the maximum of $\min\{l^{}_{x, i}/m,l'_{y, i}/m'\}$ over all $(x, y)\in R$ and $i\in\{1, \ldots, n\}$ such that $x_i\neq y_i$. Then any randomized algorithm computing $f$ with probability at least 2/3 requires $\Omega(1/v)$ queries.
\end{theorem}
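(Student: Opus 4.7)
The plan is to follow Aaronson's original argument from~\cite{Aar06}, which is an adversary-style argument for randomized algorithms. First, I would invoke Yao's minimax principle to reduce the question to proving a lower bound on the worst-case query complexity of deterministic algorithms against a suitably chosen distribution on inputs. A natural choice is the distribution that places weight proportional to $1/|R(x)|$ on each $x \in X$ and $1/|R^{-1}(y)|$ on each $y \in Y$, and puts equal total mass on the two sides; this balancing is what allows the eventual bound to involve $\min\{l_{x,i}/m,\, l'_{y,i}/m'\}$ rather than only the easier $\max$.

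Next, consider a deterministic algorithm $A$ making at most $T$ queries per input and correctly computing $f$. The key observation is that whenever $A$ correctly distinguishes a pair $(x,y)\in R$, the transcripts of $A$ on $x$ and on $y$ must diverge at some first queried index $i(x,y)$, and at this index both $x_{i(x,y)}\neq y_{i(x,y)}$ and $i(x,y)$ is queried on each of $x$ and $y$. Double-counting through these witnesses from the $X$-side alone gives $m \leq \sum_{i \text{ queried on } x} l_{x,i} \leq T\cdot\max_i l_{x,i}$, and hence $T = \Omega(m/\max_i l_{x,i})$; the analogous argument from the $Y$-side gives $T = \Omega(m'/\max_i l'_{y,i})$. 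These easy one-sided bounds yield only the $\max$-version of the statement.

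The main obstacle, and the heart of Aaronson's argument, is promoting these to the $\min$-version. To do so I would introduce a potential function $\Phi_t$ that tracks a weighted count of pairs in $R$ not yet distinguished after $t$ queries, with weights $w(x,y)$ tuned so that answering any query at index $i$ can decrease $\Phi_t$ by at most a quantity controlled by both $l_{x,i}/m$ and $l'_{y,i}/m'$ simultaneously. Since $\Phi_0$ is $\Theta(1)$ and $\Phi_T$ must be $o(1)$ for $A$ to succeed on a large fraction of $R$, rearranging the per-step bound yields $T = \Omega(1/v)$ with $v$ as in the theorem statement. The delicate step is designing $w$ so that the two-sided constraints \emph{both} enter the per-query decrement, producing the $\min$ rather than a product or $\max$; this is where the choice of the distribution in step one is essential.

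Finally, I would complete the bounded-error reduction: any two-sided $2/3$-error algorithm must, under the chosen distribution, correctly distinguish at least a constant fraction of pairs in $R$, so replacing ``all pairs distinguished'' by ``a constant fraction distinguished'' in the potential argument costs only a constant factor in the final bound.
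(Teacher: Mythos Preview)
The paper does not prove this theorem at all: it is stated as a black-box result imported from Aaronson's work~\cite{Aar06}, exactly as \thm{Ambainis} is imported from~\cite{Amb02}. There is therefore no ``paper's own proof'' to compare your proposal against.

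That said, your sketch is a reasonable outline of how Aaronson's relational adversary argument goes: Yao's principle to pass to deterministic algorithms against a hard distribution, a weighted potential over pairs in $R$, and a per-query decrement bound that yields the $\min$ form. One caution on your description of the ``main obstacle'': the one-sided bounds you write, $m \leq \sum_{i \text{ queried on } x} l_{x,i}$, are not quite right as stated, since different $y$ related to $x$ may be separated at different query positions and the same index may witness many pairs; the inequality you want is that the total weight of pairs through $x$ separated at step $t$ is at most (weight of $x$) times $l_{x,i_t}$, and the potential argument aggregates these correctly. Also, the distribution you describe (weight $1/|R(x)|$ on $x$, etc.) is one natural choice, but Aaronson's actual argument works directly with the relation and weights on pairs rather than fixing a single input distribution up front; the $\min$ arises because for each separated pair one may charge the decrement to whichever side gives the smaller ratio. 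These are refinements rather than fatal gaps, and in any case the paper expects you simply to cite the result.
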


\subsection{Subgraph-closed properties}

We begin with the $\Omega(n)$ lower bound for all nontrivial subgraph-closed graph properties. 

\begin{theorem}
\label{thm:subgraphlb}
For any nontrivial subgraph-closed graph property $\P$, $Q(\P)=\Omega(n)$, $R(\P)=\Theta(n^2)$, and $D(\P)=\Theta(n^2)$.
\end{theorem}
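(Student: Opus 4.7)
The plan is to apply the adversary bounds of \thm{Ambainis} and \thm{Aaronson} to a pair $(X, Y)$ where $X$ contains a single edge-maximal graph $G^*$ in $\P_n$ and $Y$ contains all its single-edge extensions. By nontriviality, $\P_n$ is nonempty for $n > n_0$, so we may choose $G^* \in \P_n$ with the maximum number of edges; then $G^* + e \notin \P_n$ for every non-edge $e$. Nontriviality also guarantees the existence of some $F \notin \P$; choosing such an $F$ with the fewest vertices, $k \defeq |V(F)|$ is a constant independent of $n$. A preliminary observation I will need is that $F$ has at least one edge: otherwise $F$ would be edgeless, hence a subgraph of every graph on at least $k$ vertices, forcing $\P_n = \emptyset$ for $n \geq k$ and contradicting nontriviality.

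The key combinatorial step is that $G^*$ has $\Omega(n^2)$ non-edges. Since $F \notin \P$ and $\P$ is subgraph closed, $G^*$ is $F$-free; because $F \leq_S K_k$, transitivity of the subgraph relation then makes $G^*$ also $K_k$-free, and Tur\'an's theorem gives $|E(G^*)| \leq (1 - 1/(k-1))\,n^2/2$. Therefore the number of non-edges of $G^*$ is at least $\binom{n}{2} - (1 - 1/(k-1))\,n^2/2 = \Omega(n^2)$, and each such non-edge, when added, produces a graph outside $\P_n$ by maximality of $G^*$.

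Now set $X \defeq \{G^*\}$, $Y \defeq \{G^* + e : e \text{ is a non-edge of } G^*\}$, and $R \defeq X \times Y$. Each $y \in Y$ differs from $G^*$ in exactly one bit, so $m = |Y| = \Omega(n^2)$, $m' = 1$, and $l^{}_{x,i}, l'_{y,i} \in \{0, 1\}$ for every $(x, y) \in R$ and every slot $i$; in particular $l_{\max} = 1$. Hence \thm{Ambainis} gives $Q(\P) = \Omega(\sqrt{m m'/l_{\max}}) = \Omega(n)$, and \thm{Aaronson} gives $v = 1/|Y|$, so $R(\P) = \Omega(1/v) = \Omega(n^2)$. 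Combined with the trivial upper bound $D(\P) \leq \binom{n}{2}$, we conclude $R(\P) = D(\P) = \Theta(n^2)$.

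The only step that requires any real care is the selection of a constant-size forbidden graph $F$ with at least one edge, which depends essentially on the hypothesis that $\P_n$ is nontrivial for all sufficiently large $n$; once such an $F$ is in hand, Tur\'an's theorem immediately delivers $\Omega(n^2)$ flipping non-edges, and the single-flip structure of $(X, Y)$ forces $l_{\max} = 1$, making the adversary computations routine.
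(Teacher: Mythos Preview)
Your proof is correct and essentially matches the alternative argument the paper sketches in the paragraph immediately after its main proof, though you carry it out via the adversary theorems directly rather than as a reduction from unstructured search. The paper's primary proof uses a different pair $(X,Y)$: $X$ is the single empty graph and $Y$ is the set of all placements of $K_d$ together with $n-d$ isolated vertices, where $d$ is chosen so that $K_d \notin \P$; here yes- and no-instances differ in $\binom{d}{2}$ bits, giving $m=\binom{n}{d}$, $l_{x,i}=\binom{n-2}{d-2}$, and the same $\Omega(n)$ and $\Omega(n^2)$ bounds after cancellation. Your route through an edge-maximal $G^*\in\P_n$ and Tur\'an's theorem is arguably cleaner: the single-bit-flip structure forces $l_{\max}=1$ automatically, making the adversary arithmetic trivial, at the price of invoking Tur\'an to guarantee $\Omega(n^2)$ non-edges. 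Both arguments rest on the same core observation that $\P$ excludes some fixed $K_d$; they simply exploit it at opposite ends of the density spectrum.
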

\begin{proof}
Since $\P$ is nontrivial, for all $n\geq n_0$ there exists a graph on $n$ vertices that is in $\P$, and since $\P$ is subgraph closed, this implies that the empty graph on $n$ vertices is in $\P$. Since $\P$ is nontrivial, there exists a graph $H$ not in $\P$.  Since $H \notin \P$, all supergraphs of $H$ are also not in $\P$. Now let $d = \max\{|V(H)|, n_0\}$. Then the empty graph on $d$ vertices is in $\P$, while the complete graph on $d$ vertices, $K_d$, is not in $\P$.

We prove a lower bound for the problem of distinguishing the empty graph on $n$ vertices and the $n$-vertex graph formed by $K_d$ and $n-d$ isolated vertices, for all $n \geq d$.

Let $X$ contain only one graph, the empty graph on $n$ vertices. Let $Y$ contain all graphs on $n$ vertices that contain exactly one copy of $K_d$ and $n-d$ isolated vertices. Let $R$ be the trivial relation, $R=X \times Y$. Thus, in the notation of \thm{Ambainis}, $m$ equals  $|Y|={n \choose d}$. Similarly, $m' = |X| = 1$. Since $X$ contains only the empty graph, $l^{}_{x,i}$ counts the number of graphs in $Y$ in which the \th{i} edge is present. Since one edge fixes two vertices of $K_d$, the number of related graphs in $Y$ with a given edge is $n-2 \choose d-2$. Since $X$ contains only 1 graph, $l'_{y,i}\leq 1$.

The quantum adversary method (\thm{Ambainis}) gives us $Q(\P)=\Omega\(\sqrt{{{n\choose d}}/{{n-2 \choose d-2}}}\)=\Omega(n)$. \thm{Aaronson} gives us $R(\P)=\Omega(n^2)$ and thus $D(\P)=\Omega(n^2)$. Since all query complexities are at most $n \choose 2$, we have $D(\P)=\Theta(n^2)$ and $R(\P)=\Theta(n^2)$.
\end{proof}

Note that this theorem can also be proved by reduction from the unstructured search problem using Tur{\'a}n's theorem. By Tur{\'a}n's theorem, the densest graph on $n$ vertices that does not contain $K_d$ as a subgraph has $\Theta(n^2)$ non-edges. Thus any algorithm that decides the graph property must be able to distinguish the densest graph from the same graph with one extra edge. Since there are $\Theta(n^2)$ non-edges, the problem of searching a space of size $\Theta(n^2)$ can be reduced to this, giving the appropriate lower bounds.

This theorem shows that all the properties considered in this paper are classically uninteresting from the viewpoint of query complexity, since their classical (deterministic or randomized) query complexity is $\Theta(n^2)$. Since the classical query complexity of all these properties is known, from now on we focus on quantum query complexity.

\subsection{Acyclicity}

We now show an $\Omega(n^{3/2})$ lower bound for the problem of determining whether a graph is acyclic (i.e., a forest).  We then isolate the main idea of this proof, formulating a lemma that is useful for establishing lower bounds for more general topological minor-closed graph properties.

The lower bound for acyclicity is similar to the connectivity lower bound of D{\"u}rr et al.~\cite{DHHM06}. The intuition is that a long path and a long cycle look the same locally. Since algorithms only have access to local information, these two graphs should be hard to distinguish. Unfortunately this is not sufficient, since a path can be easily distinguished from a cycle by searching the entire graph for a degree-1 vertex, which can be done with $O(n)$ queries. Instead, we try to distinguish a  path from the disjoint union of a cycle and a path. Now both graphs have 2 degree-1 vertices. We require both the cycle and the path to be long, since a short cycle or path could be quickly traversed. 

\begin{theorem}
\label{thm:C3minor}
Deciding if a graph is a forest requires $\Omega(n^{3/2})$ queries.
\end{theorem}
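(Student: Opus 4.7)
The plan is to apply the quantum adversary method (Theorem~\ref{thm:Ambainis}), modeled on the $\Omega(n^{3/2})$ connectivity lower bound of D\"urr et al. I would take $X$ to be the set of $n$-vertex Hamiltonian paths (all of which are forests) and $Y$ to be the set of $n$-vertex graphs formed as the disjoint union of a $k$-cycle and an $(n-k)$-path with $k = \lfloor n/2 \rfloor$ (none of which are forests). Both families share the same degree sequence---two vertices of degree $1$ and the other $n-2$ of degree $2$---so no constant-radius local query distinguishes them, making the adversary method the natural tool and giving a quantitative version of the intuition stated just before the theorem.

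Next I would define the relation $R \subseteq X \times Y$ via a cycle-insertion operation. For $x = v_1 v_2 \cdots v_n \in X$ and each valid position $p$, let $y_p(x) \in Y$ be obtained from $x$ by deleting the two edges $(v_{p-1}, v_p)$ and $(v_{p+k-1}, v_{p+k})$ and inserting the closing edge $(v_p, v_{p+k-1})$ together with the joining edge $(v_{p-1}, v_{p+k})$; the result is a $k$-cycle on $\{v_p,\ldots,v_{p+k-1}\}$ plus an $(n-k)$-path on the remaining vertices. Putting $(x, y_p(x)) \in R$ for every valid $p$, each related pair differs in exactly four edges. A direct count gives $m = \Theta(n)$ (one $y$ per valid $p$), and by enumerating the preimages of the operation---choices of which cycle edge of $y$ to cut, which path edge to splice at, and in which orientation to reassemble---one gets $m'$ on the order of $n^2$. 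Since each of the four difference edges pins down $p$ linearly, $l_{x,i} = O(1)$ uniformly.

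The technical crux is the bound on $l'_{y,i}$. A cycle or path edge of $y$ can a priori be designated the cut or splice edge in $\Theta(n)$ preimages, which would cap the argument at $\Omega(n)$. The remedy, analogous to D\"urr et al.'s connectivity argument, is to trim $R$ by adopting canonical conventions (for instance, fixing a canonical orientation when reassembling and a canonical splice location aligned with the cycle's symmetry) so that each edge appears as a difference in only $O(1)$ related pairs, shrinking $l'_{\max}$ while keeping $m \cdot m' = \Omega(n^3)$. Substituting into Theorem~\ref{thm:Ambainis} then yields $\Omega\bigl(\sqrt{m m'/l_{\max}}\bigr) = \Omega(n^{3/2})$. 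Carrying out this rebalancing carefully---in particular checking that the trimmed relation still has the promised per-edge bounds and aggregate counts---is the main obstacle, since a loose analysis only delivers the weaker bound $\Omega(n)$ and the combinatorial rigidity of path versus cycle-plus-path has to be exploited just right to gain the additional factor of $\sqrt{n}$.
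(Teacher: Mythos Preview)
Your setup has a genuine gap that the proposed ``trimming'' cannot repair. With the cycle length fixed at $k=\lfloor n/2\rfloor$, the forward operation $x\mapsto y_p(x)$ has only $\Theta(n)$ valid positions $p$, so $m=\Theta(n)$. You correctly get $m'=\Theta(n^2)$ and $l_{x,i}=O(1)$, but $l'_{y,i}=\Theta(n)$ whenever $i$ is an edge of $y$ (since that edge can serve as the cut edge $e_1$ or splice edge $e_2$ paired with any of $\Theta(n)$ choices for the other). Hence $l_{\max}=\Theta(n)$, and Theorem~\ref{thm:Ambainis} gives only $\Omega\bigl(\sqrt{n\cdot n^2/n}\bigr)=\Omega(n)$.

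The trimming you propose cannot recover the missing $\sqrt n$: for any trimmed $R$, each related pair $(x,y)$ contributes exactly two to $\sum_{i:\,y_i=1} l'_{y,i}$ (the two edges of $y$ removed in passing to $x$). Thus if every $l'_{y,i}=O(1)$, then $m'\le \tfrac12\sum_i l'_{y,i}=O(|E(y)|)=O(n)$, forcing $mm'=O(n^2)$ and again only $\Omega(n)$. There is no rebalancing of your relation that simultaneously achieves $l'_{\max}=O(1)$ and $mm'=\Omega(n^3)$.

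The paper fixes this not by trimming but by enlarging: it lets the cycle length vary over the range $(n/3,2n/3)$, and defines $(x,y)\in R$ via a generic two-edge swap (remove $(a,b),(c,d)$ from the path, add $(a,c),(b,d)$). This makes $m=\Omega(n^2)$ as well as $m'=\Omega(n^2)$, while the analysis still gives $l_{\max}=O(n)$ (one factor is $O(1)$, the other $O(n)$, in each of the two cases $x_i=0$ and $x_i=1$). Then $\sqrt{mm'/l_{\max}}=\sqrt{n^4/n}=n^{3/2}$. The essential missing idea in your attempt is allowing a linear-sized range of cycle lengths so that both $m$ and $m'$ are quadratic.
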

\begin{proof}
Let $X$ be the set of all graphs on $n$ vertices isomorphic to the path on $n$ vertices. Let $Y$ be the set of all graphs on $n$ vertices that are the disjoint union of a path and a cycle, such that no vertex is isolated and both the cycle and path have more than $n/3$ vertices. Clearly all graphs in $X$ are forests and all graphs in $Y$ are not. Let $(x,y)\in R$ if there exist 4 vertices $a,b,c,d$, such that the only difference between $x$ and $y$ is that in $x$, the induced subgraph on these vertices has only the edges $(a,b)$ and $(c,d)$, but in $y$, the induced subgraph has only edges $(a,c)$ and $(b,d)$. See \fig{pathcycle} for an example of two related graphs on $n=10$ vertices.

\begin{figure}
\capstart
\centering
\begin{overpic}[scale=0.8]{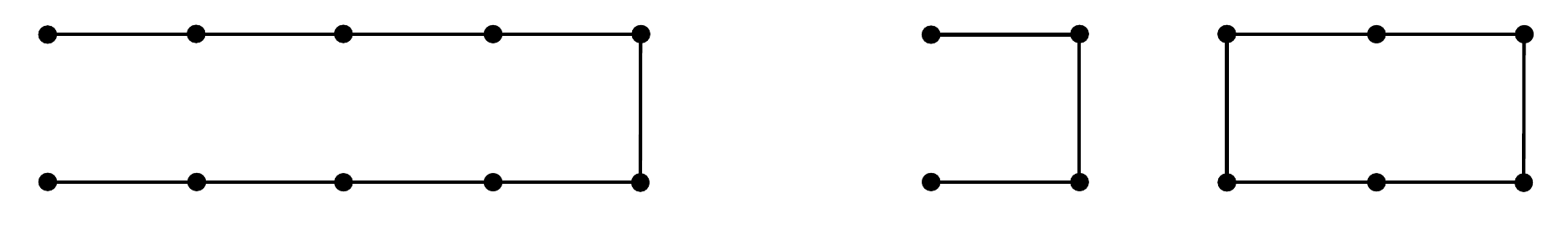} 
\put(13,11.5){$a$}
\put(20,11.5){$b$}
\put(13,5.25){$c$}
\put(20,5.25){$d$}
\put(69.25,11.5){$a$}
\put(76.25,11.5){$b$}
\put(69.25,5.25){$c$}
\put(76.25,5.25){$d$}
\put(19,0){$x \in X$}
\put(75,0){$y \in Y$}
\end{overpic}
\caption{An example of two graphs on 10 vertices such that $(x,y) \in R$.
\label{fig:pathcycle}}
\end{figure}

Now let us compute the relevant quantities from \thm{Ambainis}. Recall that $m$ is the minimum number of graphs $y \in Y$ that each graph $x \in X$ is related to. Each graph in $X$ can be transformed to a related graph in $Y$ by selecting edges $(a,b)$ and $(c,d)$ such that the distance between the edges is between $n/3$ and $2n/3$. There are $n-1$ ways to pick the edge $(a,b)$, and for any choice of $(a,b)$ there are $n/3$ possible edges $(c,d)$, which gives $m=\Omega(n^2)$. Similarly, $m'=\Omega(n^2)$, since in a graph $y\in Y$, we can choose any one edge in the cycle to be $(a,c)$ and any one in the path to be $(b,d)$.

Now let us compute $l_{\max}$. The quantity $l^{}_{x,i}$ counts the number of graphs in $Y$ that are related to $x$ and differ from $x$ at the \th{i} edge. The variable $i$ indexes the bits of the adjacency matrix; let $x_i$ indicate whether the \th{i} edge is present or absent in $x$. Since $l_{\max}$ is the maximum of $l^{}_{x,i}\,l'_{y,i}$ over all related pairs $(x,y)$ such that $x_i \neq y_i$, let us first compute the maximum of $l^{}_{x,i}\,l'_{y,i}$ over all related pairs $(x,y)$ where $x_i = 0$ and $y_i = 1$.

Since $x_i = 0$, the \th{i} edge is not present in $x$ and is present in the related graphs $y$, so without loss of generality the \th{i} edge is $(a,c)$. To obtain a graph in $Y$, we need to choose vertices $b$ and $d$ such that $(a,b)$ and $(c,d)$ are edges in $x$. We can choose either of $a$'s two neighbors to be $b$ and either of $c$'s two neighbors to be $d$. This gives at most 4 different $y \in Y$ that are related to this $x$ and differ at the \th{i} edge. Thus $l^{}_{x,i} \leq 4$ when $x_i = 0$. (The reason for the inequality is that sometimes this may not yield a valid graph in $y$, e.g., when the resulting graph is not of the appropriate form, or when the resulting path or cycle is too short.)

Since $y_i=1$, $l'_{y,i}$ counts the number of graphs in $X$ related to $y$ that do not have the \th{i} edge. Again, we can assume this edge is $(a,c)$, since it is present in $y$ but not in related graphs in $X$.  If the \th{i} edge is an edge on the path in $y$ (as opposed to the cycle), then choosing any edge in the cycle will give two vertices $b$ and $d$ that give rise to valid graphs in $x$ when the edges $(a,b)$ and $(c,d)$ are added and $(a,c)$ and $(b,d)$ are deleted. Thus there are $O(n)$ possible choices for $b$ and $d$, which gives $O(n)$ related graphs in $X$. If $(a,c)$ is an edge on the cycle, then we can choose any edge on the path as $(b,d)$. This again leads to at most $O(n)$ possibilities, which gives us $l'_{y,i} = O(n)$. Thus $l^{}_{x,i}\,l'_{y,i} = O(n)$ for all $(x,y) \in R$ when $x_i = 0$ and $y_i = 1$.

Now we need to compute $l^{}_{x,i}\,l'_{y,i}$ when $x_i = 1$ and $y_i = 0$. The values of  $l^{}_{x,i}$ and $l'_{y,i}$ are similar for this case: one is $O(1)$ and the other is $O(n)$. This gives $l^{}_{x,i}\,l'_{y,i}=O(n)$ for all $(x,y) \in R$ such that $x_i \neq y_i$. Thus $l_{\max} = O(n)$.

Using \thm{Ambainis}, we get a lower bound of $\Omega(\sqrt{{n^4}/{n}}) = \Omega(n^{3/2})$. 
\end{proof}

Cyclicity can be viewed as the property of containing $C_3$ as a minor, or equivalently, as a topological minor.  Note that for any constant $r$, the same proof also works for the property of containing $C_r$ as a minor (i.e., containing a cycle of length at least $r$), since the graphs in $X$ did not contain any cycle, and the graphs in $Y$ contained a cycle of length at least $n/3$, which contains any constant-sized cycle as a minor.

Inspecting the proof technique more closely, we see that no particular property of forests was used, other than the facts that all subdivisions of $C_3$ are not forests, and that if we delete an edge from a subdivision, the resulting graph is a forest. More precisely, we used the existence of a graph $G$ (in this case $C_3$) and an edge $(u,v) \in E(G)$ (since $C_3$ is edge transitive it does not matter which edge is chosen) such that if $(u,v)$ is subdivided any number of times, the resulting graph still does not have the property (in this case, of being a forest) and if $(u,v)$ is replaced by two disjoint paths the resulting graph does have the property.  The following lemma formalizes this intuition.

\begin{lemma}
\label{lem:mainlb}
Let $\P$ be a graph property closed under topological minors. If there exists a graph $G \notin \P$ and an edge $(u,v)$ in $G$, such that replacing the edge $(u,v)$ by two disjoint paths of any length, one connected to vertex $u$ and the other connected to vertex $v$, always results in a graph $G_1 \in \P$, then $Q(\P) = \Omega(n^{3/2})$.
\end{lemma}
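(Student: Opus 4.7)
The plan is to generalize the argument of \thm{C3minor}, with $C_3$ replaced by the abstract forbidden graph $G$, with subdivided $C_3$'s (cycles) replaced by subdivisions of $G$, and with paths (the specific form that $G_1$ takes when $G = C_3$) replaced by the general $G_1$ supplied by the hypothesis. The proof of \thm{C3minor} relied only on the existence of a forbidden topological minor and on a ``swap two edges'' operation that converted a subdivided cycle plus a disjoint path into a single long path, so we expect the same structural template to work at this level of generality.

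Fix $n$ sufficiently large, let $k = |V(G)|$, and construct two families of $n$-vertex labeled graphs. Let $Y$ consist of graphs whose edge set is that of $G-(u,v)$ together with a subdivision of $(u,v)$ into a path of length $s$ between $u$ and $v$, together with a vertex-disjoint path on a further $t$ vertices, where $s, t = \Theta(n)$ and $s + t + k = n$. Let $X$ consist of graphs obtained from $G-(u,v)$ by attaching two vertex-disjoint dangling paths, one at $u$ of length $L$ and one at $v$ of length $M$, with $L + M + k = n$ and $L, M = \Theta(n)$. Every $y \in Y$ contains a subdivision of $G$ as a subgraph, so $G \leq_T y$; since $G \notin \P$ and $\P$ is closed under topological minors, $y \notin \P$. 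Every $x \in X$ is exactly of the form prescribed by the hypothesis of the lemma, so $x \in \P$.

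Define $R \subseteq X \times Y$ by letting $(x,y) \in R$ if $y$ arises from $x$ by choosing an edge $e_1$ on the dangling path at $u$ and an edge $e_2$ on the dangling path at $v$, deleting both, and inserting two new edges between their endpoints that splice the two dangling paths into a single subdivided $(u,v)$-path plus a single disjoint path (using $u$'s inner segment, $v$'s inner segment, and the new edges to form the subdivided $(u,v)$ route). The adversary counts then run as in \thm{C3minor}: from any $y$ there are $\Theta(n)$ choices of each deleted edge (one on the subdivided $(u,v)$-path, one on the disjoint path), so $m' = \Omega(n^2)$, and symmetrically $m = \Omega(n^2)$; each related pair differs in exactly four edge bits, and for each such bit one of $l^{}_{x,i}, l'_{y,i}$ is $O(1)$ (fixing an added edge essentially pins down the swap, since its two endpoints determine which dangling-path positions were used) while the other is $O(n)$ (fixing a deleted edge leaves the second deleted edge free among $\Theta(n)$ possibilities). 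Hence $l_{\max} = O(n)$, and \thm{Ambainis} delivers the desired $\Omega(\sqrt{n^2 \cdot n^2 / n}) = \Omega(n^{3/2})$ lower bound.

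The only point requiring care beyond the acyclicity argument is verifying that $X \subseteq \P$ and $Y \cap \P = \emptyset$. Both checks rest entirely on the two parts of the hypothesis: the topological-minor-closure of $\P$ together with $G \notin \P$ forces $Y \cap \P = \emptyset$, and the assumption that replacing $(u,v)$ with two disjoint dangling paths of any length produces a graph in $\P$ handles $X \subseteq \P$. Once these structural verifications are in place, the adversary counts are essentially identical to those of \thm{C3minor}, since the fixed copy of $G - (u,v)$ contributes only $O(1)$ vertices and plays no role in the swap.
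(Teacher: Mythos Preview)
Your proposal is correct and follows essentially the same approach as the paper's proof: the same construction of $X$ (the graph $G-(u,v)$ with two long dangling paths at $u$ and $v$) and $Y$ (a subdivision of $G$ along $(u,v)$ together with a long disjoint path), the same four-vertex edge-swap relation $R$, and the same adversary counts $m,m' = \Omega(n^2)$, $l_{\max} = O(n)$. The paper is slightly more explicit in bounding $l_{x,i}$ by noting that the swap endpoints have only $O(1)$ neighbors (rather than saying the added edge ``essentially pins down'' the swap), and it spells out the constraint that the resulting path lengths must stay above $n/3$ when counting $m$ and $m'$, but your sketch captures all of this.
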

\begin{proof}
The proof is similar in structure to \thm{C3minor} and subsumes it if we take $G=C_3$. For the general case, let $G$ be a graph on $k$ vertices.

Let $G_1$ be the graph $G$ with the edge $(u,v)$ deleted, a path of length $n_1$ attached at vertex $u$, and a path of length $n_2$ attached at vertex $v$, such that $n_1, n_2 \geq n/3$ and $|V(G_1)| = n$.  By assumption, $G_1 \in \P$.  Let $X$ be the set of all graphs isomorphic to $G_1$.

Let $G_2'$ be the graph $G$ with the edge $(u,v)$ subdivided $n_1$  times, where $n_1 \geq n/3$. This is equivalent to replacing the edge by a path of length $n_1+1$. Let $G_2$ be the disjoint union of $G_2'$ and a path of length $n_2$, such that $n_2 \geq n/3$ and $|V(G_2)| = n$. Clearly $G_2 \notin \P$, since it contains $G$ as a topological minor.  Let $Y$ be the set of all graphs isomorphic to $G_2$. 

As before, let $(x,y)\in R$ if there exist 4 vertices $a,b,c,d$, such that the only difference between $x$ and $y$ is that in $x$, the induced subgraph on these vertices has only the edges $(a,b)$ and $(c,d)$, but in $y$, the induced subgraph has only the edges $(a,c)$ and $(b,d)$. 

Each graph in $X$ can be transformed to a related graph in $Y$ by first selecting an edge $(a,b)$ in the first path and then picking another edge $(c,d)$ in the second path. Each path contains more than $n/3$ edges, but we have to satisfy the condition that when the edges $(a,b)$ and $(c,d)$ are removed and replaced with $(a,c)$ and $(b,d)$, the resulting graph is in $Y$. This means that after swapping the edges, both the long disjoint path and the path between vertices $u$ and $v$ have to be longer than $n/3$. Even with this restriction there are $\Omega(n^2)$ graphs in $Y$ related to any graph $x \in X$. For example, we can choose any edge on the shorter path to be $(a,b)$, and then there are at least $n/6$ edges on the longer path which can be chosen to be $(c,d)$, which will give a graph in $Y$ when the edges $(a,b)$ and $(c,d)$ are removed and replaced with $(a,c)$ and $(b,d)$.

Similarly, $m'=\Omega(n^2)$. We have to choose an edge from the disjoint path and one from the path which connects the vertices $u$ and $v$. Again, we can choose any edge from the smaller of the two paths, and then there are still at least $n/6$ edges in the other path left to choose, such that if those edges are chosen as  $(a,c)$ and $(b,d)$, and then we perform the swap (i.e.,  $(a,c)$ and $(b,d)$ are removed and replaced with  $(a,b)$ and $(c,d)$), this results in a graph in $X$. 

Now let us upper bound the maximum of $l^{}_{x,i}\,l'_{y,i}$ over all related pairs $(x,y)$ where $x_i = 0$ and $y_i = 1$. Since the \th{i} edge is not present in $x$, we can take this to be the edge $(a,c)$. To obtain a graph in $Y$, we need to choose vertices $b$ and $d$ such that $(a,b)$ and $(c,d)$ are edges in $x$. We can choose any one of $a$'s neighbors to be $b$ and any one of $c$'s neighbors to be $d$. (For some edges, this procedure may not give a graph in $Y$, but we only need an upper bound.) Since $a$ and $c$ have $O(1)$ neighbors, $l^{}_{x,i} = O(1)$ when $x_i = 0$. 

To compute $l'_{y,i}$, we can assume the \th{i} edge is the edge $(a,c)$, since it is present in the graph $y$ but not in related graphs in $X$.  If the  \th{i} edge is an edge on the disjoint path in $y$ or the path connecting vertices $u$ and $v$, there can be at most $O(n)$ choices for the edge $(b,d)$ on the other path that gives rise to a valid graph in $X$ when the edges $(a,b)$ and $(c,d)$ are added and $(a,c)$ and $(b,d)$ are deleted. As before, sometimes there may be no related graphs with the \th{i} edge absent, but we only require an upper bound. Thus $l^{}_{x,i}\,l'_{y,i} = O(n)$ for all $(x,y) \in R$ when $x_i = 0$ and $y_i = 1$.

Now we need to compute $l^{}_{x,i}\,l'_{y,i}$ when $x_i = 1$ and $y_i = 0$. The values of  $l^{}_{x,i}$ and $l'_{y,i}$ are similar for this case: one is $O(1)$, and the other is $O(n)$. This gives $l^{}_{x,i}\,l'_{y,i}=O(n)$ for all $(x,y) \in R$ such that $x_i \neq y_i$. Thus $l_{\max} = O(n)$. 

Using \thm{Ambainis}, we get a lower bound of $\Omega(\sqrt{{n^4}/{n}}) = \Omega(n^{3/2})$.
\end{proof}

\subsection{A graph invariant for topological minor containment}

To identify suitable edges for use in \lem{mainlb}, we introduce a graph invariant that is monotone with respect to topological minors.  As a simple application, we use this invariant to show an $\Omega(n^{3/2})$ lower bound for all $H$-topological minor containment properties that are not also $H$-subgraph containment properties.

Call an edge \emph{internal} if it lies on a cycle or on a path joining two vertices of degree 3 or greater. Call all other edges \emph{external}. Note that an edge is external if and only if it belongs to a \emph{dangling path}, a vertex subset $\{v_1,v_2,\ldots,v_k\}$ such that $v_1$ is adjacent to $v_2$, $v_2$ is adjacent to $v_3$, etc., where $v_1$ has degree $1$ and $v_2,v_3,\ldots,v_{k-1}$ have degree $2$.  For a graph $G$, let $\beta(G)$ denote the number of internal edges in $G$.  We claim that $\beta$ is monotone with respect to topological minors.

\begin{lemma}
If $H \leq_T G$ then $\beta(H) \leq \beta(G)$.
\end{lemma}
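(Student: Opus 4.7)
The plan is to prove monotonicity of $\beta$ separately under each of the three elementary operations that define the topological minor relation (adding an isolated vertex, adding an edge, and performing an elementary subdivision), and then chain these together. Since $H \leq_T G$ means $G$ is obtainable from $H$ by a finite sequence of such operations, showing that each step weakly increases $\beta$ yields $\beta(H) \leq \beta(G)$.

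The easy cases are isolated-vertex addition, which changes no edge and no degree, and elementary subdivision of an edge $e=(u,v)$ by a new vertex $w$ of degree $2$. For the subdivision, I would argue as follows: the degrees of all vertices other than $w$ are unchanged, so the set of vertices of degree $\geq 3$ is identical in $G$ and $G'$; and any cycle (or path between two degree-$\geq 3$ vertices) in $G$ that uses $(u,v)$ corresponds to the same structure in $G'$ with $(u,v)$ replaced by the two-edge path $u\text{--}w\text{--}v$, and vice versa. It follows that an edge $e'\neq (u,v)$ is internal in $G'$ iff internal in $G$, and that both new edges $(u,w)$ and $(w,v)$ are internal in $G'$ iff $(u,v)$ was internal in $G$. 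Hence $\beta(G')=\beta(G)$ if $(u,v)$ was external and $\beta(G')=\beta(G)+1$ if $(u,v)$ was internal.

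The main content is the edge-addition step, and this is the step where one must be careful. Suppose $G'$ is obtained from $G$ by adding an edge $(u,v)$. The key point to verify is that no previously internal edge becomes external. For this I would note that adding an edge cannot destroy any cycle of $G$ and cannot lower any vertex's degree; in particular, the set of degree-$\geq 3$ vertices of $G$ is contained in that of $G'$. Therefore, if $e \in E(G)$ lies on a cycle of $G$, the same cycle persists in $G'$; and if $e$ lies on a $G$-path between two vertices $x,y$ of degree $\geq 3$, that same path exists in $G'$, where $x,y$ still have degree $\geq 3$. So every internal edge of $G$ remains internal in $G'$, and the added edge $(u,v)$ contributes (at worst) $0$ to $\beta$. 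Thus $\beta(G') \geq \beta(G)$.

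The main obstacle I expect is just being precise in the edge-addition case, particularly the possibility that adding an edge changes the dangling-path structure in seemingly complicated ways (for example, merging two dangling paths into an internal path); the clean way around this is to use the ``positive'' characterization of internal edges (lying on a cycle or on a path between two degree-$\geq 3$ vertices) rather than reasoning about dangling paths directly, since the positive characterization is manifestly monotone under edge addition. Once these three claims are established, the lemma follows immediately by induction on the number of elementary steps in a witness that $H \leq_T G$.
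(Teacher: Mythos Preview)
Your proposal is correct and follows essentially the same approach as the paper: induct on the number of elementary steps witnessing $H \leq_T G$, and for each of the three operations (add isolated vertex, add edge, subdivide) verify that internal edges remain internal by using the positive characterization (lying on a cycle or on a path between two degree-$\geq 3$ vertices). Your treatment is in fact slightly more precise than the paper's in the subdivision case---you compute that $\beta$ goes up by exactly $0$ or $1$---but the strategy is the same.
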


\begin{proof}
Let $H$ be a topological minor of $G$, where $G$ is obtained from $H$ in one step, i.e., by performing an elementary subdivision or by adding a single edge or a single isolated vertex. We show that $\beta(H) \leq \beta(G)$.  Then the same inequality follows when $H$ is obtained from $G$ by any number of steps.

It is clear that adding an isolated vertex does not change the $\beta$ value of a graph. Adding an edge to $H$ results in a supergraph of $H$, which contains all the high degree vertices and cycles that $H$ does (and possibly more). Therefore each internal edge in $H$ remains an internal edge in $G$, which shows that the $\beta$ value cannot decrease.

Finally, since subdividing an edge in $H$ replaces the edge with a path of length 2, this does not change the connectivity of the graph. Any paths that used the subdivided edge can now use the path of length 2 instead. All the vertices of $H$ have the same degree in $G$. Thus cycles cannot be destroyed by subdivision, and neither can a path between any two particular vertices.
\end{proof}

We use a specific topological minor operation that strictly decreases the invariant.

\begin{lemma}
\label{lem:decreasebeta}
In a graph $G$, deleting an internal edge $(u,v)$ and adding two disjoint paths, one starting from vertex $u$ and one from $v$, decreases the $\beta$ value of the resulting graph $H$.
\end{lemma}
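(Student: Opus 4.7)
My plan is to prove the stronger statement that $\beta(H) \leq \beta(G) - 1$, which of course implies strict decrease. The approach will be to analyze, edge by edge, which edges are internal in $H$ and to argue that each such edge was already internal in $G$ and is not the removed edge $(u,v)$.

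First I would record a useful structural fact: the degree in $H$ of every vertex of $G$ equals its degree in $G$, since $u$ and $v$ each lose exactly one incident edge (namely $(u,v)$) and simultaneously gain exactly one (to the first vertex of the newly attached path). All newly added vertices lie on a dangling path with a degree-$1$ leaf and degree-$2$ interior, so each new edge is external in $H$ by definition.

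The heart of the argument is the claim that every edge $e \in E(G)\setminus\{(u,v)\}$ that is external in $G$ remains external in $H$. By the dangling-path characterization recalled just before the lemma, $e$ lies on some dangling path $(w_1,\ldots,w_k)$ of $G$. Crucially, this dangling path cannot contain the edge $(u,v)$, because then $(u,v)$ itself would be external in $G$, contradicting the hypothesis. Hence the same vertex sequence $(w_1,\ldots,w_k)$ is a valid path in $H$; since the degrees of $w_1,\ldots,w_k$ are preserved, it remains a dangling path in $H$, witnessing that $e$ is external in $H$.

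Combining the two observations, any edge that is internal in $H$ must lie in $E(G)\setminus\{(u,v)\}$ and must have been internal in $G$, so the set of internal edges of $H$ is a subset of $\mathrm{int}(G)\setminus\{(u,v)\}$. Since $(u,v)\in\mathrm{int}(G)$ by hypothesis, this yields $\beta(H)\leq\beta(G)-1$. The step I expect to require the most care is the use of the external-iff-on-a-dangling-path characterization to preclude $(u,v)$ from lying on the dangling path containing $e$; once that is in hand, the rest is bookkeeping about unchanged degrees.
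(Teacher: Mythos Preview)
Your proof is correct and follows essentially the same approach as the paper's: both establish that every edge of $H$ that also lies in $G$ and is internal in $H$ was already internal in $G$, so the removed internal edge $(u,v)$ forces $\beta(H)<\beta(G)$. The only cosmetic difference is that you argue the contrapositive (external in $G$ stays external in $H$) via the dangling-path characterization, whereas the paper argues directly from the cycle/path-between-high-degree-vertices definition of internal edges.
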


\begin{proof}
It is clear that merely deleting the internal edge $(u,v)$ decreases the $\beta$ value of the graph. Let us now show that every internal edge in $H$ is also an internal edge in $G$.

First, observe that none of the edges that were added to $H$ as part of the two disjoint paths are internal. This follows because the added edges lie on a path with one end connected to $u$ or $v$ and the other end free. No edge on this path is part of a cycle, and the path does not contain any vertices of degree 3 or more.

It remains to consider edges of $H$ that are also present in $G$.  If an edge is internal in $H$ because it lies on a cycle, then it is also internal in $G$ since all cycles in $H$ are present in $G$. If an edge is internal in $H$ because it lies on a path between two vertices of degree 3 or more, none of the vertices on that path can be on the added disjoint paths, since all the added vertices have degree 1 or 2, and all vertices that are present in both $H$ and $G$ have exactly the same degree in both graphs. Thus such an edge is internal in $G$ as well.  Since $G$ has all the internal edges of $H$, and at least one more (the edge $(u,v)$), $\beta(H) < \beta(G)$.
\end{proof}

Finally, the graphs with $\beta(H)=0$ have a nice characterization.

\begin{lemma}
\label{lem:betazero}
For any graph $H$, $H$-topological minor containment is equivalent to $H$-subgraph containment if and only if $\beta(H)=0$.
\end{lemma}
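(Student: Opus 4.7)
My plan is to prove the two directions separately. For the ``if'' direction, I would first give a structural characterization of graphs with $\beta(H)=0$: since no edge lies on a cycle or on a path joining two vertices of degree $\geq 3$, each connected component of $H$ must be an isolated vertex, a simple path, or a ``spider''---a single vertex of degree $\geq 3$ with several dangling paths attached. Any subdivision $H'$ of $H$ preserves this component structure while merely lengthening the dangling paths, and a longer dangling path contains any shorter one of the same shape as a subgraph (truncate from the degree-1 end). Applying this componentwise shows $H \leq_S H'$ for every subdivision $H'$ of $H$; since $H \leq_T G$ means some subdivision of $H$ embeds as a subgraph of $G$, we get $H \leq_S H' \leq_S G$.

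For the converse, I would assume $\beta(H) \geq 1$, pick any internal edge $(u,v)$, and let $G$ be $H$ with $(u,v)$ subdivided once by inserting a new vertex $w$; then $H \leq_T G$ trivially, and the aim is to show $H \not\leq_S G$. Suppose for contradiction that a subgraph embedding $\phi: V(H) \hookrightarrow V(G)$ exists. Since $|V(G)|=|V(H)|+1$ and $|E(G)|=|E(H)|+1$, exactly one vertex and one edge of $G$ are missed by $\phi$; a short count shows that the missed vertex cannot be $w$, for otherwise both $(u,w)$ and $(w,v)$ would be missed. I would then split into two cases: Case~I, where $(u,v)$ lies on a cycle of $H$, and Case~II, where it does not (forcing it to be a bridge on a path between two degree-$\geq 3$ vertices, since $(u,v)$ is internal).

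In Case~I, let $\ell$ be the length of a shortest cycle of $H$ through $(u,v)$. Every $C_\ell$-subgraph of $G$ either avoids $w$ (and so is a $C_\ell$-subgraph of $H$ not using $(u,v)$), or uses both of $(u,w)$ and $(w,v)$ and thus corresponds to a $C_{\ell-1}$-subgraph of $H$ through $(u,v)$, which is impossible by minimality of $\ell$. Hence $G$ has strictly fewer $C_\ell$-subgraphs than $H$, contradicting the fact that $\phi$ injects them. In Case~II, I would use the topological skeleton: because subdividing does not change the set $V_{\geq 3}$, the injection $\phi$ is forced to biject $V_{\geq 3}(H)$ onto $V_{\geq 3}(G) = V_{\geq 3}(H)$, and then degree-$2$ vertices of $H$ must land in degree-$2$ vertices of $G$ since the degree-$\geq 3$ slots are full. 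Hence the image under $\phi$ of any topological edge of $H$ joining two degree-$\geq 3$ vertices is a topological edge of $G$ of exactly the same length, because embedding a path of length $k$ as a subgraph into a path of length $m$ with both endpoints pinned to the endpoints forces $k = m$; distinct such topological edges of $H$ have disjoint interiors and so map to distinct topological edges of $G$. But the multiset of lengths of $V_{\geq 3}$-to-$V_{\geq 3}$ topological edges of $G$ is the one of $H$ with the length $k$ of the edge containing $(u,v)$ replaced by $k+1$, so $G$ has one fewer such edge of length $k$ than $H$---a contradiction. I expect Case~II to be the main obstacle: the ``if'' direction and Case~I reduce to clean structural or counting arguments, whereas Case~II requires carefully establishing that the subgraph embedding respects the topological skeleton.
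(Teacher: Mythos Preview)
Your ``if'' direction coincides with the paper's: both observe that $\beta(H)=0$ forces each component to be a subdivided star (your ``spider''), and that any subdivision of such a graph contains the original as a subgraph by truncating the dangling paths.

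Your ``only if'' direction is correct but takes a genuinely different route. The paper subdivides \emph{every} edge of $H$ exactly $k=|V(H)|$ times to obtain $G$, and then uses a single global invariant in each case: if $H$ has a cycle, then $G$ has girth exceeding $k$ while $H$ has a cycle of length at most $k$; if $H$ is acyclic with two degree-$\geq 3$ vertices in one component, then any two such vertices in $G$ are at distance greater than $k$ while in $H$ some pair is at distance at most $k$. In either case $H\not\le_S G$ is immediate. Your approach instead subdivides a \emph{single} internal edge \emph{once} and extracts a contradiction from finer counting: the number of $C_\ell$-subgraphs (with $\ell$ the shortest cycle through $(u,v)$) in Case~I, and the multiset of lengths of $V_{\ge 3}$--to--$V_{\ge 3}$ topological edges in Case~II. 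The paper's argument is shorter and avoids the careful tracking of how $\phi$ acts on the topological skeleton; your argument proves the sharper intermediate fact that even one subdivision of one internal edge already breaks subgraph containment. Both are valid; your Case~II does require (and you correctly supply) the observation that $\phi$ must biject $V_{\ge 3}(H)$ onto $V_{\ge 3}(G)$ since $|V_{\ge 3}(H)|=|V_{\ge 3}(G)|$ and degrees cannot decrease under $\phi$, which then forces the image of each topological edge of $H$ to be a full topological edge of $G$ of the same length.
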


\begin{proof}
If $H$ is a subgraph of another graph, then it is also a topological minor of that graph. Thus to show the equivalence of topological minor containment and subgraph containment, we only have to show that if $H$ is a topological minor of $G$ then $H$ is also a subgraph of $G$.

If $\beta(H)=0$, then $H$ contains no internal edges, which means each connected component of $H$ is a subdivision of a star graph.
Then it is easy to see that every subdivision of $H$ contains $H$ as a subgraph.

To show the converse, note that if $H$ is a graph in which some connected component is not a subdivision of a star, then $H$ must have 2 vertices of degree 3 in a connected component or have a cycle. For these graphs we exhibit a subdivision of $H$ which does not contain $H$ as a subgraph. 

Let $H$ be a graph on $k$ vertices that contains a cycle. Let $G$ be the graph obtained by performing $k$ subdivisions on every edge of $H$. Now the smallest cycle in $G$ is at least $k$ times longer than the smallest cycle in $H$. Thus $G$ contains no cycles of length less than or equal to $k$. But $H$ is a graph on $k$ vertices, and contains a cycle, which must be of length less than or equal to $k$. Therefore $H$ cannot be a subgraph of $G$.

Finally, let $H$ be a $k$-vertex acyclic graph with 2 or more vertices of degree 3 in the same connected component. Again let $G$ be the graph obtained by performing $k$ subdivisions on every edge of $H$. Now the shortest path joining any pair of degree-3 vertices in $G$ has length greater than $k$. However, any path joining 2 degree-3 vertices in $H$ has length less than or equal to $k$. Thus $H$ cannot be a subgraph of $G$, since deleting edges and isolated vertices cannot decrease the shortest path between two vertices. 
\end{proof}

Using this invariant together with \lem{mainlb}, we can easily show an $\Omega(n^{3/2})$ lower bound for $H$-topological minor containment assuming that this property does not coincide with $H$-subgraph containment. Since \lem{betazero} characterizes such graphs, we know that $H$ must be cyclic or contain 2 vertices of degree at least 3 in the same connected component.

\begin{theorem}
For all graphs $H$, if $H$-topological minor containment is not equivalent to $H$-subgraph containment, then the quantum query complexity of $H$-topological minor containment is $\Omega(n^{3/2})$.
\end{theorem}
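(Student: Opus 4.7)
The plan is to apply \lem{mainlb} to the property $\P$ of not containing $H$ as a topological minor; this property has the same quantum query complexity as $H$-topological minor containment, and it is itself closed under topological minors by transitivity of $\leq_T$ (if $H \leq_T G'$ and $G' \leq_T G$, then $H \leq_T G$, which is the contrapositive of the required closure). The task then reduces to exhibiting a single graph $G \notin \P$ together with an edge $(u,v) \in E(G)$ such that replacing $(u,v)$ with two disjoint pendant paths attached at $u$ and $v$ always yields a graph in $\P$, regardless of the lengths of those paths.

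I would take $G \defeq H$ itself: since $H \leq_T H$ trivially, $G \notin \P$. The hypothesis that $H$-topological minor containment is not equivalent to $H$-subgraph containment, combined with \lem{betazero}, forces $\beta(H) \geq 1$, so $H$ contains at least one internal edge. Pick any such edge $(u,v)$ to feed into \lem{mainlb}.

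The central step is to verify that, for every choice of lengths, the graph $G_1$ obtained from $H$ by deleting $(u,v)$ and attaching disjoint pendant paths at $u$ and $v$ satisfies $G_1 \in \P$. I would argue by contradiction: if $H \leq_T G_1$, then monotonicity of $\beta$ under topological minors gives $\beta(H) \leq \beta(G_1)$, while \lem{decreasebeta} (applied to the internal edge $(u,v)$) gives $\beta(G_1) < \beta(H)$; contradiction. Hence $G_1 \in \P$ for every choice of path lengths, and \lem{mainlb} immediately yields $Q(\P) = \Omega(n^{3/2})$.

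The only point I anticipated as potentially delicate is that \lem{mainlb} demands $G_1 \in \P$ \emph{uniformly} over all admissible pendant-path lengths (and in particular for lengths that grow with $n$), not merely for one choice. This is not an obstacle here, because both \lem{decreasebeta} and the monotonicity of $\beta$ are lengthwise uniform, so the single $\beta$ inequality closes the argument in one stroke. In effect, all the conceptual work has already been packaged into \lem{mainlb} and the $\beta$ invariant; once those are in place, the only remaining task is to locate a single internal edge of $H$, which the hypothesis (via \lem{betazero}) supplies for free.
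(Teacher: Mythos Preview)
Your proposal is correct and follows essentially the same approach as the paper: take $G=H$, use \lem{betazero} to find an internal edge, and combine \lem{decreasebeta} with the monotonicity of $\beta$ to conclude that the modified graph cannot contain $H$ as a topological minor, so \lem{mainlb} applies. Your write-up is in fact slightly more explicit than the paper's in justifying why $\P$ is topological-minor closed and in spelling out the $\beta$ contradiction and its uniformity over path lengths.
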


\begin{proof}
To apply \lem{mainlb}, $\P$ must be closed under topological minors.  Thus, we consider the property of \emph{not} containing $H$ as a topological minor.

We require a graph $G\notin \P$ with the properties stated in \lem{mainlb}. Let $G$ be the graph $H$ itself, and let $(u,v)$ be any internal edge, which must exist by \lem{betazero}.  By \lem{decreasebeta}, replacing $(u,v)$ with two disjoint paths results in a graph $G'$ with $\beta(G')<\beta(H)$. Thus $G'$ cannot contain $H$ as a minor, which gives $G' \in \P$, and \lem{mainlb} gives us the $\Omega(n^{3/2})$ lower bound.
\end{proof}

\subsection{Minor-closed properties}

We are now ready to prove our main lower bound result, an $\Omega(n^{3/2})$ lower bound for any minor-closed graph property that is not FSP.  By \cor{topminor}, any minor-closed graph property can be described in terms of forbidden topological minors.  However, so far, we have only considered the case of a single forbidden topological minor.  With multiple forbidden topological minors, some internal edges of some forbidden minors may not suffice for use in \lem{mainlb}, since subdividing an internal edge of one minor may result in a graph that contains one of the other minors.  Hence our main challenge is to identify a suitable edge for use in \lem{mainlb}.

We now introduce some terminology that will only be used in this subsection. We call a set of graphs $B$ \emph{equivalent under topological minor containment and subgraph containment} if whenever a graph in $B$ is a topological minor of a graph $G$, there is some graph in $B$ that is also a subgraph of $G$. 

\begin{lemma}
For any graph property $\P$ that is not FSP and that is described by a finite set of forbidden topological minors, there exists a graph $G \notin \P$ and an edge $(u,v) \in E(G)$ satisfying the conditions of \lem{mainlb}.
\end{lemma}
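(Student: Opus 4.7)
The plan is to exploit the failure of FSP through the equivalence terminology just introduced. Pass to a minimal (under $\leq_T$) forbidden-topological-minor set $\mathcal{F}$ for $\P$. If $\mathcal{F}$ were equivalent under topological-minor and subgraph containment in the sense defined, then ``$G$ has some $F_j \in \mathcal{F}$ as topological minor'' would coincide with ``$G$ has some $F_j \in \mathcal{F}$ as subgraph,'' making $\P$ an FSP---contradiction. Hence some $F_i \in \mathcal{F}$ is a topological minor of some graph $G^*$ while no $F_j \in \mathcal{F}$ is a subgraph of $G^*$. Replacing $G^*$ with the subdivision of $F_i$ witnessing $F_i \leq_T G^*$---which as a subgraph of $G^*$ still contains no $F_j$ as subgraph---I may assume $G^*$ is itself a subdivision of $F_i$. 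Since $F_i \leq_S F_i$ but $F_i \not\leq_S G^*$, the subdivision is proper, and \lem{betazero} together with the fact that $F_i$-TM and $F_i$-subgraph containment disagree on $G^*$ forces $\beta(F_i) \geq 1$, so $F_i$ has an internal edge $e_0$; the path in $G^*$ subdividing $e_0$ consists entirely of internal edges of $G^*$, since the cycle or degree-$\geq 3$ endpoints witnessing internality of $e_0$ in $F_i$ persist in $G^*$.

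Set $G \defeq G^*$ (after, if necessary, further subdividing $e_0$'s path so that its length exceeds $\max_{F_j \in \mathcal{F}} |V(F_j)|$; finiteness of $\mathcal{F}$ makes this pre-processing safe---no new $F_j$ appears as subgraph because any would require a short cycle or path configuration already precluded by the existing structure of $G^*$), and take $(u,v) \defeq e^*$ to be an interior edge of this subdivided path, both of whose endpoints are degree-$2$ subdivision vertices of $G$. Plainly $G \notin \P$, so to verify \lem{mainlb}'s hypothesis it suffices to check that for arbitrary $\ell_1, \ell_2$, the graph $H$ obtained from $G$ by deleting $e^*$ and attaching pendant paths of lengths $\ell_1, \ell_2$ at its endpoints contains no $F_j \in \mathcal{F}$ as topological minor.

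For $F_j = F_i$: deleting the interior edge $e^*$ destroys the internality witness of every edge on the $e_0$-subdivision path, turning them all external in $G - e^*$, and pendant edges are external by definition. A direct count combined with \lem{decreasebeta} then yields $\beta(H) < \beta(F_i)$, and monotonicity of $\beta$ (the lemma immediately preceding \lem{decreasebeta}) gives $F_i \not\leq_T H$. For $F_j \neq F_i$: suppose a subdivision $\sigma \subseteq H$ of $F_j$ exists. Its branch vertices must lie at degree-$\geq 3$ vertices of $H$, which coincide with the original $F_i$-vertices of $G^*$; paths of $\sigma$ between two branches lie in $G - e^*$ (pendants being dead ends), and only leaf-paths of $\sigma$ (corresponding to degree-$1$ vertices of $F_j$) can extend into the pendants. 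Using the ample length of the $e_0$-subdivision in $G$ secured above, I reroute each pendant-using leaf-path through the surviving half of that subdivision inside $G$, producing a subdivision $\sigma' \subseteq G^*$ of $F_j$. Minimality of $\mathcal{F}$ gives $F_j \not\leq_T F_i$, and a structural analysis of how $\sigma'$ sits inside the $F_i$-subdivision $G^*$---showing that $\sigma'$ cannot be genuinely ``longer'' than $F_j$ without implying $F_j \leq_T F_i$---forces $\sigma'$ to realize $F_j$ itself as a subgraph of $G^*$, contradicting the defining property of $G^*$.

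The main obstacle is the rerouting step for the case $F_j \neq F_i$, together with the structural comparison that upgrades ``$\sigma' \subseteq G^*$ is a subdivision of $F_j$'' to ``$F_j \leq_S G^*$.'' Both rely on the uniform bound $\max_{F_j \in \mathcal{F}} |V(F_j)| < \infty$ granted by finiteness of $\mathcal{F}$, and on careful tracking of how paths of $\sigma'$ can share or disjointly use the various subdivision paths of $G^*$. The preliminary further subdivision of $e_0$'s path, needed to secure enough length for rerouting, must also be verified not to create any new $F_j$ as subgraph; this again uses the uniform vertex bound together with the observation that further subdivision only inserts degree-$2$ vertices, which cannot serve as branch vertices of a would-be $F_j$-subgraph.
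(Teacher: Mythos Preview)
Your plan diverges substantially from the paper's argument, and the central step---the inequality $\beta(H) < \beta(F_i)$ for the case $F_j = F_i$---fails as stated. \lem{decreasebeta} gives only $\beta(H) < \beta(G)$, and since your $G$ is a (possibly multi-edge) subdivision of $F_i$, you may have $\beta(G) > \beta(F_i)$ by more than the number of edges you lose on the $e_0$-path. Concretely, take $\mathcal{F}=\{K_4\}$ and let $G^*$ be $K_4$ with \emph{two} non-adjacent edges $e_0=(1,2)$ and $e_1=(3,4)$ subdivided; this $G^*$ has no $K_4$ subgraph, so it is a legitimate witness in your setup. After further subdividing the $e_0$-path and deleting an interior edge $e^*$, the six edges $13,14,23,24,3x,x4$ remain internal in $H$, so $\beta(H)=6=\beta(K_4)$. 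Your $\beta$-monotonicity step therefore does not yield $K_4 \not\leq_T H$ (the conclusion happens to be true here for other reasons, but your argument does not establish it). The rerouting step for $F_j\neq F_i$ and the ``safety'' of the pre-subdivision are similarly underspecified: the claim that further subdividing $e_0$ creates no new $F_j$-subgraph is simply false when some $F_j$ has low $\beta$ (e.g.\ is a long path), and your upgrade from $F_j \leq_T G^*$ to $F_j \leq_S G^*$ rests on a ``structural analysis'' that is never carried out.

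The paper avoids all of this by working indirectly: it represents $\P$ by a pair $(S,B)$ of forbidden topological minors with $B$ already equivalent under topological-minor and subgraph containment, and among all such representations picks one minimizing $|S|$. Taking $G=H_1$ to be the element of $S$ with smallest $\beta$, \lem{decreasebeta} applies \emph{directly} (no intermediate subdivision), and if no internal edge of $H_1$ works, the resulting $G'$ must contain a $B$-graph as subgraph; this is then leveraged to move $H_1$ out of $S$, contradicting minimality. The extremal choice of representation is doing the real work that your direct construction cannot.
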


\begin{proof}
We define $\P$ using two finite sets of forbidden topological minors, $S$ and $B$, where the set $B$ is equivalent under topological minor containment and subgraph containment. Clearly, such a description exists, because we can take $B$ to be the empty set and let $S$ be the set of forbidden topological minors.

Among all possible descriptions of the property $\P$ in terms of finite forbidden sets $S$ and $B$, we choose a description that minimizes $|S|$. Since $\P$ is not FSP, it cannot be described by a pair $S$ and $B$ where $S = \emptyset$ since $B$ would then provide a forbidden subgraph characterization of $\P$. Let $l \defeq |S| \neq 0$.  Order the graphs in $S$ by their $\beta$ values, so that $S=\{H_1, H_2, \ldots H_l\}$ where $\beta(H_1) \leq \beta(H_2) \leq \ldots \leq \beta(H_l)$. 

We claim that $H_1$ can serve as the required graph $G$ for \lem{mainlb}. Clearly, $H_1 \notin \P$. $H_1$ must contain an internal edge, since otherwise \lem{betazero} implies that $H_1$ is equivalent under subgraph and topological minor containment, in which case $H_1$ could be removed from $S$ and added to $B$, violating the minimality of $S$. It remains to show that one of the internal edges of $H_1$ satisfies the conditions of \lem{mainlb}.

Toward a contradiction, assume that none of the internal edges of $H_1$ could serve as the edge $(u,v)$. This means that for each internal edge there is a pair of disjoint paths such that the graph resulting from replacing the edge with this pair of paths, $G'$, does not possess property $\P$. Since $G' \notin \P$, $G'$ must contain some graph in $S$ or $B$ as a topological minor. Since an internal edge was deleted and replaced with two disjoint paths, the $\beta$ value of the resulting graph has decreased (by \lem{decreasebeta}). Since $\beta(G') < \beta(H_1)$ and $\beta(H_1) \leq \beta(H_i)$ for all $1 \leq i \leq l$, none of the graphs in $S$ can be a topological minor of $G'$, and thus only a graph in $B$ can be a topological minor of $G'$.

Hence, for each edge $(u,v)$, there exists a pair of disjoint paths such that when $(u,v)$ is replaced by these paths, the resulting graph $G'$ contains one of the graphs in $B$ as a topological minor, and therefore as a subgraph. Let $G''$ be a supergraph of $G'$ obtained by adding an extra edge that connects the loose ends of the two disjoint paths. Since $G''$ is obtained by replacing the edge $(u,v)$ by a long path, it is a subdivision of $H_1$. Since $G'$ contains a graph in $B$ as a subgraph, so does $G''$.

It follows that for every internal edge $(u,v)$ of $H_1$, there is a constant $a$ such that if the edge $(u,v)$ is subdivided $a$ or more times, the resulting graph contains some graph from $B$ as a subgraph. Let the maximum constant $a$ over all internal edges be $c$. If any internal edge of $H_1$ is subdivided more than $c$ times, it contains some graph from $B$ as a topological minor. We use this fact to get a forbidden subgraph characterization for $\{H_1\} \cup B$.

Let the number of internal edges in $H_1$ be $k$.  Let $J$ be any graph that contains some graph from $\{H_1\} \cup B$ as a topological minor. If it contains some graph from $B$ as a topological minor, it also contains some graph from $B$ as a subgraph, so $B$ already has a forbidden subgraph characterization. The only graph whose containment as a topological minor we have to express by a forbidden subgraph characterization is $H_1$. So let $J$ contain $H_1$ as a topological minor. Since some subdivision of $H_1$ is a subgraph of $J$, let $J'$ be a minimal subgraph of $J$ that is a subdivision of $H_1$, i.e., no subgraph of $J'$ is a subdivision of $H_1$.

We claim that if $J'$ has more than $ck+|V(H_1)|$ vertices, then it already contains some graph from $B$ as a subgraph. Since $J'$ is a subdivision of $H_1$ and has $ck$ more vertices than $H_1$, it must be obtained after $ck$ subdivisions. Moreover, no subgraph of $J'$ can be a subdivision of $H_1$. Thus $J'$ must be obtained by subdividing only internal edges of $H_1$, since subdividing an external edge leads to a supergraph of the original graph (because an external edge must be on a dangling path). Since $J'$ is obtained from $H_1$ by $ck$ subdivisions of internal edges, at least one internal edge was divided $c$ times. Let the graph with this edge divided $c$ times be called $H'$. Since the order of subdivisions does not matter, $H' \leq_T J'$. However, by assumption there is a graph in $B$ that is a topological minor of $H'$. By the transitivity of topological minor containment, there is a graph in $B$ that is a topological minor of $J'$. But since $B$-subgraph containment and $B$-topological minor containment are equivalent, there is a graph in $B$ that is a subgraph of $J'$.  Thus we do not need to forbid any additional subgraphs in order to exclude graphs $J'$ with more than $ck+|V(H_1)|$ vertices.

Now suppose that $J'$ has fewer than $ck+|V(H_1)|$ vertices. Let $F$ be the set of all subdivisions of $H_1$ with at most $ck+|V(H_1)|$ vertices. Clearly $J' \in F$, and $F$ is a finite set of graphs. The set $F \cup B$ is now a forbidden subgraph characterization for the property of not containing any graph from $\{H_1\} \cup B$ as a topological minor.

This gives us a different representation of $\P$, using the new sets $S' = \{H_2,H_3,\ldots, H_l\}$ and $B'=F \cup B$. But $|S'|<l$, which contradicts the minimality of the original characterization.
\end{proof}

Combining this lemma with \cor{topminor} and \lem{mainlb}, we get the main result of this section.

\begin{theorem}
\label{thm:mclb}
For any nontrivial minor-closed property $\P$ that is not FSP, $Q(\P)=\Omega(n^{3/2})$.
\end{theorem}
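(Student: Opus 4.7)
The plan is to assemble three results already in place, so the argument is essentially a one-step combination. First, I would invoke \cor{topminor}: since $\P$ is minor-closed, it admits a description by a finite set of forbidden topological minors. Combined with the hypothesis that $\P$ is not FSP, this exactly matches the premises of the immediately preceding lemma, which therefore produces a graph $G \notin \P$ and an edge $(u,v) \in E(G)$ such that replacing $(u,v)$ by any two disjoint paths, one attached at $u$ and one at $v$, yields a graph in $\P$.

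Second, I would note that a minor-closed property is automatically closed under topological minors: every topological minor is a minor, so $H \leq_T G$ implies $H \leq_M G$, whence $G \in \P$ forces $H \in \P$. Thus the witness pair $(G,(u,v))$ satisfies all the hypotheses of \lem{mainlb}.

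Finally, \lem{mainlb} gives $Q(\P)=\Omega(n^{3/2})$, as required. The genuine work has already been carried out in the preceding lemma, namely the minimality argument on the forbidden set $S$ ordered by the $\beta$ invariant, combined with \lem{decreasebeta} and \lem{betazero} to convert a troublesome forbidden topological minor into additional forbidden subgraphs whenever every internal edge of the minimum-$\beta$ forbidden graph fails to serve as the distinguished edge. Once that lemma is in hand, the theorem follows with no further obstacle.
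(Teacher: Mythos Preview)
Your proposal is correct and matches the paper's approach exactly: the paper simply states that the theorem follows by combining the preceding lemma with \cor{topminor} and \lem{mainlb}. You have even made explicit the one small point the paper leaves implicit, namely that a minor-closed property is in particular closed under topological minors, which is needed to invoke \lem{mainlb}.
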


This lower bound cannot be improved due to a matching algorithm shown in \sec{algo}.  It cannot be extended to minor-closed properties that are also FSP because, as we also show in \sec{algo}, every property of this type has query complexity $o(n^{3/2})$.

Since the complement of $H$-minor containment is minor closed, we have the following.

\begin{corollary}
If $H$ is a graph for which the property of not containing $H$ as a minor is not FSP, then the quantum query complexity of $H$-minor containment is $\Omega(n^{3/2})$.
\end{corollary}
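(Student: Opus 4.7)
The plan is to reduce the corollary to \thm{mclb} applied to the complementary property $\bar{\P}$, where $\P$ denotes $H$-minor containment. Since $\P$ and $\bar{\P}$ share the same quantum query complexity (as noted at the start of \sec{prelim}), it suffices to lower bound $Q(\bar{\P})$, the property of having no $H$-minor.

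First I would verify that $\bar{\P}$ is minor-closed: if $G$ has no $H$-minor and $G' \leq_M G$, then by transitivity of $\leq_M$ any $H$-minor of $G'$ would also be an $H$-minor of $G$, so $G' \in \bar{\P}$. The hypothesis directly supplies the ``not FSP'' precondition of \thm{mclb}, so the only remaining thing to check is nontriviality. For every $n \geq |V(H)|$, the disjoint union of $H$ with $n-|V(H)|$ isolated vertices has $H$ as a minor and thus lies outside $\bar{\P}_n$. For the reverse direction, if $H$ contains at least one edge, the empty graph on $n$ vertices lies in $\bar{\P}_n$, since minor operations (edge deletion, isolated-vertex deletion, edge contraction) cannot create edges from nothing. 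If instead $H$ were edgeless on $k$ vertices, then $G$ would contain $H$ as a minor iff $|V(G)| \geq k$, in which case $\bar{\P}$ would be equivalently characterized by forbidding the single subgraph consisting of $k$ isolated vertices, giving an FSP characterization and contradicting the hypothesis. Hence $H$ has at least one edge, and $\bar{\P}$ is nontrivial.

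With the three hypotheses of \thm{mclb} verified, the theorem yields $Q(\bar{\P}) = \Omega(n^{3/2})$, and therefore $Q(\P) = \Omega(n^{3/2})$. The only place needing any care is the edgeless-$H$ degenerate case in the nontriviality check, but this case is precisely ruled out by the not-FSP hypothesis, so I do not anticipate a substantive obstacle.
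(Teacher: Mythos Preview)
Your proposal is correct and follows exactly the route the paper takes: it applies \thm{mclb} to the complementary property $\bar\P$ (which is minor closed, not FSP by hypothesis, and nontrivial), and then uses $Q(\P)=Q(\bar\P)$. The paper's argument is the one-line observation preceding the corollary, and your write-up simply fills in the verification of the hypotheses---particularly the nontriviality check and the edgeless-$H$ degenerate case---that the paper leaves implicit.
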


Note that $H$-minor containment is not FSP for most graphs $H$. The only exceptions are graphs in which each connected component is a path or a subdivision of a claw ($K_{1,3}$). It is not hard to show that if $H$ is such a graph, then $H$-minor containment is equivalent to $H$-subgraph containment. For such graphs $H$, one can show that $H$ is a minor of $G$ if and only if it is a topological minor of $G$~\cite[Proposition 1.7.4]{Die05}. Then, by \lem{betazero}, such an $H$ is a topological minor of $G$ if and only if it is a subgraph, which proves the claim. In \sec{algo} we specifically study the query complexity of $H$-subgraph containment for such graphs (see \cor{pathfinding} and \thm{paths7-10}), as well as for general minor-closed properties that are FSP.

\section{Algorithms}
\label{sec:algo}

We now turn to quantum algorithms for deciding minor-closed graph properties, as well as related algorithms for subgraph-finding problems.

\subsection{Sparse graph detection and extraction}

We begin by describing some basic tools that allow us to detect whether a graph is sparse and to optimally extract the adjacency matrix of a sparse graph.

To tell whether a graph is sparse, we can apply quantum counting to determine approximately how many edges it contains.  In particular, Theorem 15 of \cite{BHMT00} gives the following.
\begin{theorem}[Approximate quantum counting]\label{thm:qcounting}
Let $f\colon \{1,\ldots,N\} \to \{0,1\}$ be a black-box function with $|f^{-1}(1)|=K>0$, and let $\epsilon \in (0,1]$.  There is a quantum algorithm that produces an estimate $\tilde K$ of $K$ satisfying $|K-\tilde K| \le \epsilon K$ with probability at least $2/3$, using $O(\frac{1}{\epsilon}\sqrt{N/K})$ queries of $f$.  If $K=0$, the algorithm outputs $\tilde K=0$ with certainty in $O(\sqrt N)$ queries.
\end{theorem}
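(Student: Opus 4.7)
The plan is to reduce the task to quantum phase estimation applied to the Grover iterate. Let $\ket{\psi} \defeq \frac{1}{\sqrt{N}}\sum_{x=1}^N \ket{x}$ be the uniform superposition, $A$ any unitary preparing $\ket{\psi}$ from $\ket{0}$, $S_0 \defeq I - 2\ket{0}\bra{0}$, and $S_f$ the oracle-implementable reflection that flips the phase of $\ket{x}$ for each $x$ with $f(x)=1$; each application of $S_f$ (and hence of $G \defeq -A S_0 A^{-1} S_f$) costs one query. I would invoke the standard two-dimensional analysis of $G$: in the plane spanned by the uniform superpositions over marked and unmarked elements, $G$ acts as rotation by angle $2\theta$, where $\theta \in (0,\pi/2]$ satisfies $\sin^2\theta = K/N$, so $G$ has eigenvalues $e^{\pm 2i\theta}$ and $\ket{\psi}$ is an equal superposition of the corresponding two eigenvectors.

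Next, I would run quantum phase estimation on $G$ with input $\ket{\psi}$ and $t$ ancilla qubits, costing $O(2^t)$ queries of $f$. The standard QPE analysis guarantees an output $\tilde\theta$ approximating $\theta$ (up to sign, which is harmless since $\sin^2$ is even) to within $\pi/2^t$ with constant probability; a constant number of independent repetitions followed by taking the median amplifies this to success probability $2/3$.

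The third step is to set $\tilde K \defeq N\sin^2\tilde\theta$ and bound the error via the product-to-sum identity $\sin^2\tilde\theta - \sin^2\theta = \sin(\tilde\theta-\theta)\sin(\tilde\theta+\theta)$, using $\sin\theta=\sqrt{K/N}$ and $|\sin x|\le|x|$ to get the routine bound $|\tilde K - K| = O(\sqrt{NK}/2^t + N/2^{2t})$. Choosing $2^t = \Theta(\epsilon^{-1}\sqrt{N/K})$ makes each term at most $\epsilon K/2$, which yields the claimed $O(\epsilon^{-1}\sqrt{N/K})$ query bound.

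The hard part will be that $K$ is unknown, so the right $t$ cannot be chosen up front. I would handle this in the standard way: run phase estimation with a geometrically growing schedule $t = 1, 2, 4, \ldots$, stopping as soon as the current $\tilde K$ is self-consistent with the precision used; this costs only a constant multiplicative overhead. Finally, the $K=0$ case is handled separately: since $G$ fixes $\ket{\psi}$ whenever $f^{-1}(1)=\emptyset$, it suffices to run standard Grover search for $O(\sqrt{N})$ rounds and output $\tilde K = 0$ deterministically whenever no marked element is found.
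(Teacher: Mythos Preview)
The paper does not give its own proof of this statement: it is quoted directly as Theorem~15 of \cite{BHMT00}, with no argument supplied. Your proposal is essentially a sketch of the Brassard--H{\o}yer--Mosca--Tapp proof itself---phase estimation on the Grover iterate to extract $\theta$ with $\sin^2\theta=K/N$, the error expansion $\sin^2\tilde\theta-\sin^2\theta=\sin(\tilde\theta-\theta)\sin(\tilde\theta+\theta)$ yielding the additive bound $O(\sqrt{NK}/2^t+N/2^{2t})$, and a geometric schedule over $t$ to cope with unknown $K$---so your approach is exactly the one the paper invokes by citation.

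One small point worth tightening: in your $K=0$ handling you switch abruptly from the phase-estimation schedule to ``run standard Grover search and output $0$ if nothing is found.'' As written this reads like a separate procedure, but you do not know in advance that $K=0$. What actually happens (and what BHMT does) is that the geometric schedule is capped once $2^t$ reaches order $\sqrt{N}$; at that point either the current estimate is self-consistent (so $K>0$ and you output it) or every estimate so far has been $\tilde\theta=0$, in which case you output $\tilde K=0$. When $K=0$ the Grover iterate acts as the identity on $\ket{\psi}$, so phase estimation returns $0$ exactly at every stage, and the output $\tilde K=0$ is indeed certain. This is a presentational fix rather than a mathematical gap.
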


Note that the failure probability can be reduced to $\delta$ by repeating this algorithm $O(\log\frac{1}{\delta})$ times and taking the median of the resulting estimates.

Applying \thm{qcounting} to approximately count the edges of a graph, we have the following.

\begin{corollary}\label{cor:sparsedetect}
For any constant $\epsilon>0$ and function $f\colon \Z^{+} \to \Z^{+}$ there is a quantum algorithm using $O(\sqrt{n^2/f(n)} \log\frac{1}{\delta})$ queries which accepts graphs with $m \ge (1+\epsilon)f(n)$ and rejects graphs with $m \le (1-\epsilon)f(n)$ with probability at least $1-\delta$.
\end{corollary}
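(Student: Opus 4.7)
The plan is to apply approximate quantum counting (\thm{qcounting}) to the edge-indicator function of the input graph and then threshold the resulting estimate against $f(n)$. Concretely, let $N=\binom{n}{2}$ and let $g\colon\{1,\ldots,N\}\to\{0,1\}$ enumerate the possible edges of the graph, so that $|g^{-1}(1)|=m$. The algorithm runs approximate quantum counting on $g$ with accuracy parameter $\epsilon'\defeq\epsilon/3$ to obtain an estimate $\tilde m$, and accepts iff $\tilde m\ge f(n)$.

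For correctness, suppose the estimate satisfies $|\tilde m-m|\le\epsilon' m$. If $m\ge(1+\epsilon)f(n)$, then $\tilde m\ge(1-\epsilon/3)(1+\epsilon)f(n)>f(n)$ for $\epsilon\in(0,1]$; if $m\le(1-\epsilon)f(n)$, then $\tilde m\le(1+\epsilon/3)(1-\epsilon)f(n)<f(n)$. So the threshold test gives the right verdict whenever the estimate is accurate. For the query count, note that in the regime $m\ge(1-\epsilon)f(n)=\Theta(f(n))$, \thm{qcounting} directly uses $O(\frac{1}{\epsilon'}\sqrt{N/m})=O(\sqrt{n^2/f(n)})$ queries, since $\epsilon$ is a constant.

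The main obstacle is the small-$m$ regime (in particular $m=0$ or $m=O(1)$), where the bound from \thm{qcounting} degrades to $O(\sqrt N)=O(n)$ and dominates $\sqrt{n^2/f(n)}$ whenever $f(n)\to\infty$. The fix is to truncate the underlying phase estimation at $T=C\sqrt{n^2/f(n)}$ iterations of the Grover operator for a sufficiently large constant $C$. Phase estimation then returns an estimate $\tilde\theta$ of $\theta=\arcsin\sqrt{m/N}$ with additive error $O(1/T)$ with constant probability. Since the threshold value $\theta_0=\arcsin\sqrt{f(n)/N}$ separates the $m\ge(1+\epsilon)f(n)$ and $m\le(1-\epsilon)f(n)$ cases by a gap of $\Omega(\epsilon\sqrt{f(n)/N})$, taking $C$ large enough puts $\tilde\theta$ on the correct side of $\theta_0$ with probability at least $2/3$, regardless of how small $m$ is.

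Finally, boost the success probability to $1-\delta$ by running $O(\log\frac{1}{\delta})$ independent copies and outputting the majority verdict, yielding the total query complexity $O(\sqrt{n^2/f(n)}\log\frac{1}{\delta})$ claimed in the statement.
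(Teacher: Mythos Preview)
Your argument is correct, but the paper handles the small-$m$ regime by a different, more black-box device. Rather than opening up the phase-estimation subroutine and truncating it at $T=\Theta(\sqrt{n^2/f(n)})$ iterations, the paper simply pads the input: it adds $n$ auxiliary vertices and $f(n)$ auxiliary edges, so the edge count of the augmented graph is always at least $f(n)$, and the task becomes distinguishing $m'\ge(2+\epsilon)f(n)$ from $m'\le(2-\epsilon)f(n)$. Then \thm{qcounting} applies directly with accuracy $\epsilon/2$ and cost $O(\sqrt{n^2/f(n)})$, with no need to reason about phase gaps. Your route has the advantage of not introducing an artificial padding trick, but it requires invoking the internal structure of quantum counting (the additive phase-error guarantee of phase estimation) rather than using \thm{qcounting} as a black box; the paper's route stays entirely at the level of the stated multiplicative-error guarantee.
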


\begin{proof}
Approximate quantum counting with accuracy $\epsilon$ can distinguish the two cases. However, if $m  \ll f(n)$, then quantum counting requires $O(\sqrt{n^2/m})$ queries, much more than the claimed $O(\sqrt{n^2/f(n)})$ queries. This can be fixed by adding $n$ more vertices and $f(n)$ edges so the total edge count is always greater than $f(n)$. Now we have to distinguish the cases $m \ge (2+\epsilon)f(n)$ and $m \le (2-\epsilon)f(n)$. This can be done using accuracy $\epsilon/2$ and only $O(\sqrt{n^2/f(n)})$ queries.

This procedure has constant success probability. Repeating this $O(\log\frac{1}{\delta})$ times and taking the majority vote of the outcomes boosts the success probability to at least $1-\delta$.
\end{proof}

We also use a procedure for extracting all marked items in a search problem.

\begin{lemma}\label{lem:searchall}
Let $f\colon \{1,\ldots,N\} \to \{0,1\}$ be a black-box function with $|f^{-1}(1)|=K$.  The bounded-error quantum query complexity of determining $f^{-1}(1)$ is $O(\sqrt{NK})$ if $K>0$, and $O(\sqrt N)$ if $K=0$.
\end{lemma}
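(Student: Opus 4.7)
The plan is to treat the two cases separately. When $K=0$, a single application of approximate quantum counting (\thm{qcounting}) uses $O(\sqrt{N})$ queries and outputs $\tilde K = 0$ with certainty, so the algorithm can correctly report $f^{-1}(1) = \emptyset$.

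When $K > 0$, the strategy is to find the marked items one at a time, running Grover-type search on a modified oracle that filters out already-discovered items. Concretely, first invoke \thm{qcounting} with constant relative accuracy to obtain an estimate $\tilde K = \Theta(K)$ in $O(\sqrt{N/K})$ queries. Then iterate: for $j = 1, 2, \ldots, \tilde K$, let $S_{j-1}$ denote the set of marked items already found (with $|S_{j-1}| = j-1$) and apply Grover search on the oracle $g_{j-1}$ that outputs $1$ exactly on $f^{-1}(1) \setminus S_{j-1}$; this oracle can be simulated by a single query to $f$ together with a lookup in $S_{j-1}$, which uses no additional queries. Using the estimate of the remaining count $K - j + 1$ to tune the number of Grover iterations, one fresh marked item is returned with constant success probability in $O(\sqrt{N/(K-j+1)})$ queries. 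Summing over $j$,
\[
\sum_{j=1}^{K} O\!\left(\sqrt{\tfrac{N}{K-j+1}}\right) \;=\; O(\sqrt{N}) \sum_{k=1}^{K} \tfrac{1}{\sqrt{k}} \;=\; O(\sqrt{NK}),
\]
since $\sum_{k=1}^K 1/\sqrt{k} = \Theta(\sqrt K)$.

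The main technical obstacle will be controlling the cumulative failure probability of the $K$ subsearches without introducing a logarithmic overhead. I plan to handle this by an expected-cost analysis in the spirit of the D\"urr--H\o yer minimum finding algorithm: by using a BBHT-style search that iterates until success, the $j$-th search contributes $O(\sqrt{N/(K-j+1)})$ queries \emph{in expectation}, so the expected total is $O(\sqrt{NK})$. Markov's inequality then converts this into a high-probability bound at the same asymptotic cost, and a constant number of independent repetitions boosts success probability to $2/3$. A final application of \thm{qcounting} to the oracle $g_{\tilde K}$ can be used to verify that no marked items remain (cost $O(\sqrt{N})$, since the residual count is $0$), causing the procedure to restart on the rare event that items were missed. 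Altogether this yields the claimed $O(\sqrt{NK})$ bounded-error bound.
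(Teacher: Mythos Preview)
Your proposal is correct and follows the same high-level strategy as the paper: find the marked items one at a time, filtering out those already found, and bound the total cost by the sum $\sum_{k=1}^{K} O(\sqrt{N/k}) = O(\sqrt{NK})$. The paper's execution is slightly cleaner: instead of approximate counting followed by constant-success-probability search, it invokes exact counting (Theorem~17 of \cite{BHMT00}), which returns $K$ exactly with bounded error in $O(\sqrt{NK})$ queries, and then the exact-search routine (Theorem~16 of \cite{BHMT00}), which, given the true number of remaining marked items, finds one with \emph{certainty} in $O(\sqrt{N/K})$ queries. Because each individual search succeeds with certainty, there is no cumulative error to manage, and the expected-cost-plus-Markov argument and the final verification step in your proposal become unnecessary. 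Your BBHT-style route is perfectly valid and self-contained, at the cost of a bit more bookkeeping; the paper's route trades that for a dependence on the sharper BHMT primitives.
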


This result and its optimality appear to be folklore (see for example \cite{Amb10}); we include a short proof for completeness. It can also be proved using the techniques of~\cite{AIN+08} or~\cite{DHHM06}.

\begin{proof}
First check if $K=0$ by standard Grover search, using $O(\sqrt N)$ queries; if so, we are done.
Otherwise, by Theorem 17 of \cite{BHMT00}, we can exactly determine $K$ with bounded error in $O(\sqrt{NK})$ queries.  By Theorem 16 of \cite{BHMT00}, given $K$, we can find a marked item with certainty in $O(\sqrt{N/K})$ queries.  We repeat this algorithm $K$ times, unmarking each item after we find it, until there are no more marked items.  The number of queries used by this procedure is $O( \sum_{i=0}^{K-1} \sqrt{N/(K-i)})$.
Observe that
\begin{align}
  \sum_{i=0}^{K-1} \sqrt{\frac{N}{K-i}}
  \le \sqrt{N} \int_0^K \frac{\d{x}}{\sqrt x}
  = 2 \sqrt{NK}.
\end{align}
Thus $O(\sqrt{NK})$ queries suffice to find the $K$ marked items.
\end{proof}

Applying these results, we find that sparse graph properties can be decided in $O(n^{3/2})$ queries.

\begin{theorem}\label{thm:sparse}
If $\P$ is a sparse graph property, then $Q(\P) = O(n^{3/2})$.
\end{theorem}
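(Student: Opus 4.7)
The plan is to reduce the decision problem to the task of learning the entire adjacency matrix, which is cheap when the graph really is sparse and can be short-circuited when it is not. The algorithm operates in two phases:

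\textbf{Phase 1 (sparsity test).} Let $c$ be the constant from the definition of sparse, so every $G \in \P$ has $m \le cn$. Apply \cor{sparsedetect} with $f(n) = 2cn$ and $\epsilon = 1/2$, repeated $O(\log n)$ times so the failure probability is $o(1)$. This costs $O(\sqrt{n^2/(cn)} \log n) = \tilde O(\sqrt n)$ queries. The test accepts graphs with $m \ge 3cn$ and rejects graphs with $m \le cn$; in particular, every $G \in \P$ is rejected. If the test accepts, the graph has $m \ge 3cn > cn$, so $G \notin \P$ and the algorithm outputs ``no''.

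\textbf{Phase 2 (extraction).} Otherwise, with high probability $m < 3cn$. Apply \lem{searchall} to the black box $f\colon \binom{V}{2} \to \{0,1\}$ indicating which of the $N = \binom{n}{2}$ potential pairs are actually edges of $G$. This returns the complete edge set using $O(\sqrt{NK}) = O(\sqrt{n^2 \cdot n}) = O(n^{3/2})$ queries, since $K = m = O(n)$. At this point the full adjacency matrix is known, so we decide $G \in \P$ classically at no additional query cost and output the answer.

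Summing the two phases gives $O(n^{3/2}) + \tilde O(\sqrt n) = O(n^{3/2})$ queries. The only real subtlety is error management: Phase 1 must correctly reject sufficiently dense graphs and accept all graphs with $m \le cn$ (a gap that \cor{sparsedetect} with $f(n)=2cn$, $\epsilon = 1/2$ handles), and Phase 2 must be run only when Phase 1 guarantees $m = O(n)$, so that \lem{searchall}'s $O(\sqrt{NK})$ bound does not blow up. Constant error amplification by majority voting in each phase keeps the total error below $1/3$. I do not see a genuine obstacle here; the argument is essentially a combination of approximate counting and optimal search-for-all, and the classical decision step in Phase 2 is free in the query model.
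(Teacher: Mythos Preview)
Your proof is correct and follows essentially the same two-phase strategy as the paper: test sparsity with \cor{sparsedetect}, then on the sparse branch extract all edges with \lem{searchall} and decide classically. One minor quibble: when the Phase~1 test accepts you can only conclude (with high probability) that $m > cn$, not $m \ge 3cn$, since a graph with $cn < m < 3cn$ is in the gap region and might be accepted; but $m > cn$ already gives $G \notin \P$, so your conclusion and the overall argument stand.
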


\begin{proof}
Since $\P$ is sparse, there is a constant $c$ such that $G \in \P$ implies $m \le cn$. By \cor{sparsedetect}, we can reject graphs with $m \geq 2cn$ and keep for further consideration those with $m \leq cn$ with bounded error using $O(\sqrt{n})$ queries. (It does not matter whether graphs with $cn < m < 2cn$ are rejected.) Now all non-rejected graphs have $m < 2cn$.  By applying \lem{searchall} we can reconstruct all edges of the graph with bounded error using $O\(\sqrt{\binom{n}{2}m}\) = O(n^{3/2})$ queries.  Given all the edges of the graph, no further queries are needed to decide $\P$.
\end{proof}

Combining this with \thm{mcsparse} and \thm{mclb}, an immediate consequence is

\begin{corollary}\label{cor:tightmc}
If $\P$ is nontrivial, minor closed, and not FSP, then $Q(\P) = \Theta(n^{3/2})$.
\end{corollary}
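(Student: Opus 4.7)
The plan is to prove \cor{tightmc} by straightforward combination of three results already established in the paper: the lower bound of \thm{mclb}, the sparsity of nontrivial minor-closed properties (\thm{mcsparse}), and the upper bound for sparse graph properties (\thm{sparse}). There is essentially no new content to develop.

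First I would invoke \thm{mclb}, which states that any nontrivial minor-closed property $\P$ that is not FSP satisfies $Q(\P) = \Omega(n^{3/2})$. Since our hypothesis matches this exactly, this gives us the lower bound half of $\Theta(n^{3/2})$ immediately.

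Next I would establish the matching upper bound. By \thm{mcsparse}, every nontrivial minor-closed graph property is sparse, so $\P$ is sparse. Then \thm{sparse} applies and yields $Q(\P) = O(n^{3/2})$. Combining the two bounds gives $Q(\P) = \Theta(n^{3/2})$.

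Since both ingredients are already proven earlier in the excerpt and the corollary is simply their conjunction, there is no real obstacle; the only thing to verify is that the hypotheses line up: \thm{mclb} requires ``nontrivial, minor closed, and not FSP'' (provided by assumption), while \thm{sparse} requires only ``sparse'' (provided by \thm{mcsparse} together with nontriviality and minor closure). The FSP hypothesis plays no role in the upper bound — it is used solely to unlock the lower bound.
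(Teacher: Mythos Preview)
Your proposal is correct and matches the paper's approach exactly: the paper derives the corollary as an immediate consequence of combining \thm{sparse} (the upper bound for sparse properties) with \thm{mcsparse} (minor-closed properties are sparse) and \thm{mclb} (the lower bound), which is precisely what you do.
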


Note that this provides an alternative proof that the quantum query complexity of planarity is $\Theta(n^{3/2})$ \cite{AIN+08}.

For minor-closed graph properties that are also FSP, the lower bounds from \sec{LB} do not rule out the possibility of an improvement over \thm{sparse}.  In fact, we show that an improvement is possible for all such properties. 

\subsection{Quantum walk search}
\label{sec:qwalksearch}

Here we introduce our main algorithmic tool, quantum walk search.  Building on work of Ambainis \cite{Amb07} and Szegedy \cite{Sze04}, Magniez et al.\ gave the following general quantum walk search algorithm (Theorem 3 of \cite{MNRS07}):

\begin{theorem}[Quantum walk search]\label{thm:qwalksearch}
Let $P$ be a reversible, ergodic Markov chain with spectral gap $\delta>0$, and let $M$ be a subset of the states of $P$ (the \emph{marked states}) such that in the stationary distribution of $P$, the probability of choosing a marked state is at least $\epsilon>0$.  Then there is a bounded-error quantum algorithm that determines whether $M$ is empty using $O(S+\frac{1}{\sqrt\epsilon}(\frac{1}{\sqrt\delta}U+C))$ queries, where $S$ is the number of queries needed to set up a quantum sample from the stationary distribution of $P$, $U$ is the number of queries needed to update the state after each step of the chain, and $C$ is the number of queries needed to check if a state is marked.
\end{theorem}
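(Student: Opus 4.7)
My plan is to follow the MNRS framework, which combines Szegedy's quantization of classical Markov chains with amplitude amplification. First I would define the quantum walk operator $W(P)$ acting on $\mathbb{C}^{|X|} \otimes \mathbb{C}^{|X|}$, where $X$ is the state space of $P$. Following Szegedy, $W(P)$ is a product of two reflections about the subspaces spanned by ``column'' states of the form $\ket{\psi_x} = \sum_y \sqrt{p_{xy}}\,\ket{x,y}$ and their swapped analogues. Szegedy's spectral theorem then yields a precise correspondence between eigenvalues of the classical transition matrix $P$ and those of $W(P)$; in particular the phase gap of $W(P)$ around its $+1$-eigenvalue is at least $2\sqrt{\delta}$.

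Next I would prepare the initial state $\ket{\pi} = \sum_x \sqrt{\pi_x}\,\ket{\psi_x}$, which lies in the $+1$-eigenspace of $W(P)$ and encodes the stationary distribution; by hypothesis this preparation costs $S$ queries. Let $\Pi_M$ denote the projector onto the marked subspace. The assumption that marked states have stationary weight at least $\epsilon$ translates into $\|\Pi_M \ket{\pi}\|^2 \geq \epsilon$, so amplitude amplification on $\ket{\pi}$ with target $\Pi_M$ would converge in $O(1/\sqrt{\epsilon})$ iterations, provided we had access to the exact reflections $R_M = I - 2\Pi_M$ and $R_\pi = I - 2\ket{\pi}\!\bra{\pi}$.

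Each amplitude-amplification iteration requires one application of each reflection. The reflection $R_M$ is implementable with $C$ queries using the checking procedure. The reflection $R_\pi$ cannot be implemented directly, but since $\ket{\pi}$ is the unique $+1$-eigenvector of $W(P)$ in the relevant invariant subspace and the remainder of its spectrum is separated from $1$ by a phase gap of $\Omega(\sqrt{\delta})$, we can approximate $R_\pi$ by phase estimation applied to $W(P)$ at precision $O(\sqrt{\delta})$: flipping the sign on all registers whose estimated phase is nonzero implements the desired reflection on the relevant subspace. This phase estimation uses $O(1/\sqrt{\delta})$ controlled invocations of $W(P)$, and each invocation costs $U$ queries to implement the diffusion step specified by $P$, giving $O(U/\sqrt{\delta})$ queries per approximate reflection.

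The main obstacle is controlling the error incurred by replacing $R_\pi$ with its phase-estimation-based approximation. The key technical lemma at the heart of the MNRS analysis shows that if the approximate reflection agrees with the exact reflection to within $\eta$ in operator norm on the relevant two-dimensional invariant subspace, then the error accumulated over $O(1/\sqrt{\epsilon})$ iterations grows only linearly in the iteration count. Choosing the phase-estimation precision small enough (at only logarithmic overhead, which is absorbed into the $O$-notation via a more careful variant of phase estimation) keeps the total error bounded below a small constant while preserving the per-iteration cost of $C + O(U/\sqrt{\delta})$. Adding the one-time setup cost $S$ to the $O(1/\sqrt{\epsilon})$ iterations yields the stated bound $O\bigl(S + \frac{1}{\sqrt{\epsilon}}(\frac{1}{\sqrt{\delta}}\,U + C)\bigr)$.
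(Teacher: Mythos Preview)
Your proposal is a reasonable sketch of the MNRS argument, but note that the paper does not prove this theorem at all: it is stated as Theorem~3 of Magniez, Nayak, Roland, and Santha~\cite{MNRS07} and simply quoted as a black-box tool. There is therefore no proof in the paper to compare against; the authors rely entirely on the cited reference.

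That said, your outline of the MNRS framework is essentially correct. The ingredients you identify---Szegedy's quantization, the $\Omega(\sqrt{\delta})$ phase gap of $W(P)$, amplitude amplification about $\ket{\pi}$ using $O(1/\sqrt{\epsilon})$ iterations, and approximate reflection about $\ket{\pi}$ via phase estimation with $O(1/\sqrt{\delta})$ calls to $W(P)$---are exactly the components of the original proof. One refinement worth noting: MNRS avoid the logarithmic overhead you mention by using a carefully weighted superposition of phase-estimation precisions rather than a single fixed precision, which is what allows the stated bound to hold without extra log factors. If you intend to reproduce a full proof rather than cite \cite{MNRS07}, you would need to include that refinement (or accept a polylogarithmic loss).
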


Despite the generality of this approach, nearly all previous quantum walk search algorithms take $P$ to be a simple random walk on the Johnson graph $J(N,K)$, whose vertices are the $\binom{N}{K}$ subsets of $\{1,\ldots,N\}$ of size $K$, with an edge between subsets that differ in exactly one item.  For our purposes it will be more convenient to consider a random walk on the Hamming graph $H(N,K)$, with vertex set $\{1,\ldots,N\}^K$ and edges between two $K$-tuples that differ in exactly one coordinate.  This choice simplifies the implementation of our setup step.  Although the order of the items has no significance, and the possibility of repeated items only slows down the algorithm, the effect is not substantial.

In particular, both Markov chains have spectral gap $\delta = \Omega(1/K)$.  It can be shown that the eigenvalues of the simple random walk on the Johnson graph are $1-\frac{i(N+1-i)}{K(N-K)}$ for $i\in\{0,1,\ldots,K\}$, so the spectral gap is $\delta_{J(N,K)} = \frac{N}{K(N-K)} = \Omega(1/K)$.  The Hamming graph is even easier to analyze, since it is the Cartesian product of $K$ copies of the complete graph on $N$ vertices.  The normalized adjacency matrix of $H(N,1)$ has spectral gap $\delta_{H(N,1)}=1$, so the normalized adjacency matrix of $H(N,K)$ has spectral gap $\delta_{H(N,K)}=1/K$.

More generally, we consider a Markov chain on the tensor product of several $H(N,K_i)$ in which we take $\alpha_i$ steps on the \th{i} coordinate.  Then the spectral gap is $\delta = 1 - \max_i (1-\frac{1}{K_i})^{\alpha_i} = \Omega(\min_i {\alpha_i}/{K_i})$, where we assume that $1/K_i = o(1)$.

Note that the stationary distribution of a symmetric Markov chain is uniform.  Thus the initial state is a uniform superposition, and to calculate $\epsilon$, it suffices to calculate the probability that a uniformly random state is marked.

\subsection{Detecting subgraphs of sparse graphs}
\label{sec:subgraphsparse}

We now describe algorithms for determining whether a sparse graph $G$ contains a given subgraph $H$.  Our basic strategy is to search over subsets of the vertices of $G$ for one containing a \emph{vertex cover} of $H$, a subset $C$ of the vertices of $H$ such that each edge of $H$ involves at least one vertex from $C$.  By storing the list of neighbors of every vertex in a given subset, we can determine whether they include a vertex cover of $H$ with no further queries.  We exploit sparsity by separately considering cases where the vertices of the vertex cover have a given (approximate) degree.

Let $\vc(H)$ denote the smallest number of vertices in any vertex cover of $H$.  A vertex cover of $H$ with $\vc(H)$ vertices is called a \emph{minimal vertex cover}.

\begin{theorem}\label{thm:vcbasic}
Let $\P$ be the property that a graph either has more than $cn$ edges (for some constant $c$) or contains a given subgraph $H$.  Then $Q(\P) = \tilde O\(n^{\frac{3}{2}-\frac{1}{\vc(H)+1}}\)$.
\end{theorem}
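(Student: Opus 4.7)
The plan is to cast $H$-detection as a quantum walk search problem on $r$-tuples of vertices of $G$, using that knowing the neighborhoods of a $k \defeq \vc(H)$-element vertex cover of a prospective copy of $H$ lets us verify its existence classically, because every edge of $H$ has at least one endpoint in the cover. Sparsity is used twice: to reject dense inputs cheaply via \cor{sparsedetect}, and to ensure that candidate neighborhoods are small enough to load efficiently.

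First invoke \cor{sparsedetect} with threshold $2cn$, so that we may henceforth assume $m \le 2cn$. Then for every dyadic $d \in \{1,2,4,\dots\}$ the set $V_d$ of vertices of degree in $[d,2d)$ satisfies $|V_d| \le N_d \defeq 4cn/d$. Run the algorithm separately for each of the $O(\log^k n)$ assignments of a dyadic degree class to each of the $k$ cover vertices; in the main case, where all cover vertices lie in a common $V_d$, apply the quantum walk search of \thm{qwalksearch} on the Hamming graph $H(N_d,r)$ from \sec{qwalksearch}. A state is marked when the corresponding $r$-tuple contains a $k$-subset $C$ such that, mapping $C$ to a minimum vertex cover of $H$, the remaining non-cover vertices of $H$ embed as distinct common neighbors within the stored neighborhoods.

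For this walk, $\delta = 1/r$, and $\epsilon = \Omega((r/N_d)^k)$ in the yes-case because a uniform $r$-tuple contains any fixed $k$-subset with that probability. The setup and update use two subroutines. Amplitude amplification around an approximate-degree test---\thm{qcounting} at constant relative precision, costing $\tilde O(\sqrt{n/d})$ queries per application---samples a uniform element of $V_d$, and \lem{searchall} extracts its at most $2d$ neighbors in $\tilde O(\sqrt{nd})$ queries. These give $S = \tilde O(r\sqrt{nd})$, $U = \tilde O(\sqrt{nd})$, and $C = 0$, since checking just iterates classically over the $\binom{r}{k}$ choices of the cover inside the tuple and the $O(1)$ minimum vertex covers of $H$ up to isomorphism.

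Substituting into \thm{qwalksearch} yields a per-case cost of $\tilde O\bigl(r\sqrt{nd} + (N_d/r)^{k/2}\sqrt{r}\sqrt{nd}\bigr)$. Balancing at $r \asymp N_d^{k/(k+1)}$ gives $\tilde O(N_d^{k/(k+1)}\sqrt{nd}) \le \tilde O((n/d)^{k/(k+1)}\sqrt{nd})$. Writing $d = n^\alpha$, this is $n$ to the power $\tfrac{k(1-\alpha)}{k+1}+\tfrac{1+\alpha}{2}$, which for $k \ge 1$ is nonincreasing in $\alpha$ and maximized at $\alpha = 0$ with value $\tfrac{3k+1}{2(k+1)} = \tfrac{3}{2}-\tfrac{1}{k+1}$. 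Summing over the polylogarithmically many degree-class combinations preserves this exponent up to $\tilde O$ factors, giving the claimed $\tilde O(n^{3/2-1/(\vc(H)+1)})$. The main obstacle will be making the degree-restricted sampling fully coherent and keeping its cost within the $\tilde O(\sqrt{nd})$ budget in the worst case over $|V_d|$, so that the balance above is not disrupted; once the sampling and \lem{searchall}-based neighborhood extraction are nested correctly, the parameter choice above delivers the exponent.
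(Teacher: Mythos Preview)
Your overall plan matches the paper's: reject dense inputs, bucket by degree, and run a quantum walk over tuples of candidate cover vertices with their neighborhoods stored so that checking is free. But two gaps remain, and the second is substantive.

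First, the sampling cost. One degree test costs $\tilde O(\sqrt{n/d})$, and amplitude amplification needs $O(\sqrt{n/|V_d|})$ rounds, so sampling one element of $V_d$ costs $\tilde O\bigl(n/\sqrt{d\,|V_d|}\bigr)$, not $\tilde O(\sqrt{nd})$; these coincide only when $|V_d| = \Theta(n/d)$, which nothing guarantees. You flag this as an ``obstacle'' but do not resolve it. The fix is to use the \emph{actual} count $t = |V_d|$ (estimated cheaply via \thm{qcounting}) rather than the upper bound $N_d$, and to choose $r$ as a function of $t$: with $r \asymp t^{k/(k+1)}$ the total $t^{k/(k+1)}\bigl(n/\sqrt{td} + \sqrt{nd}\bigr)$ stays within $\tilde O(n^{3/2-1/(k+1)})$ for all $t \le O(n/d)$. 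This is exactly what the paper does, writing the setup contribution as $k_i\,n/\sqrt{t_i}$ and taking $k_i = \sqrt{t_i}\,n^{1/2-1/(k+1)}$ so that it equals $n^{3/2-1/(k+1)}$ regardless of $t_i$.

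Second, and more seriously, you run over all $O(\log^k n)$ degree profiles $(d_1,\dots,d_k)$ but only describe the walk when $d_1 = \cdots = d_k$; your entire optimization is over a single parameter $d$. When the cover vertices fall in different classes whose sizes $t_1,\dots,t_k$ differ polynomially, a single Hamming walk---over one class, or over their union---does not balance. For instance, with $k=2$, $t_1 = n$, $q_1 = O(1)$ and $t_2 = O(1)$, $q_2 = \Theta(n)$, a common tuple length $r$ forces a cost of order $n^{4/3}$, exceeding the target $n^{7/6}$. The paper's remedy, which it highlights as a novelty, is to walk on the product $\prod_i H(t_i,k_i)$ with \emph{different} tuple sizes $k_i$ and \emph{different} numbers of steps $\alpha_i$ per factor, chosen so that $k_i/k_j = \alpha_i/\alpha_j = \sqrt{t_i/t_j}$. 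This equalizes the per-factor setup and update contributions and the ratios $\alpha_i/k_i$ governing the spectral gap, delivering the exponent uniformly over all degree profiles. Your proposal lacks this ingredient.
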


\begin{proof}
First, we use \cor{sparsedetect} to determine whether the graph is non-sparse.  We accept if it has more than $cn$ edges.  Otherwise, we continue, knowing it has fewer than, say, $2cn$ edges.

Now let $C$ be a minimal vertex cover of $H$.  We search for a subset of the vertices of $G$ that include the vertices of $C$ (for some copy of $H$ in $G$).  To take advantage of sparsity, we separately consider different ranges for the degrees of these vertices.  We say that a vertex of $G$ has degree \emph{near} $q$ if its degree is within a constant factor of $q$.  For concreteness, let us say that the degree of $v$ is near $q$ if it is between $q/2$ and $2q$.  We search for a copy of $H$ in $G$ where vertex $i$ of $C$ has degree near $q_i$.  By considering a geometric progression of values for each of the $q_i$s, we cover all the possible vertex degrees with an overhead of only $O(\log^{\vc(H)}n) = \tilde O(1)$. 

Since \thm{qcounting} only allows us to estimate the degree of a vertex within error $\epsilon$, if the degree estimate is too close to $q/2$ or $2q$ we might incorrectly accept or reject the vertex. To handle this, we use a geometric progression where the intervals overlap enough that every possible degree is sufficiently far from the end of some interval. For concreteness, we choose the progression of values to be $2,4,8,\ldots$, so that the relevant intervals are $1$ to $4$, $2$ to $8$, $4$ to $16$, etc. 

For each fixed $(q_1,\ldots,q_{\vc(H)})$, we search over $k_i$-tuples of vertices of $G$ with degree near $q_i$ for each $i$ from $1$ to $\vc(H)$.  For each such vertex, we store its complete neighbor list.  In one step of the Markov chain, we take $\alpha_i$ steps for the \th{i} component.  Here the $k_i$s and the $\alpha_i$s are parameters that can be chosen to optimize the performance of the algorithm.

Let $t_i$ be the number of vertices of $G$ with degree near $q_i$.  Note that we can approximate the $t_i$s at negligible cost using \thm{qcounting}.  Also note that since the number of edges of $G$ is $O(n)$, we have $\sum_i t_i q_i = O(n)$, and in particular, $t_i q_i = O(n)$ for each $i$.  Choose the ordering of the indices so that $t_1 \le \cdots \le t_{\vc(H)}$.

The setup cost of the walk has two contributions.  Using Grover's algorithm, we can prepare a uniform superposition over all vertices of degree near $q_i$ using $O(n/\sqrt{t_i})$ queries.  To prepare a uniform superposition over all $k_i$-tuples of such vertices, we simply repeat this $k_i$ times, using $O(k_i n/\sqrt{t_i})$ queries.  (Note that if our search involved a Johnson graph instead of a Hamming graph, we would need to prepare a uniform superposition over $k_i$-subsets of vertices instead of $k_i$-tuples.  Although this could be done, it would make the setup step more complicated, and the performance of the algorithm would be essentially unchanged.)  Thus we can prepare a uniform superposition over the $k_i$-tuples for all $i$ using $O(\sum_i k_i n/\sqrt{t_i})$ queries.  Next we compute the list of neighbors of each of these vertices using \lem{searchall}, which takes $O(\sum_i k_i \sqrt{n q_i})$ queries.  Since $q_i = O(n/t_i)$, the cost of the neighbor computation can be neglected, so the setup cost is $S=O(\sum_i k_i n/\sqrt{t_i})$.

The cost of performing a step of the walk also has two contributions.  To update the vertices, we search for $\alpha_i$ vertices of degree near $q_i$; this takes $O(\sum_i \alpha_i n/\sqrt{t_i})$ queries.  Updating their neighbor lists takes $O(\sum_i \alpha_i \sqrt{n q_i})$ queries, which is again negligible.  Therefore, the update cost is $U=O(\sum_i \alpha_i n/\sqrt{t_i})$.  Since we perform $\poly(n)$ update steps, we reduce the error probability of each update step to $1/\poly(n)$ so that the final error probability is a constant.  This only introduces an overhead of $O(\log n)$.

We mark states of the Markov chain that include a vertex cover of a copy of $H$ in $G$.  Since we also store complete neighbor lists, and every vertex in $H$ is adjacent to some vertex of the vertex cover, no queries are required to determine whether a state is marked.  In other words, the checking cost is $C=0$.

It remains to determine the spectral gap of the chain and the fraction of marked vertices.  From \sec{qwalksearch}, the spectral gap is $\delta = \Omega(\min_i \alpha_i/k_i)$.  If we choose $k_i$ of the $t_i$ vertices of degree near $q_i$ uniformly at random, the probability of obtaining one particular vertex of degree near $q_i$ is $\Omega(k_i/t_i)$; therefore the fraction of marked vertices is $\epsilon = \Omega(\prod_i k_i/t_i)$.

Applying \thm{qwalksearch}, the number of queries used by this algorithm for any fixed $q_1,\ldots,q_{\vc(H)}$ is
\begin{align}
  O\left(n \left[\sum_i \frac{k_i}{\sqrt{t_i}} + \sqrt{\max_i \frac{k_i}{\alpha_i} \prod_i \frac{t_i}{k_i}} \sum_i \frac{\alpha_i}{\sqrt{t_i}} \right]\right).  
\end{align}
Recall that we have the freedom to choose the $\alpha_i$s and the $k_i$s.  If $\vc(H) \ge 2$, we choose them to satisfy
\begin{align}
\label{eq:choiceofvalues}
  \frac{\alpha_i}{\alpha_j} = \frac{k_i}{k_j} = \sqrt{\frac{t_i}{t_j}}
\end{align}
for all $i,j$ (which still leaves the freedom to choose one of the $\alpha_i$s and one of the $k_i$s).  Assume for now that the $\alpha_i$s and the $k_i$s are integers.
Then the query complexity is
\begin{align}
  O\left(n\left[ \frac{k_1}{\sqrt{t_1}} + \sqrt{\frac{k_1}{\alpha_1} \prod_i \frac{t_i}{k_i}} \frac{\alpha_1}{\sqrt{t_1}} \right]\right)
  &=
  O\left(n\left[ \frac{k_1}{\sqrt{t_1}} + \sqrt{\frac{k_1 \alpha_1}{t_1} \prod_i \frac{t_i}{k_i}} \right]\right) \label{eq:vcbasicqueries_start}\\
  &=
  O\left(n\left[ \frac{k_1}{\sqrt{t_1}} + \sqrt{\frac{k_1 \alpha_1}{t_1} \left(\frac{\sqrt{t_1}}{k_1}\right)^{\vc(H)} \prod_{i} \sqrt{t_i}} \right]\right) \\
  &=
  O\left(n\left[ \frac{k_1}{\sqrt{t_1}} + \sqrt{\alpha_1 \left(\frac{\sqrt{t_1 n}}{k_1}\right)^{\vc(H)-1}} \right]\right)
  \label{eq:vcbasicqueries_end}
\end{align}
where we have used the simple bound $t_i \le n$ in the last line.
Now take
$\alpha_1=1$ and
\begin{align}
\label{eq:valueofk}
  k_1 = \sqrt{t_1} n^{\frac{1}{2}-\frac{1}{\vc(H)+1}};
\end{align}
then the total query complexity is
\begin{align}
  O\(n^{\frac{3}{2} - \frac{1}{\vc(H)+1}}\)
\end{align}
as claimed.  Furthermore, recall that iterating over the various $q_i$s only introduces logarithmic overhead. Since we repeat this subroutine $\poly(\log n)$ times, we reduce the error probability of the subroutine to $1/\poly(\log n)$, which only introduces an extra $O(\log\log n) = \tilde O(1)$ factor to the query complexity.

So far we have assumed that the $\alpha_i$s and the $k_i$s are integers.  However, observe that the asymptotic expressions for the query complexity are unchanged if we replace each $\alpha_i$ by any value between $\alpha_i$ and $2\alpha_i$ and each $k_i$ by any value between $k_i$ and $2k_i$.  Since $\alpha_1=1$ is the smallest $\alpha_i$ and $k_1 = \omega(1)$ is the smallest $k_i$, and because for any $x\ge 1$ there is always an integer between $x$ and $2x$, the result holds when the $\alpha_i$s and $k_i$s are rounded up to the next largest integers.  Similarly, the fact that we only have a multiplicative approximation for the $t_i$s does not affect the asymptotic running time.
\end{proof}

We can apply this algorithm to decide sparse graph properties, and in particular minor-closed properties, that are also FSP:  we simply search for each of the forbidden subgraphs, accepting if none of them are present.  For minor-closed properties, the non-sparseness condition of \thm{vcbasic} can be removed due to \thm{mcsparse}.  Thus, since $\vc(H)$ is a constant for any fixed graph $H$, we have

\begin{corollary}\label{cor:sparsefsp}
If $\P$ is sparse and FSP, then $Q(\P) = O(n^\alpha)$ for some $\alpha < 3/2$.
\end{corollary}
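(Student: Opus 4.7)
The plan is direct: invoke \thm{vcbasic} separately for each forbidden subgraph of $\P$. Since $\P$ is FSP, fix a finite list $\{H_1,\ldots,H_k\}$ of forbidden subgraphs, so that $G\in\P$ iff $G$ contains none of the $H_i$ as subgraphs. Since $\P$ is sparse, there is a constant $c$ with $|E(G)|\le cn$ for every $G\in\P$; equivalently, any graph with more than $cn$ edges must already contain some $H_i$ and thus lie outside $\P$.

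For the algorithm, run the procedure of \thm{vcbasic} once for each $H_i$ (with the constant chosen slightly larger than $c$, so that graphs in $\P$ lie comfortably below the sparsity threshold used internally by \cor{sparsedetect}), and accept $\P$ iff every invocation rejects. Correctness is immediate: if $G\in\P$ then $|E(G)|\le cn$ and no $H_i$ appears, so each invocation rejects; if $G\notin\P$ then either $G$ has too many edges or some specific $H_i$ is present, and the corresponding invocation accepts. Reducing the error probability of each invocation to $1/(3k)$ costs only a constant factor, since $k$ is fixed, so a union bound gives total error at most $1/3$.

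For the complexity, each invocation uses $\tilde O\bigl(n^{3/2-1/(\vc(H_i)+1)}\bigr)$ queries. Setting $v^*\defeq\max_i\vc(H_i)$, which is a finite constant because the forbidden set is finite, the total cost over all $k$ invocations is $\tilde O\bigl(n^{3/2-1/(v^*+1)}\bigr)$. Since $v^*$ is a constant, $3/2-1/(v^*+1)<3/2$, and the polylogarithmic factors are absorbed by choosing any fixed $\alpha$ in the open interval $\bigl(3/2-1/(v^*+1),\,3/2\bigr)$, yielding $Q(\P)=O(n^\alpha)$ as required.

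There is no substantive obstacle; the only care required is reconciling the ``more than $cn$ edges'' clause of \thm{vcbasic} with the sparsity constant of $\P$, handled by taking the constant in each invocation strictly larger than the sparsity constant so that graphs in $\P$ are never confused with non-sparse ones by \cor{sparsedetect}. Everything else is a direct application of already-established results.
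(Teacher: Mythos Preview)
Your proposal is correct and follows essentially the same approach as the paper: run \thm{vcbasic} once per forbidden subgraph and combine the outcomes, using sparsity to absorb the ``more than $cn$ edges'' clause. The paper's own argument is a single sentence to this effect; your write-up simply fills in the routine details (error reduction over the $k$ invocations, handling the sparsity constant, and absorbing the polylog factors into the choice of~$\alpha$).
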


Note that \thm{vcbasic} also holds if we ask whether $H$ is contained as an induced subgraph, since when we check whether $H$ is present for a certain subset of edges, we have access to their complete neighbor lists.

For many subgraphs, we can improve \thm{vcbasic} further by storing additional information about the vertices in the minimal vertex cover: in addition to storing their neighborhoods, we can also store basic information about their second neighbors.  In particular, we have the following.

\begin{theorem}\label{thm:vcdangling}
Let $\P$ be the property that a graph either has more than $cn$ edges (for some constant $c$) or contains a given subgraph $H$.  Let $H'$ be the graph obtained by deleting all degree-one vertices of $H$ that are not part of an isolated edge.  Then $Q(\P) = \tilde O\(n^{\frac{3}{2}-\frac{1}{\vc(H')+1}}\)$.
\end{theorem}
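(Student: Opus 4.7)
The plan is to adapt the proof of \thm{vcbasic} by searching for a minimum vertex cover $C$ of $H'$ rather than of $H$, which reduces the effective search dimension since $\vc(H') \leq \vc(H)$. After disposing of the non-sparse case via \cor{sparsedetect}, we set up the same quantum walk as in \thm{vcbasic} on a product of Hamming graphs, searching over $k_i$-tuples of vertices of $G$ of degree near $q_i$ for each $i \in \{1, \ldots, \vc(H')\}$ and iterating over a logarithmic grid of values $q_i$. The walk parameters, setup cost $S$, update cost $U$, and spectral gap $\delta$ are inherited from \thm{vcbasic}.

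The essential change lies in the marking condition. A state is marked if there is an embedding of $H$ (not merely $H'$) into $G$ with the searched tuple playing the role of $C$. Given the stored full neighbor lists of the $C$-vertices, an embedding of $H'$ can be verified with no additional queries, since $C$ is a vertex cover of $H'$ and hence every vertex of $V(H') \setminus C$ is a neighbor of some $C$-vertex and appears explicitly in these lists. Dangling leaves of $H$ attached to $C$-vertices are likewise directly visible. The new ingredient is verifying dangling leaves attached to a non-$C$ vertex $u \in V(H') \setminus C$: one must confirm that the candidate image of $u$ in $G$ has enough extra neighbors to accommodate its attached leaves, information which is not present in the stored lists. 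This check can be performed by a short Grover-type subroutine on $V(G)$, combined with a secondary Grover search over the candidate images for $u$ (which themselves lie in the relevant neighbor lists).

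The main obstacle is showing that the resulting checking cost $C$, when multiplied by the $\epsilon^{-1/2}$ factor in the formula $Q = \tilde O(S + \epsilon^{-1/2}(\delta^{-1/2} U + C))$ from \thm{qwalksearch}, does not exceed $n^{3/2 - 1/(\vc(H')+1)}$. Since $|V(H)|$ is constant, only $O(1)$ such quantum subsearches are required per marked-state test, each costing at most $\tilde O(\sqrt{q_i\,n})$ queries. Using the sparsity constraint $t_i q_i = O(n)$ together with the choices of $k_i$ and $\alpha_i$ from \eqrange{choiceofvalues}{valueofk}, one shows that the $S$ and $\epsilon^{-1/2}\delta^{-1/2}U$ contributions combine to $\tilde O(n^{3/2-1/(\vc(H')+1)})$ exactly as in \thm{vcbasic}, while $\epsilon^{-1/2} C$ is strictly dominated by this bound whenever $\vc(H') \geq 1$. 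The claimed query complexity then follows.
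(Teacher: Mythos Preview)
Your approach differs from the paper's in a crucial way, and the difference exposes a genuine gap.  The paper does \emph{not} introduce a nonzero checking cost; instead it uses the color-coding technique of \cite{AYZ95}.  It randomly labels the vertices of $G$, and then during setup and update it stores, for each vertex $c$ in the walk, whether $c$ has a second neighbor carrying each possible leaf label.  Computing this extra bit costs $O(\sqrt{nq_i})$ per stored vertex, which is asymptotically no more than the $O(n/\sqrt{t_i})$ already paid to find $c$, so $S$ and $U$ are unchanged and the checking cost remains $C=0$.  The analysis of \thm{vcbasic} then carries over verbatim with $\vc(H)$ replaced by $\vc(H')$.

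Your proposal instead pays for the dangling-leaf verification inside the checking step, and here your accounting is off.  You quote $C=\tilde O(\sqrt{q_i n})$, but this is the cost only after a specific tuple $(c_1,\ldots,c_{\vc(H')})$ has been fixed.  The state of the walk is a product of $k_i$-tuples, and the marked-state predicate asks whether \emph{some} choice of one vertex from each tuple extends to an embedding of $H$; the Grover search over these $\prod_i k_i$ choices must be part of $C$.  With that factor included,
\[
\frac{C}{\sqrt\epsilon}
\;=\;\sqrt{\prod_i \tfrac{t_i}{k_i}}\cdot\sqrt{\prod_i k_i}\cdot\sqrt{q_j n}
\;=\;\sqrt{\prod_i t_i}\cdot\sqrt{q_j n}
\;\le\; n\sqrt{\textstyle\prod_{i\ne j} t_i},
\]
using $q_j t_j=O(n)$.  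In the regime where each $t_i=\Theta(n)$ (equivalently $q_i=O(1)$, a perfectly valid sparse instance), this is $\Theta(n^{(\vc(H')+1)/2})$, which exceeds the target $n^{3/2-1/(\vc(H')+1)}$ for every $\vc(H')\ge 2$.  So the claim that $\epsilon^{-1/2}C$ is ``strictly dominated'' fails in general; the color-coding step in the paper's proof is not a convenience but the mechanism that keeps the checking cost at zero and makes the bound hold for all $\vc(H')$.
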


\begin{proof}
As before, we begin by using \cor{sparsedetect} to determine whether the graph is non-sparse, accepting if this is the case.

Otherwise, we use the technique of color-coding \cite{AYZ95} to handle the degree-one vertices of $H$.
Suppose that $H$ has $\ell$ degree-one vertices, and label them $1,\ldots,\ell$; label the other vertices $\ell+1$.  Assign labels from the set $\{1,\ldots,\ell,\ell+1\}$ uniformly at random to the vertices of $G$.  If there is a copy of $H$ in $G$, then with probability at least $(\ell+1)^{-|V(H)|} = \Omega(1)$, the vertices of this copy of $H$ in $G$ have the correct labels.  We assume this is the case, increasing the cost of the algorithm by a factor of $O(1)$.

We augment the algorithm of \thm{vcbasic} by storing additional information about each vertex: in addition to the neighborhood, we also store whether each vertex has a second neighbor with each possible label.  Computing this information for any one vertex of $G$ of degree near $q_i$ with known neighborhood takes $O(\sqrt{n q_i})$ queries, so the additional setup cost of $O(\sum_i k_i \sqrt{n q_i})$ and update cost of $O(\sum_i \alpha_i \sqrt{n q_i})$ to store this information is negligible.  Now we can recognize $H$ by storing only a minimal vertex cover of $H'$, still with zero checking cost.  Thus the same analysis applies with $\vc(H)$ replaced by $\vc(H')$, and the result follows.
\end{proof}

In particular, \thm{vcdangling} gives an improvement over \thm{vcbasic} for the properties that are characterized by a single forbidden minor (and equivalently, a single forbidden subgraph).  For such properties, we have the following:

\begin{proposition}\label{prop:vcpathclaw}
If $H$ is a $k$-path then $\vc(H) = \ceil{k/2}$.  If $H$ is a $\{d_1,d_2,d_3\}$-claw with $d_1,d_2,d_3$ all even, then $\vc(H) = (d_1+d_2+d_3)/2$; if at least one of $d_1,d_2,d_3$ is odd, then $\vc(H) = 1+\sum_{i=1}^3 \ceil{\frac{d_i-1}{2}}$.
\end{proposition}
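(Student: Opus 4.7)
The plan is to handle the two cases (path and claw) by direct combinatorial arguments, reducing the claw to the path case.

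For the $k$-path $P$, I would prove both bounds explicitly. Label the vertices along the path $v_0, v_1, \ldots, v_k$. The edges $(v_0,v_1), (v_2,v_3), (v_4,v_5), \ldots$ form a matching of size $\lceil k/2 \rceil$, and any vertex cover must use a distinct vertex for each edge of a matching, giving the lower bound. For the matching upper bound, the set $\{v_{2j+1} : 0 \le j < \lceil k/2 \rceil\}$ clearly covers every edge of $P$.

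For the $\{d_1,d_2,d_3\}$-claw, I would case-split on whether the center $c$ belongs to a minimum vertex cover. The key observation is that once this choice is fixed, the three branches become independent subproblems, because every edge of the claw either is incident to $c$ and lies in one branch, or lies entirely within a single branch; hence a vertex chosen in branch $i$ cannot help cover any edge of branch $j \ne i$. Label branch $i$ by $c = w_0^{(i)}, w_1^{(i)}, \ldots, w_{d_i}^{(i)}$. If $c$ is in the cover, covering branch $i$ reduces to covering the $(d_i{-}1)$-path on $w_1^{(i)}, \ldots, w_{d_i}^{(i)}$, which by the path case needs $\lceil (d_i-1)/2 \rceil$ vertices. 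If $c$ is not in the cover, branch $i$ reduces to covering the full $d_i$-path while forbidding $c = w_0^{(i)}$; but the canonical optimum $\{w_1^{(i)}, w_3^{(i)}, w_5^{(i)}, \ldots\}$ of size $\lceil d_i/2 \rceil$ already avoids $c$, so the constraint is free.

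Summing over branches yields two candidate totals, $A \defeq 1 + \sum_{i=1}^{3} \lceil (d_i-1)/2 \rceil$ (with $c$) and $B \defeq \sum_{i=1}^{3} \lceil d_i/2 \rceil$ (without $c$). Writing $o$ for the number of odd $d_i$'s and using $\lceil d/2 \rceil - \lceil (d-1)/2 \rceil = 1$ if $d$ is odd and $0$ if $d$ is even, a one-line calculation gives $B - A = o - 1$. Hence $A \le B$ precisely when $o \ge 1$: when all three $d_i$ are even, $\vc(H) = B = (d_1+d_2+d_3)/2$; when at least one $d_i$ is odd, $\vc(H) = A = 1 + \sum_{i=1}^{3} \lceil (d_i-1)/2 \rceil$. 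These are the two formulas stated in the proposition. I do not expect any real obstacle; the only points requiring care are justifying the independence of the three branches once the status of $c$ is fixed, and verifying the short parity identity that drives the comparison of $A$ and $B$.
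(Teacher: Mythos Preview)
Your proposal is correct. The paper actually states this proposition without proof, so there is no argument in the paper to compare against; your direct combinatorial approach (matching lower bound and explicit cover for the path, then case-splitting on whether the center $c$ is in the cover for the claw and reducing to the path case) is exactly the natural way to verify these formulas, and all the steps check out, including the parity identity $B - A = o - 1$.
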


Therefore, \thm{vcdangling} implies

\begin{corollary}\label{cor:pathfinding}
A $k$-path with $k \ge 3$ can be detected using $\tilde O(n^{\frac{3}{2}-\frac{1}{\ceil{k/2}}})$ quantum queries.  In particular, the quantum query complexity of detecting a $k$-path for $k \in \{1,2,3,4\}$ is $\tilde \Theta(n)$.  Furthermore, a $\{d_1,d_2,d_3\}$-claw with $d_1+d_2+d_3 > 3$ can be detected using $\tilde O(n^{\frac{3}{2}-\frac{2}{d_1+d_2+d_3-1}})$ queries if $d_1,d_2,d_3$ are all odd, and with $\tilde O(n^{\frac{3}{2}-(\sum_{i=1}^3 \ceil{d_i/2}-1)^{-1}})$ queries otherwise.
\end{corollary}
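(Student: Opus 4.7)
The plan is to derive the corollary as an arithmetic verification of \thm{vcdangling} using the vertex cover numbers from \prop{vcpathclaw}. First I would observe that for any fixed graph $H$, the property of $H$-subgraph containment coincides with the hybrid property of \thm{vcdangling} once the constant $c$ is chosen large enough: indeed, by the Erd\H{o}s--Gallai theorem in the case of paths (which gives $\mathrm{ex}(n, k\text{-path}) = O(n)$) and an analogous linear extremal bound for fixed trees in the claw case, any $n$-vertex graph with more than $c\,n$ edges already contains $H$ as a subgraph once $c$ is a sufficiently large constant depending on $H$. Consequently, running the algorithm of \thm{vcdangling} with such a $c$ solves $H$-subgraph containment in $\tilde{O}(n^{3/2-1/(\vc(H')+1)})$ queries, where $H'$ is $H$ stripped of its non-isolated-edge degree-one vertices.

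Second, for a $k$-path $H$ with $k \ge 3$, the two endpoints of $H$ are its only degree-one vertices and both lie on non-isolated edges (since $k \ge 2$), so $H'$ is the $(k-2)$-path. \prop{vcpathclaw} then gives $\vc(H') = \lceil (k-2)/2 \rceil$, and checking the two parities of $k$ shows $\vc(H')+1 = \lceil k/2 \rceil$ in both cases, yielding the stated exponent. For $k \in \{3,4\}$ this already gives $\tilde O(n)$; for $k = 1$ the graph $H$ is a single isolated edge so $H' = H$ has $\vc = 1$, again giving $\tilde O(n)$; and for $k = 2$ the formula of \thm{vcdangling} degenerates (since $H'$ is then a single isolated vertex), but the simpler \thm{vcbasic} applied directly to $H$, for which $\vc(H) = 1$, still gives an $\tilde O(n)$ algorithm. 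The matching $\Omega(n)$ lower bound in each of these four cases follows from \thm{subgraphlb}, since the complement of $k$-path containment is subgraph closed.

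Third, for the $\{d_1,d_2,d_3\}$-claw $H$ with $d_1+d_2+d_3 > 3$, the only degree-one vertices are the three branch tips, and none lies on an isolated edge because the center has degree $3$. Deleting these tips gives $H' = \{d_1-1,d_2-1,d_3-1\}$-claw when all $d_i \ge 2$; any branch with $d_i = 1$ simply vanishes, in which case $H'$ degenerates to a shorter claw or to a path through the center. Applying \prop{vcpathclaw} to $H'$: when all $d_i$ are odd, all $d_i-1$ are even, so $\vc(H') = \sum_i (d_i-1)/2 = (d_1+d_2+d_3-3)/2$, giving $\vc(H')+1 = (d_1+d_2+d_3-1)/2$ and hence the first claimed exponent; otherwise some $d_i$ is even so at least one $d_i-1$ is odd, and the mixed-parity formula of \prop{vcpathclaw} combined with the identity $\lceil (d_i-2)/2 \rceil = \lceil d_i/2 \rceil - 1$ (valid for $d_i \ge 2$) yields $\vc(H') = \sum_i \lceil d_i/2 \rceil - 2$ and hence the second claimed exponent. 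The only real obstacle is the bookkeeping for the degenerate cases with some $d_i = 1$, where a branch disappears when passing to $H'$; but each such case can be verified individually and agrees with the stated formula.
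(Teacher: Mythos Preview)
Your proposal is correct and follows essentially the same approach as the paper, which treats the corollary as an immediate consequence of \thm{vcdangling} together with the vertex cover counts in \prop{vcpathclaw}. The only minor difference is that you justify the sparsity hypothesis via the Erd\H{o}s--Gallai extremal bound for paths (and the analogous linear bound for trees), whereas the paper implicitly relies on the fact that for paths and subdivided claws $H$-subgraph containment coincides with $H$-minor containment, so the complement is minor closed and hence sparse by \thm{mcsparse}; either route works.
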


Note that even the improved result from \thm{vcdangling} has zero checking cost.  We can sometimes obtain a further improvement by performing nontrivial checking.  For example, we can detect $7$-paths in the same complexity that \cor{pathfinding} gives for $5$- and $6$-paths and we can detect $9$- and $10$-paths in the same complexity that \cor{pathfinding} gives for $7$- and $8$-paths.

\begin{theorem}\label{thm:paths7-10}
$H$-subgraph containment has query complexity $\tilde O(n^{7/6})$ if $H$ is a $7$-path and $\tilde O(n^{5/4})$ if $H$ is a $9$- or $10$-path. Furthermore, for $k>10$, the query complexity of finding $k$-paths is $\tilde O(n^{\frac{3}{2} - \frac{1}{\ceil{k/2}+1}})$.
\end{theorem}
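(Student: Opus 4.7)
The plan is to extend the vertex-cover-based quantum walk of \thm{vcbasic} and \thm{vcdangling} by permitting a nontrivial checking step. In both earlier results the algorithm stores enough of a (dangling-trimmed) vertex cover that the check $C$ becomes free; here I deliberately store \emph{less}, shrinking the walk state and hence reducing setup and update costs, at the price of a small but nonzero checking cost implemented by a nested Grover search. Because \thm{qwalksearch} charges the check only by an additive $C/\sqrt{\epsilon}$ term, a check of moderate cost can still improve the overall complexity if the stored core is correspondingly smaller.

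For the $7$-path I would first apply color-coding as in \thm{vcdangling}, assigning each vertex of $G$ a random label in $\{1,\ldots,8\}$ and paying only a constant factor to condition on the correct labeling for some copy of $v_1 v_2 \ldots v_8$. I would then walk on a product of two Hamming graphs indexed by a pair of non-adjacent interior vertices of the putative path (for concreteness, the vertices playing the roles of $v_3$ and $v_6$), each drawn from a prescribed degree class near $q_3,q_6$; for each stored vertex I would record its full neighbor list together with its second-neighbor color profile, exactly as in \thm{vcdangling}. The length-$2$ dangling tails on either end of the path are then visible without further queries. What remains uncheckable from stored data is the middle edge $v_4 v_5$; this I would verify in the check step by a Grover search ranging over candidate pairs $(v_4,v_5)$, with $v_4$ a color-$4$ neighbor of $v_3$ and $v_5$ a color-$5$ neighbor of $v_6$, testing each pair with a single query to the black box. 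Sparsity ($\sum_i t_i q_i = O(n)$) keeps the pair-Grover cost under control. Carrying through the parameter balancing of \thm{vcbasic} with effective $\vc = 2$ in \eq{valueofk}, but now including the extra $C/\sqrt{\epsilon}$ term, yields the $\tilde O(n^{7/6})$ bound.

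For the $9$- and $10$-paths I would use an analogous scheme with three stored interior vertices, with gaps chosen so that only one bridging edge (or short substructure) requires nested Grover verification; with effective $\vc = 3$ the same balancing produces $\tilde O(n^{5/4})$. For $k>10$ the same template, with $\ceil{k/2}$ stored interior vertices spread along the path and a single nested Grover check closing one gap, gives $\tilde O(n^{3/2-1/(\ceil{k/2}+1)})$.

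The main obstacle will be the simultaneous parameter optimization: with $C\neq 0$ the choice in \eqrange{choiceofvalues}{valueofk} no longer automatically balances the three terms $S$, $U/\sqrt{\delta\epsilon}$, and $C/\sqrt{\epsilon}$, and one must reoptimize and verify that the additional check term does not dominate the savings from the smaller stored core. A secondary technical point is to verify that the nested Grover check can be implemented at the claimed cost (using the sparsity bound to control the number of candidate pairs in the middle search, and boosting success probabilities by $O(\log n)$ factors so the full algorithm retains bounded error across $\poly(n)$ walk steps), and that the choice of which interior vertices of the $k$-path to store admits, for every $k>10$, a single-gap nested check compatible with the second-neighbor color profile available in the stored state.
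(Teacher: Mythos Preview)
Your approach is essentially the paper's: augment the walk of \thm{vcdangling} by storing fewer core vertices and paying a nonzero checking cost $C$ via a Grover search for the missing bridging edge(s), with sparsity keeping $C/\sqrt{\epsilon}$ under control. For the $7$-path your sketch matches the paper's construction exactly (store the vertices playing $v_3,v_6$, use second-neighbor colors for the tails, Grover-check the single edge $v_4v_5$). The paper computes $C/\sqrt{\epsilon}=O(\sqrt{t_1q_1\,t_2q_2})=O(n)$ directly from $t_iq_i=O(n)$, so the old choices in \eq{choiceofvalues}--\eq{valueofk} already balance; your anticipated ``main obstacle'' of reoptimization does not materialize.

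For the $9$/$10$-paths and for $k>10$ the paper's details differ from your sketch. The paper stores positions $2,5,8$ of the trimmed $(k-2)$-path (and then all subsequent even positions for $k>10$), which leaves \emph{two} bridging $3$-paths to verify, not one; the check first fixes a candidate middle stored vertex and then Grovers outward on each side, giving $C=O(\sqrt{k_2q_2}(\sqrt{k_1q_1}+\sqrt{k_3q_3}))$ and $C/\sqrt{\epsilon}=O(n^{9/8})$. For $k>10$ this amounts to storing $l=\lceil k/2\rceil-2$ vertices (not $\lceil k/2\rceil$), and the proof actually establishes $\tilde O(n^{3/2-1/(\lceil k/2\rceil-1)})$; the ``$+1$'' in the theorem statement is evidently a typo (cf.\ \tab{paths} and the last line of the proof). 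Your proposed count of $\lceil k/2\rceil$ stored vertices already exceeds $\vc(H')=\lceil k/2\rceil-1$ and would yield no improvement over \cor{pathfinding}, so that part of the plan needs to be revised to use fewer stored vertices and two bridging checks.
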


\begin{proof}
First we describe the algorithm for 7-paths. As before, the first step is to accept graphs that are sufficiently dense. Then, following \thm{vcdangling}, let $H'$ be the graph with the degree-1 vertices deleted, namely, a 5-path. The minimal vertex cover of a 5-path contains 3 vertices. Instead we store only two vertices: the second and fifth vertex of the 5-path. Since these vertices are exactly one vertex from the end of the 5-path, the color-coding trick from \thm{vcdangling} allows us to find the deleted vertices.

We store the second and fifth vertex of the 5-path together with complete information about their neighborhoods. However, this does not allow us to determine if the two vertices are connected by a 3-path, which is what we require. We test this condition during the checking step, adding a nontrivial checking cost to the algorithm. Since we know all the neighbors of these vertices, we only have to check if there is an edge between a neighbor of the second vertex and a neighbor of the fifth vertex to determine if there is a 3-path between the vertices. As before, we store $k_1$ vertices of degree near $q_1$ and $k_2$ vertices of degree near $q_2$. These vertices can have at most $k_1q_1$ and $k_2q_2$ neighbors respectively. Searching for an edge over these possibilities requires $C=O(\sqrt{k_1q_1k_2q_2})$ queries.

According to \thm{qwalksearch}, the query complexity overhead due to the checking step is $C/\sqrt{\epsilon}$. Since $\epsilon=k_1/t_1 \times k_2/t_2$, the overhead is $O(\sqrt{t_1q_1t_2q_2})$, which is $O(n)$ since $t_iq_i = O(n)$. Thus the total query complexity is $O(n^{\frac{3}{2}-\frac{1}{2+1}})+O(n) = O(n^{7/6})$.

The algorithm for 10-paths (which immediately implies an algorithm for 9-paths) is slightly different. First, we accept graphs that are sufficiently dense. Then we delete the degree-1 vertices, which leaves us with an 8-path. We store the second, fifth, and eighth vertices of the path. As before, the deleted vertices are detected using the color-coding trick.

Suppose the second, fifth and eighth vertices have degrees near $q_1$, $q_2$, and $q_3$, respectively. We have to check if there is a 3-path connecting the second vertex to the fifth vertex, and another connecting the fifth vertex to the eighth vertex. For the checking step, we first fix an arbitrary vertex as the fifth one (there are $k_2$ such vertices that can be fixed), and then search over the neighbors of this vertex for a neighbor of the second and eighth vertex. Since the neighbors of these vertices are known, this costs $O(\sqrt{q_2 k_1q_1} + \sqrt{q_2 k_3q_3})$. Now we search over the possible fifth vertices to see if any of the $k_2$ possibilities works, giving a total checking cost of  $C=O(\sqrt{k_2q_2}(\sqrt{k_1q_1} + \sqrt{k_3q_3}))$. Since this contributes $C/\sqrt{\epsilon}$ to the total query complexity, and  $\epsilon=k_1k_2k_3/t_1t_2t_3$, we get $C/\sqrt{\epsilon}= O(n(\sqrt{t_1/k_1} + \sqrt{t_3/k_3}))$, where we have used the fact that $q_it_i = O(n)$. Since \eq{choiceofvalues} and \eq{valueofk} imply that $k_i = \sqrt{t_i} n^{1/4}$, and $t_i=O(n)$, we have $C/\sqrt{\epsilon} = O(n^{9/8}) \le O(n^{5/4})$, giving the claimed total query complexity of $\tilde O(n^{5/4})$.

Finally, for paths of length greater than 10, we use a similar approach. First, we delete the degree-1 vertices of the $k$-path to get a $(k-2)$-path. We then store the second, fifth, eighth, and all subsequent even numbered vertices. Thus, for a $k$-path, we store $l = \ceil{k/2} - 2$ vertices. The degree-1 vertices are handled using the color-coding trick, and since we store the neighbors of all the selected vertices, the vertices on the path beyond the eighth vertex are known. We only have to check if there is a 3-path connecting the second vertex to the fifth vertex, and the fifth vertex to the eighth vertex as before. The checking cost remains the same as before, $C=O(\sqrt{k_2q_2}(\sqrt{k_1q_1} + \sqrt{k_3q_3}))$. However, the value of epsilon is now $\epsilon=\prod_{i=1}^l k_i/t_i$. Using \eq{choiceofvalues}, \eq{valueofk}, and $t_i=O(n)$ as before, we get 
\begin{equation}
\frac{C}{\sqrt{\epsilon}} = \sqrt{\frac{\prod_i t_i}{\prod_i k_i}}\(\sqrt{k_2q_2}\(\sqrt{k_1q_1} + \sqrt{k_3q_3}\)\)=\frac{n^{\frac{l-2}{4}}}{(n^{\frac{1}{2} - \frac{1}{l+1}})^{\frac{l-2}{2}}} \times O(n)=O(n^{1+\frac{l-2}{2l+2}}),
\end{equation}
which is $O(n^{\frac{3}{2}-\frac{1}{l+1}})$ for all $l$. Thus the update and setup costs dominate, which gives us an $\tilde O(n^{\frac{3}{2}-\frac{1}{\ceil{k/2}-1}})$ query algorithm for $k$-paths when $k>10$.
\end{proof}

The following table summarizes the best upper bounds presented for detecting a path of length $k$. Note that the upper bounds are monotone in $k$ since finding a $(k+1)$-path is at least as hard as finding a $k$-path.

\begin{table}[h]
\capstart
\begin{center}
    \begin{tabular}{ | c | c | c | c | c | c  | c | c | }
    \hline
    \centering{$k$} & 1 -- 4 & 5 & 6 & 7 & 8 & 9 & $k\geq 10$ \\ \hline
    Upper bound & $\tilde \Theta(n)$ & $\tilde O(n^{7/6})$ & $\tilde O(n^{7/6})$ & $\tilde O(n^{7/6})$ & 
	$\tilde O(n^{5/4})$ & $\tilde O(n^{5/4})$ & 
	$\tilde O\(n^{\frac{3}{2}-\frac{1}{\ceil{k/2}-1}}\)$\\
    \hline
    \end{tabular}
\end{center}
\caption{A summary of our algorithms for the $k$-path containment problem.
\label{tab:paths}}
\end{table}

\subsection{Relaxing sparsity}

In the previous section, we focused on the case of sparse graphs, since this is the relevant case for minor-closed graph properties.  However, our algorithms easily generalize to the case where the number of edges is at most some prescribed upper bound, which need not be linear in $n$, leading to further applications. 

\begin{theorem}\label{thm:vcpseudosparse}
Let $\P$ be the property that an $n$-vertex graph either has more than $\bar m$ edges, where $\bar m = \Omega(n)$, or contains a given subgraph $H$.  Let $H'$ be the graph obtained by deleting all degree-one vertices of $H$ that are not part of an isolated edge.  Then $Q(\P) = \tilde O(\sqrt{\bar m} n^{1-\frac{1}{\vc(H')+1}})$.
\end{theorem}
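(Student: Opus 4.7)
The plan is to follow the blueprint of \thm{vcdangling}, with the sparsity constraint $m \le cn$ replaced by the weaker bound $m \le \bar m$ where $\bar m = \Omega(n)$.  First I would apply \cor{sparsedetect} with $f(n) = \bar m$ to decide whether the graph has more than $\bar m$ edges, accepting if so; since $\bar m = \Omega(n)$, this costs $O(\sqrt{n^2/\bar m}) = O(\sqrt n)$ queries, negligible compared to the target bound.

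If the graph is not rejected for being too dense, I would proceed as in \thm{vcdangling}: assign random labels in $\{1,\ldots,\ell,\ell+1\}$ to the vertices of $G$ (where $\ell$ is the number of degree-one vertices of $H$) to handle the dangling degree-one vertices of $H$ via color coding, and run a quantum walk search on a tensor product of Hamming graphs, one for each vertex of a minimal vertex cover of $H'$ (of size $v \defeq \vc(H')$), restricted to vertices of $G$ of degree near some value $q_i$ (iterated over a geometric progression).  For each stored vertex I would store its full neighbor list together with, for each label $c$, a single bit indicating whether it has a second neighbor with label $c$; a state is marked if its stored vertices contain a vertex cover of a copy of $H'$ whose missing dangling vertices are consistent with the random labelling.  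As before the checking cost is $C=0$.

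The only change in the analysis is the per-vertex cost of computing the neighborhood and second-neighbor-label flags.  This still costs $O(\sqrt{n q_i})$ queries per vertex, but now $t_i q_i = O(\bar m)$ rather than $O(n)$, so $\sqrt{n q_i} = O(\sqrt{n\bar m/t_i})$, which dominates the Grover preparation cost $O(n/\sqrt{t_i})$ whenever $\bar m \ge n$.  Therefore $S = O(\sum_i k_i \sqrt{n\bar m/t_i})$ and $U = O(\sum_i \alpha_i \sqrt{n\bar m/t_i})$, and \thm{qwalksearch} gives a query bound of the same shape as in \thm{vcbasic} but with $n$ replaced by $\sqrt{n\bar m}$ in the overall prefactor:
\begin{equation*}
  O\!\left(\sqrt{n\bar m}\left[\sum_i \frac{k_i}{\sqrt{t_i}} + \sqrt{\max_i \frac{k_i}{\alpha_i}\prod_i \frac{t_i}{k_i}}\,\sum_i \frac{\alpha_i}{\sqrt{t_i}}\right]\right).
\end{equation*}

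Making the same parameter choice as in \thm{vcbasic}, namely $\alpha_i/\alpha_j = k_i/k_j = \sqrt{t_i/t_j}$, $\alpha_1 = 1$, and $k_1 = \sqrt{t_1}\, n^{1/2 - 1/(v+1)}$, and using the bound $\prod_i t_i \le t_1\, n^{v-1}$, both terms inside the brackets evaluate to $n^{1/2 - 1/(v+1)}$, yielding a total of $\tilde O(\sqrt{\bar m}\, n^{1-1/(v+1)})$ as claimed.  I do not anticipate a genuine obstacle: the integrality and error-probability considerations from \thm{vcbasic} and the color-coding analysis of \thm{vcdangling} carry over verbatim since neither depends on the specific form of the sparsity bound.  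The only delicate point is noting that once $\bar m$ may exceed $n$, the neighbor-computation cost, which was a lower-order term in \thm{vcdangling}, becomes the dominant contribution and accounts precisely for the $\sqrt{\bar m/n}$ overhead in the final answer.
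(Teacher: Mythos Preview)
Your proposal is correct and follows essentially the same approach as the paper: you invoke the algorithm of \thm{vcdangling} with the weaker edge bound $m=O(\bar m)$, observe that the neighbor-list cost $O(\sqrt{nq_i})=O(\sqrt{n\bar m/t_i})$ now dominates the Grover preparation cost, and optimize with the same parameter choices to obtain the claimed bound. The paper's proof is identical in structure and detail.
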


Note that \thm{vcpseudosparse} subsumes \thm{vcdangling}, which in turn subsumes \thm{vcbasic}.

\begin{proof}
  We apply the algorithm from the proof of \thm{vcdangling} (which in turn depends on the analysis in the proof of \thm{vcbasic}), replacing the promise that $m = O(n)$ with the promise that $m = O(\bar m)$, which implies that $q_i = O(\bar m/t_i)$.  In this algorithm, the setup cost is
\begin{align}
  S &= O\left(\sum_i k_i n/\sqrt{t_i} + \sum_i k_i \sqrt{n q_i}\right) \\
    &= O\left(\sum_i \frac{k_i}{\sqrt{t_i}}(n + \sqrt{n \bar m})\right) \\
    &= O\left(\sqrt{n \bar m}\sum_i \frac{k_i}{\sqrt{t_i}}\right),
\end{align}
and by a similar calculation, the update cost is
\begin{align}
  U
  &= O\left(\sqrt{n \bar m} \sum_i \frac{\alpha_i}{\sqrt{t_i}}\right);
\end{align}
the checking cost remains $C=0$, and the spectral gap and fraction of marked vertices are also unchanged.  Again choosing
\begin{align}
  \frac{\alpha_i}{\alpha_j} = \frac{k_i}{k_j} = \sqrt{\frac{t_i}{t_j}},
\end{align}
the query complexity from \thm{qwalksearch} is
\begin{align}
  O\left(\sqrt{n \bar m}\left[ \frac{k_1}{\sqrt{t_1}} + \sqrt{\frac{k_1}{\alpha_1} \prod_i \frac{t_i}{k_i}} \frac{\alpha_1}{\sqrt{t_1}} \right]\right)
  &=
  O\left(\sqrt{n \bar m}\left[ \frac{k_1}{\sqrt{t_1}} + \sqrt{\alpha_1 \left(\frac{\sqrt{t_1 n}}{k_1}\right)^{\vc(H)-1}} \right]\right).
\end{align}
(cf.~\eqrange{vcbasicqueries_start}{vcbasicqueries_end}).  Taking
$\alpha_1=1$ and
\begin{align}
  k_1 = \sqrt{t_1} n^{\frac{1}{2}-\frac{1}{\vc(H')+1}}
\end{align}
gives a query complexity of
\begin{align}
  O(\sqrt{\bar m}n^{1-\frac{1}{\vc(H')+1}})
\end{align}
and the result follows.
\end{proof}

In conjunction with the K\"{o}v\'{a}ri--S\'{o}s--Tur\'{a}n theorem~\cite{KST54}, this algorithm has applications to subgraph-finding problems that are not equivalent to minor-finding problems. 

\begin{theorem}[K\"{o}v\'{a}ri--S\'{o}s--Tur\'{a}n]
\label{thm:KST}
If a graph $G$ on $n$ vertices does not contain $K_{s,t}$ as a subgraph, where $1 \leq s \leq t$, then $|E(G)| \leq c_{s,t} \, n^{2-\frac{1}{s}}$, where $c_{s,t}$ is a constant depending only on $s$ and $t$.
\end{theorem}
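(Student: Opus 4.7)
The plan is to prove the bound by double counting the number of pairs $(v,S)$ where $v\in V(G)$ and $S$ is an $s$-element subset of the neighborhood $N(v)$. Writing $N(S)\defeq \{v : S\subseteq N(v)\}$ for an $s$-subset $S\subseteq V(G)$, this quantity equals both $\sum_{v\in V(G)} \binom{d(v)}{s}$ and $\sum_{S} |N(S)|$, where the latter sum ranges over all $\binom{n}{s}$ $s$-subsets of $V(G)$.

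The first key step is to use the $K_{s,t}$-freeness hypothesis to bound $|N(S)|$. If any $s$-subset $S$ had $|N(S)| \ge t$, then $S$ together with any $t$ vertices in $N(S)$ would form a $K_{s,t}$ subgraph (where we may assume $s \le t$). Hence $|N(S)| \le t-1$ for every $s$-subset $S$, which yields
\begin{equation}
\sum_{v\in V(G)} \binom{d(v)}{s} \;=\; \sum_{S} |N(S)| \;\le\; (t-1)\binom{n}{s}.
\end{equation}
Next I would lower bound the left-hand side by convexity. Since $x \mapsto \binom{x}{s}$ is convex on $x \ge s-1$ (extending the binomial coefficient to real arguments by $\binom{x}{s} = x(x-1)\cdots(x-s+1)/s!$), Jensen's inequality gives
\begin{equation}
\sum_{v} \binom{d(v)}{s} \;\ge\; n \binom{\bar d}{s},
\end{equation}
where $\bar d = \frac{1}{n}\sum_v d(v) = 2m/n$ is the average degree. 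Combining with the previous inequality produces $n\binom{2m/n}{s} \le (t-1)\binom{n}{s}$, and I would then solve for $m$: using $\binom{n}{s} \le n^s/s!$ and $\binom{2m/n}{s} \ge (2m/n - s+1)^s/s!$, one obtains $(2m/n - s+1)^s \le (t-1) n^s$, so $2m/n \le (t-1)^{1/s} n^{1-1/s} + s - 1$, and thus $m \le c_{s,t}\, n^{2-1/s}$ for a constant $c_{s,t}$ depending only on $s$ and $t$.

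The main obstacle is handling the regime where $\bar d < s-1$, which would make the use of $\binom{\bar d}{s}$ ill-defined or negative. This is easily circumvented by observing that if $\bar d < s-1$, then $m < (s-1)n/2 = O(n)$, which already implies $m = O(n^{2-1/s})$ since $s \ge 1$; so one handles this case separately and assumes $\bar d \ge s-1$ for the main argument. A minor technical point is also to verify convexity of $\binom{x}{s}$ on the relevant interval, which follows since the $s$-th derivative of $x(x-1)\cdots(x-s+1)$ is $s!>0$ and the lower-order derivatives behave appropriately for $x \ge s-1$. These details are routine, and the final constant $c_{s,t}$ can be made explicit if desired.
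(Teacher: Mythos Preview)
The paper does not actually prove this theorem; it is simply cited as the classical K\"{o}v\'{a}ri--S\'{o}s--Tur\'{a}n result and then applied. Your argument is the standard double-counting proof and is correct in outline.

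One small technical wrinkle: separating off the case $\bar d < s-1$ is not quite enough to justify Jensen, since even when $\bar d \ge s-1$ some individual degrees $d(v)$ may be below $s-1$, and the polynomial $x\mapsto\binom{x}{s}$ is not convex on all of $[0,\infty)$ for $s\ge 3$. The standard fix is to replace $\binom{x}{s}$ by the function that equals $\binom{x}{s}$ for $x\ge s-1$ and $0$ for $x<s-1$; this is convex on $[0,\infty)$ and agrees with $\binom{d(v)}{s}$ at all nonnegative integers, so Jensen applies directly. Also, noting that the $s$-th derivative of $x(x-1)\cdots(x-s+1)$ equals $s!$ does not by itself establish convexity (which concerns the second derivative); but convexity on $[s-1,\infty)$ is an easy direct check. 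These are routine adjustments and do not affect the conclusion.
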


In particular, we use this theorem to show the following.

\begin{theorem}\label{thm:bipartite}
If $H$ is a $d$-vertex bipartite graph, then $H$-subgraph containment has quantum query complexity $\tilde O(n^{2-\frac{1}{d}-\frac{2}{d+2}}) = \tilde O(n^{2-\frac{3d+2}{d(d+2)}})$.
\end{theorem}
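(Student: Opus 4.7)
The plan is to combine the K\"{o}v\'{a}ri--S\'{o}s--Tur\'{a}n theorem (\thm{KST}) with \thm{vcpseudosparse}. The two ingredients we need are a sparsity bound on graphs avoiding $H$ and a bound on $\vc(H')$, where $H'$ is $H$ minus its non-isolated-edge degree-one vertices.

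First I would establish the sparsity bound. Let $(A,B)$ be a bipartition of $H$ with $|A|\le|B|$ and $|A|+|B|=d$, so $|A|\le\lfloor d/2\rfloor$. Since $H\le_S K_{|A|,|B|}$, any graph $G$ that does not contain $H$ as a subgraph cannot contain $K_{|A|,|B|}$ either. By \thm{KST}, $|E(G)|\le c_{|A|,|B|}\,n^{2-1/|A|}=O(n^{2-2/d})$. Setting $\bar m:=c\,n^{2-2/d}$ for a sufficiently large constant $c$, the property $\P$ of \thm{vcpseudosparse} (``more than $\bar m$ edges or contains $H$'') then coincides with plain $H$-subgraph containment: any $G$ with $|E(G)|>\bar m$ is forced to contain $H$ by the sparsity bound, and any $G$ containing $H$ trivially lies in $\P$. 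Note that $\bar m=\Omega(n)$ for $d\ge 2$, so the hypothesis of \thm{vcpseudosparse} is satisfied in the nontrivial range.

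Next I would bound $\vc(H')$. Because $H$ is bipartite, its smaller side $A$ is itself a vertex cover of $H$, so $\vc(H)\le|A|\le d/2$. Intersecting any vertex cover $C$ of $H$ with $V(H')$ gives a vertex cover of $H'$: each edge of $H'$ is an edge of $H$, so has an endpoint in $C$, and since both of its endpoints lie in $V(H')$, this endpoint is in $C\cap V(H')$. Hence $\vc(H')\le\vc(H)\le d/2$, and $\vc(H')+1\le (d+2)/2$, so $\frac{1}{\vc(H')+1}\ge\frac{2}{d+2}$.

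Feeding these two estimates into \thm{vcpseudosparse} yields
\[
  Q(\P)
  = \tilde O\!\left(\sqrt{\bar m}\;n^{1-\frac{1}{\vc(H')+1}}\right)
  = \tilde O\!\left(n^{1-1/d}\cdot n^{1-2/(d+2)}\right)
  = \tilde O\!\left(n^{2-\frac{1}{d}-\frac{2}{d+2}}\right),
\]
as claimed. There is no real technical obstacle here: K\"ov\'ari--S\'os--Tur\'an supplies the sparsity, bipartiteness supplies a small vertex cover, and \thm{vcpseudosparse} does the algorithmic heavy lifting. The only thing worth noticing is that the two exponents match exactly in the balanced-bipartition worst case $|A|\approx d/2$, in which both the sparsity bound and the vertex-cover bound saturate at $d/2$, producing the claimed exponent.
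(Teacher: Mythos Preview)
Your proposal is correct and follows essentially the same approach as the paper: both use K\"{o}v\'{a}ri--S\'{o}s--Tur\'{a}n applied to $K_{|A|,|B|}$ (with $(A,B)$ a bipartition of $H$, $|A|\le d/2$) to obtain $\bar m = O(n^{2-2/d})$, bound $\vc(H')\le |A|\le d/2$ via the smaller side of the bipartition, and then invoke \thm{vcpseudosparse}. Your write-up is actually slightly more explicit than the paper's in justifying $\vc(H')\le\vc(H)$ and in identifying the property $\P$ of \thm{vcpseudosparse} with $H$-subgraph containment, but the substance is the same.
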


\begin{proof}
By \thm{KST}, a graph that does not contain $K_{s,t}$ (where $1 \le s \le t$) has at most $c_{s,t} \, n^{2-\frac{1}{s}}$ edges, so a graph with more than $c_{s,t} \, n^{2-\frac{1}{s}}$ edges must contain $H$.  \thm{qcounting} shows that we can determine whether the input graph has more than $2c_{s,t} \, n^{2-\frac{1}{s}}$ edges using $o(n)$ queries.  If so, we accept; otherwise, we apply \thm{vcpseudosparse} with $\bar m = 2c_{s,t} \, n^{2-\frac{1}{s}}$ and $\vc(H') \le d/2$, giving the desired result.
\end{proof}

Recall that for $d>3$, Theorem 4.6 of \cite{MSS07} gives an upper bound of $O(n^{2-\frac{2}{d}})$ for finding a $d$-vertex subgraph.  For bipartite subgraphs, \thm{bipartite} is a strict improvement.

Note that a better bound may be possible by taking the structure of $H$ into account.  In general, if $H$ is a bipartite graph with the \th{i} connected component having vertex bipartition $V_i \cup U_i$ with $1 \le |V_i| \le |U_i|$, then we can replace $d/2$ by $\sum_i |V_i|$, since a graph that does not contain $K_{\sum_i |V_i|,\sum_i |U_i|}$ does not contain $H$, and $\vc(H') \le \vc(H) = \sum_i |V_i|$.  As a simple example, if $H=K_{1,t}$ is a star on $t+1$ vertices, then $H$-subgraph containment can be solved with $\tilde O(n)$ quantum queries (which is essentially optimal due to \thm{subgraphlb}).  This shows that the quantum query complexity of deciding if a graph is $t$-regular is $\tilde\Theta(n)$.  (In fact it is not hard to show that the query complexity of this problem is $\Theta(n)$.)

As another example, consider the property of containing a fixed even-length cycle, i.e., $C_{2l}$-subgraph containment.  Since $C_{2l}$ is bipartite, this is a special case of the problem considered above.  \thm{bipartite} gives an upper bound of $\tilde O(n^{2-\frac{3l+1}{2l(l+1)}})$, which approaches $O(n^2)$ as the cycle gets longer (i.e., as $l \rightarrow \infty$). As concrete examples, it gives upper bounds of $\tilde O(n^{1.41\overline{6}})$ for $C_4$ containment and $\tilde O(n^{1.58\overline{3}})$ for $C_6$ containment.

For even cycles of length 6 or greater, this estimate can be significantly improved by replacing \thm{KST} with the following result of Bondy and Simonovits~\cite{BS74}.

\begin{theorem}[Bondy--Simonovits]
\label{thm:BS}
Let $G$ be a graph on $n$ vertices. For any $l \geq 1$, if $|E(G)| > 100ln^{1+1/l}$ then $G$ contains $C_{2l}$ as a subgraph.
\end{theorem}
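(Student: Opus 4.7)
The plan is to prove the contrapositive: assuming $G$ has no $C_{2l}$, I would show $|E(G)| \leq 100l\, n^{1+1/l}$. This is the standard even-cycle theorem of extremal graph theory, and the strategy is a BFS-expansion plus path-counting argument.

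First I would reduce to a large-minimum-degree subgraph. Let $d \defeq |E(G)|/n$, so the average degree of $G$ is $2d$. By iteratively deleting any vertex whose current degree is below $d$, one obtains a subgraph $G'$ with minimum degree $\delta(G') \geq d$ and with $|E(G')| \geq |E(G)|/2 > 50l\,n^{1+1/l}$, and $G'$ still contains no $C_{2l}$. It therefore suffices to derive a contradiction when $d \geq 50 l\, n^{1/l}$.

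Next I would pick any root vertex $v \in V(G')$ and run breadth-first search, obtaining levels $L_0,L_1,\ldots,L_l$ where $L_i$ is the set of vertices at distance exactly $i$ from $v$. The main step is to count walks of length $l$ starting at $v$ in two ways. On the one hand, using $\delta(G') \geq d$ together with the fact that edges inside or going backwards from a level are limited (otherwise they would close a short even cycle which, after inflation using the tree paths to $v$, would produce a $C_{2l}$), the level sizes grow so that the number of such walks is at least roughly $d^l$. On the other hand, if $G'$ has no $C_{2l}$, then for any vertex $w$ there are at most $l-1$ internally disjoint paths of length $l$ between $v$ and $w$ — for two such paths sharing no internal vertex would form a cycle of length $2l$. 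Turning this structural bound into a walk bound gives an upper count of at most $(l-1)\,n$. Combining the two estimates yields $d^l \leq C \cdot l \cdot n$ for some absolute $C$, i.e., $d \leq C^{1/l}\,l^{1/l} n^{1/l} \leq 100 l\, n^{1/l}$, contradicting our lower bound on $d$.

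The main obstacle is handling the loss from edges within a single BFS level or going one level backwards: one needs that such edges are rare enough that the expansion rate at each level is close to $d$. This is the heart of the Bondy--Simonovits argument, and is typically done by showing that a level with too many internal edges already contains two internally disjoint paths of length $l$ between common endpoints through the BFS tree, producing a $C_{2l}$. The slack in constants (the factor $100$ rather than something smaller) is chosen precisely to absorb these lower-order losses, as well as the factor $2$ from the minimum-degree reduction and the $l^{1/l}$ factor that is at most $2$.
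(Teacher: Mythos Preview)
The paper does not prove this statement; it is quoted as a classical result of Bondy and Simonovits and cited to~\cite{BS74} without proof. So there is no in-paper argument to compare against.

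As for your sketch itself: the overall shape (pass to a subgraph of large minimum degree, then run a BFS-layer expansion and count $l$-paths two ways) is indeed the Bondy--Simonovits strategy, but several of the stated intermediate facts are off. First, the minimum-degree reduction gives you a nonempty subgraph $G'$ with $\delta(G') \ge d$, but not the bound $|E(G')| \ge |E(G)|/2$ that you assert; fortunately only the minimum-degree conclusion is needed. Second, your path-counting lemma is mis-stated: two internally disjoint $v$--$w$ paths of length exactly $l$ already form a $C_{2l}$, so the correct bound would be ``at most one,'' not ``at most $l-1$''; but the actual argument does not proceed by counting internally disjoint paths at all --- it bounds the number of (arbitrary) $v$--$w$ paths of length $l$ via a delicate structural lemma about how paths can branch inside the BFS layers without creating an even cycle of the right length. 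Third, you correctly flag ``edges within a single BFS level'' as the main obstacle, but this is where essentially all of the work in Bondy--Simonovits lives, and your sketch defers it entirely; the real argument requires a nontrivial lemma (often phrased in terms of forbidding a $\theta$-graph of appropriate parameters) that does not follow from the informal reasoning you give.

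In short: reasonable high-level plan, but the concrete lemmas you state are incorrect as written, and the hard step is not addressed. Since the paper only cites the result, none of this affects its use downstream.
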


Using this upper bound instead of \thm{KST} in \thm{bipartite} gives us the following upper bound for even cycles.

\begin{theorem}
\label{thm:evencycles}
The $C_{2l}$-subgraph containment problem can be solved using $\tilde O(n^{\frac{3}{2}-\frac{l-1}{2l(l+1)}})$ queries.
\end{theorem}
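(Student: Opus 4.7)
The plan is to follow the structure of the proof of \thm{bipartite}, substituting the density bound of \thm{BS} for that of \thm{KST}. First I would use approximate quantum counting (via \cor{sparsedetect}) to test whether the input graph $G$ satisfies $|E(G)| > 200\,l\,n^{1+1/l}$, accepting in that case. By \thm{BS}, any such graph contains $C_{2l}$ as a subgraph, so this step is sound; its cost is $o(n)$ queries, which is negligible compared with the claimed bound.

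If the graph is not accepted in this first step, then we have the promise $|E(G)| \le \bar m$ with $\bar m = O(n^{1+1/l})$, and I would invoke \thm{vcpseudosparse} applied to $H = C_{2l}$. Since every vertex of $C_{2l}$ has degree $2$, no vertex is removed when forming $H'$, so $H' = C_{2l}$; and $C_{2l}$ is bipartite with parts of size $l$, so taking every other vertex around the cycle yields $\vc(H') = l$. Substituting these values into the bound of \thm{vcpseudosparse} gives
\[
  \tilde O\bigl(\sqrt{\bar m}\, n^{1 - \frac{1}{l+1}}\bigr)
  = \tilde O\bigl(n^{\frac{1}{2} + \frac{1}{2l} + 1 - \frac{1}{l+1}}\bigr)
  = \tilde O\bigl(n^{\frac{3}{2} - \frac{l-1}{2l(l+1)}}\bigr),
\]
using the identity $\frac{1}{2l} - \frac{1}{l+1} = -\frac{l-1}{2l(l+1)}$.

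There is essentially no obstacle here: the entire content of the theorem is the two ingredients (\thm{BS} and \thm{vcpseudosparse}) combined with an elementary arithmetic simplification. As a sanity check, for $l = 2$ this recovers the $\tilde O(n^{17/12})$ bound already given by \thm{bipartite} for $C_4$, while for $l \geq 3$ the new bound strictly improves on the K{\"o}v{\'a}ri--S{\'o}s--Tur{\'a}n-based estimate, consistent with the remark preceding the theorem that the improvement is meaningful for even cycles of length six or more.
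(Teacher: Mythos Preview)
Your proposal is correct and follows exactly the approach the paper intends: replace the K\"{o}v\'{a}ri--S\'{o}s--Tur\'{a}n density bound by the Bondy--Simonovits bound of \thm{BS} and plug $\bar m = O(n^{1+1/l})$ and $\vc(H')=\vc(C_{2l})=l$ into \thm{vcpseudosparse}. Your arithmetic and the sanity check for $l=2$ are both correct.
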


For $C_4$ containment, the upper bound given by this theorem matches the one given by \thm{bipartite}. However, for all longer even cycles, the bound given by this theorem is strictly better than the one given by \thm{bipartite}. For example, we get an upper bound of $\tilde O(n^{1.41\overline{6}})$ for $C_6$ containment, as compared to the upper bound of  $\tilde O(n^{1.58\overline{3}})$ given by \thm{bipartite}. Moreover, as the cycles get longer, the upper bound of \thm{evencycles} approaches $O(n^{3/2})$ instead of $O(n^2)$.

As in \thm{paths7-10}, we can sometimes improve over \thm{vcpseudosparse} by introducing a nontrivial checking cost.  The following is a simple example of such an algorithm for $C_4$ containment that performs better than the  $\tilde O(n^{1.41\overline{6}})$ query algorithm given by both \thm{bipartite} and \thm{evencycles}.

\begin{theorem}\label{thm:fourcycle}
$C_4$-subgraph containment can be solved in $\tilde O(n^{1.25})$ quantum queries.
\end{theorem}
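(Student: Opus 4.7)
The plan is to combine the quantum walk approach of \thm{vcpseudosparse} with a nontrivial checking step, in the spirit of \thm{paths7-10}. Setting $l=2$ in \thm{BS}, any graph with $m > 200\,n^{3/2}$ edges contains $C_4$; by \cor{sparsedetect} we can decide whether $m$ exceeds this threshold using $o(n)$ queries, accepting in that case, so we may henceforth assume $m = O(n^{3/2})$. We then iterate over the $O(\log n)$ dyadic ranges $q \in \{1,2,4,\ldots,n\}$, running for each a quantum walk that searches for a $C_4$ containing some vertex of degree near $q$.

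Since $\vc(C_4)=2$, \thm{vcpseudosparse} applied with $\bar m=n^{3/2}$ would only give $\tilde O(n^{17/12})$. The improvement comes from running a \emph{one}-coordinate walk in which the state is a single vertex $v$ of degree near $q$ with its neighborhood $N(v)$ stored, and the marking is performed by a nontrivial check that actively searches for the second vertex of the cover. Let $t_q = O(n^{3/2}/q)$ denote the number of vertices of degree near $q$. Taking $k=1$ on the Hamming graph $H(t_q,1)$ gives spectral gap $\delta=\Omega(1)$, and if a $C_4$ contains a vertex of degree near $q$ then $\epsilon = \Omega(1/t_q)$. Both the setup cost $S$ and the per-step update cost $U$ of sampling a vertex of the correct degree and computing its neighborhood are $O(n/\sqrt{t_q}+\sqrt{nq}) = O(n^{5/4}/\sqrt{t_q})$ by the analysis in the proof of \thm{vcbasic}, using $q = O(n^{3/2}/t_q)$.

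For the checking step, we Grover-search $z \in V$ for a vertex with $|N(v)\cap N(z)|\ge 2$, a predicate equivalent to $v$ lying in a $C_4$. For each candidate $z$, we decide it by two nested Grover searches over the stored set $N(v)$: first find some $x_1 \in N(v)\cap N(z)$, then search $N(v)\setminus\{x_1\}$ for a second such element. Since $|N(v)| = O(q)$, each inner search uses $O(\sqrt q)$ queries, and the outer search over $z$ contributes a factor of $\sqrt n$, giving $C = O(\sqrt{nq})$. \thm{qwalksearch} then yields per-range complexity
\begin{equation*}
S + \frac{1}{\sqrt\epsilon}\(\frac{U}{\sqrt\delta} + C\) = O\(\frac{n^{5/4}}{\sqrt{t_q}}\) + \sqrt{t_q}\(O\(\frac{n^{5/4}}{\sqrt{t_q}}\) + O(\sqrt{nq})\) = O(n^{5/4}),
\end{equation*}
since $n\,q\,t_q \le n \cdot n^{3/2} = n^{5/2}$. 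Summing over the $O(\log n)$ ranges and boosting the error of each walk yields the claimed $\tilde O(n^{5/4})$ bound.

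The main obstacle is arranging the nontrivial checking step so that $C = O(\sqrt{nq})$: this is what allows $\sqrt{t_q}\cdot C \le O(\sqrt{nqt_q}) = O(n^{5/4})$ via the constraint $qt_q = O(n^{3/2})$. The critical saving is that $N(v)$ is already stored as part of the walk state, so each common-neighbor test at a fixed $z$ reduces to searching a classical set of size $O(q)$ at cost $O(\sqrt q)$, rather than incurring a $\sqrt n$ factor per test.
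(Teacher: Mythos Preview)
Your proof is correct and follows essentially the same approach as the paper's. The paper also reduces to $m=O(n^{3/2})$ (using the K\"ov\'ari--S\'os--Tur\'an theorem rather than Bondy--Simonovits, but both give the same threshold for $C_4$), then iterates over degree classes $q$, searching among the $t_q$ vertices of degree near $q$ for one lying on a $C_4$, with the check ``is $v$ in a $C_4$?'' implemented by Grover-searching $z\in V$ for a vertex adjacent to two members of the stored $N(v)$ at cost $O(\sqrt{nq})$; the paper frames this directly as amplitude amplification rather than as a $k=1$ walk on $H(t_q,1)$, but explicitly remarks that Grover suffices in place of \thm{qwalksearch}, which is exactly your degenerate walk.
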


This may seem unexpected, since $C_4$ finding is a natural generalization of triangle finding to a larger subgraph.  Indeed, the previous best known quantum algorithm for $C_4$ finding used $\tilde O(n^{1.5})$ queries \cite{MSS07}, more than the $O(n^{1.3})$ queries for triangle finding.  Our improvement shows that $4$-cycles can be found in fewer quantum queries than in the best known quantum algorithm for finding $3$-cycles.

\begin{proof}
By \thm{KST}, a graph with $\Omega(n^{3/2})$ edges must contain $C_4 = K_{2,2}$, so after applying \thm{qcounting} and accepting graphs with $\Omega(n^{3/2})$ edges, we can assume that $m = O(n^{3/2})$.  Then, for various values of $q$ in a geometric progression, we search for a vertex $v$ of the input graph that has degree near $q$ and that belongs to a $4$-cycle.
It suffices to search for $v$ using Grover's algorithm instead of the more general \thm{qwalksearch}.  Let $t$ denote the number of vertices of the input graph with degree near $q$, and let $\ket{\psi}$ denote a uniform superposition over all such vertices, where for each vertex we store a list of its $O(q)$ neighbors.  Grover's algorithm starts from the state $\ket{\psi}$ and alternates between reflecting about $\ket{\psi}$ and about vertices that belong to a $4$-cycle, detecting whether such a vertex exists in $O(\sqrt{t})$ iterations.  By Grover's algorithm, we can prepare a uniform superposition over the vertices of degree near $q$ using $O(n/\sqrt t)$ queries; by \lem{searchall}, we can compute the neighborhood of a vertex with degree near $q$ in $O(\sqrt{nq})$ queries.  Thus we can prepare or reflect about $\ket{\psi}$ in $O(n/\sqrt t + \sqrt{nq})$ queries.  Given the neighbors of a vertex with degree near $q$, we can decide whether that vertex is part of a $4$-cycle by searching over all vertices in the graph for a vertex that is adjacent to 2 of its neighbors.  This can be done in $O(\sqrt{nq})$ queries.  Thus the query complexity of searching for a $4$-cycle is
\begin{align}
  O\left( \sqrt{t} \left[ \frac{n}{\sqrt t} + \sqrt{nq} \right] \right)
  &= O( n + \sqrt{nqt} ).
\end{align}
Since $qt = O(n^{3/2})$, we see that the number of amplitude amplification steps is $O(n^{5/4})$.  As in previous algorithms, iterating over values of $q$ and error reduction of subroutines only introduces logarithmic overhead, so the total query complexity is $\tilde O(n^{5/4})$ as claimed.
\end{proof}

\section{Conclusions and open problems}
\label{sec:open}

In this paper, we have studied the quantum query complexity of minor-closed graph properties.  The difficulty of such problems depends crucially on whether the property can also be characterized by a finite set of forbidden subgraphs.  Minor-closed properties that are not characterized by forbidden subgraphs have matching upper and lower bounds of $\Theta(n^{3/2})$ (\cor{tightmc}), whereas all minor-closed properties that can be expressed in terms of forbidden subgraphs can be solved strictly faster, in $O(n^\alpha)$ queries for some $\alpha<3/2$ that may depend on the property (\cor{sparsefsp}).

Since the best known lower bound for the latter class of problems is the simple $\Omega(n)$ lower bound from \thm{subgraphlb}, an obvious open question is to give improved upper or lower bounds for subgraph-finding problems.  While the standard quantum adversary method cannot prove a better lower bound, it might be possible to apply the negative weights adversary method \cite{HLS07} or the polynomial method~\cite{BBC+01}.  Note that sparsity makes forbidden subgraph properties potentially more difficult to lower bound; this is precisely the feature we took advantage of in the algorithms of \sec{algo}.  Proving a superlinear lower bound for any subgraph-finding problem---even one for which dense graphs might not contain the subgraph, such as in the case of triangles---remains a major challenge.  On the algorithmic side, note that while our algorithms take advantage of sparsity, minor-closed families of graphs have other special properties, such as bounded degeneracy, which might also be exploited.

The algorithms described in \sec{algo} have several features not shared by previous quantum walk search algorithms for graph properties: queries are required even to identify which vertices of the input graph to search over (namely, to find vertices of a certain degree), and the performance of the walk is optimized by making different transitions at different rates.  We hope these techniques might prove useful in other quantum algorithms.

We observed that \thm{vcbasic} can be applied to find induced subgraphs (just as with the algorithms of \cite{MSS07}).  However, the improvements described in \thm{vcdangling} and \thm{vcpseudosparse} do not apply to induced subgraphs, and in general it could be easier or more difficult to decide whether a given graph is present as an induced subgraph rather than a (not necessarily induced) subgraph.  It might be fruitful to explore induced subgraph finding more generally.

It might also be interesting to focus on finding natural families of subgraphs such as paths.  Recall that we showed the quantum query complexity of this problem is $\tilde \Theta(n)$ for lengths up to $4$ and $\tilde O(n^{7/6})$ for lengths of $5$, $6$, and $7$, with nontrivial algorithms for longer paths as well (\cor{pathfinding} and \thm{paths7-10}, as summarized in \tab{paths}).  The case of paths of length $5$, the smallest case for which our algorithm is not known to be optimal, appears to be a natural target for future work.


\section*{Acknowledgments}

We thank Fr\'ed\'eric Magniez for suggesting that a strategy similar to the lower bound for connectivity in \cite{DHHM06} could be applied to $C_3$-minor finding.  We also thank Nathann Cohen for pointing us to Ref.~\cite{RS90} (via the website \href{http://www.mathoverflow.net}{www.mathoverflow.net}). R.\ K. thanks Jim Geelen for helpful discussions about graph minors, and Vinayak Pathak and Hrushikesh Tilak for interesting conversations about many aspects of this work.
This work was supported in part by MITACS, NSERC, QuantumWorks, and the US ARO/DTO.


\bibliographystyle{plain}
\bibliography{minor}

\end{document}